\theoremstyle{plain}
\newtheorem{theorem}{Theorem}[section]
\newtheorem{lemma}[theorem]{Lemma}
\newtheorem{corollary}[theorem]{Corollary}
\newtheorem{proposition}[theorem]{Proposition}
\theoremstyle{definition}
\newtheorem{remark}[theorem]{Remark}
\newtheorem{example}[theorem]{Example}
\newcommand{\exampleqed}{\ensuremath{\ocircle}\par}
\newcommand{\remarkqed}{\ensuremath{\Diamond}\par}
\newcommand{\ZZ}{\mathbb{Z}}			%
\newcommand{\NN}{\mathbb{N}}			%
\newcommand{\RR}{\mathbb{R}}			%
\newcommand{\GG}{\mathbb{G}}			%
\newcommand{\HH}{\mathbb{H}}			%
\newcommand{\XX}{\mathbb{X}}			%
\newcommand{\EE}{\mathbb{E}}			%
\newcommand{\symb}[1]{\mathtt{#1}}		%
\newcommand{\isdef}{\triangleq}			%
\newcommand{\abs}[1]{%
	\left\lvert#1\right\rvert%
}
\newcommand{\bigabs}[1]{%
	\big\lvert#1\big\rvert%
}
\newcommand{\Bigabs}[1]{%
	\Big\lvert#1\Big\rvert%
}
\newcommand{\biggabs}[1]{%
	\bigg\lvert#1\bigg\rvert%
}
\newcommand{\norm}[1]{%
	\left\lVert#1\right\rVert%
}
\newcommand{\Biggnorm}[1]{%
	\Bigg\lVert#1\Bigg\rVert%
}
\renewcommand{\complement}{%
	\mathsf{c}%
}
\newcommand{\density}{%
	\underline{d}%
}
\newcommand{\dd}{\mathrm{d}}			%
\newcommand{\ee}{\mathrm{e}}			%
\newcommand{\smallo}{o}					%
\newcommand{\xPr}{\operatorname{\mathbb{P}}}		%
\newcommand{\indicator}[1]{\mathbbm{1}_{#1}}			%
\newcommand{\xspace}[1]{\mathcal{#1}}	%
\newcommand{\field}[1]{\mathscr{#1}}	%
\newcommand{\family}[1]{\mathcal{#1}}	%
\newcommand*\circled[1]{%
	\mathchoice
		{\tikz[baseline=-3pt]{%
			\node[shape=ellipse,draw,inner sep=0.5pt] (char) {\normalsize\makebox[1ex]{$#1$}};%
		}}%
		{\tikz[baseline=-3pt]{%
			\node[shape=ellipse,draw,inner sep=0.5pt] (char) {\normalsize\makebox[1ex]{$#1$}};%
		}}%
		{\tikz[baseline=-3pt]{%
			\node[shape=ellipse,draw,inner sep=0.5pt] (char) {\tiny\makebox[1ex]{$#1$}};%
		}}%
		{\tikz[baseline=-3pt]{%
			\node[shape=ellipse,draw,inner sep=0.5pt] (char) {\tiny\makebox[1ex]{$#1$}};%
		}}%
}
\newcommand*\hexed[1]{%
	\mathchoice
		{\tikz[baseline=-3pt]{%
			\node[regular polygon,regular polygon sides=6,draw,inner sep=0.5pt] (char) {\normalsize\makebox[1ex]{$#1$}};%
		}}%
		{\tikz[baseline=-3pt]{%
			\node[regular polygon,regular polygon sides=6,draw,inner sep=0.5pt] (char) {\normalsize\makebox[1ex]{$#1$}};%
		}}%
		{\tikz[baseline=-3pt]{%
			\node[regular polygon,regular polygon sides=6,draw,inner sep=0.7pt] (char) {\tiny\makebox[1ex]{$#1$}};%
		}}%
		{\tikz[baseline=-3pt]{%
			\node[regular polygon,regular polygon sides=6,draw,inner sep=0.5pt] (char) {\tiny\makebox[1ex]{$#1$}};%
		}}%
}
\title{%
	Equivalence of relative Gibbs and relative equilibrium measures for actions of countable amenable groups
	\footnotetext{Last update:~\today}
}
\author{%
	Sebasti\'an Barbieri%
	\and
	Ricardo G\'omez%
	\and
	Brian Marcus
	\and
	Siamak Taati
}
\date{}
\begin{document}

\maketitle

\begin{abstract}

We formulate and prove a very general \emph{relative} version of the Dobrushin--Lanford--Ruelle theorem
which gives conditions on constraints of configuration spaces over a finite alphabet such that for every
absolutely summable relative interaction, every translation-invariant relative Gibbs measure is a relative
equilibrium measure and vice versa. Neither implication is true without some assumption on the space
of configurations. We note that the usual finite type condition can be relaxed to a much more general
class of constraints. By ``relative'' we mean that both the interaction and the set of allowed configurations
are determined by a random environment. The result includes many special cases that are well known.
We give several applications including
\begin{enumerate*}[label={\arabic*)}]
	\item Gibbsian properties of measures that maximize pressure among all those
		that project to a given measure via	a topological factor map from one symbolic system to another;
	\item Gibbsian properties of equilibrium measures for group shifts defined on arbitrary countable amenable groups;
	\item A Gibbsian characterization of equilibrium measures in terms of equilibrium condition on lattice slices rather than on finite sets;
    \item A relative extension of a theorem of Meyerovitch, who proved a version of the Lanford--Ruelle theorem which shows that every equilibrium measure on an arbitrary subshift satisfies a Gibbsian
		property on interchangeable patterns.
\end{enumerate*}

	\medskip

	\noindent
	\emph{Keywords:} Equilibrium measures, Gibbs measures, relative systems, disordered systems,
		random environments, thermodynamic formalism.
	
	\smallskip
	
	\noindent
	\emph{MSC2010:}
		37B10,	%
		82B44,	%
		37D35,	%
		82B20, 	%
		60K35.	%
	
	\renewcommand{\contentsname}{\vspace{-1.5em}}
	{\footnotesize\tableofcontents}
\end{abstract}

\section{Introduction}
\label{sec:intro}

The starting point of Gibbs's approach to equilibrium statistical physics is the postulate
that the macroscopic state
of a system at thermal equilibrium is appropriately described by a probability distribution
that minimizes the free energy. An equivalent formulation is obtained by maximizing the \emph{pressure},
that is, the difference between the entropy and a constant times the expected energy.
In a lattice model in which the microscopic states are configurations of symbols on
an infinite lattice (e.g., the Ising model), there are two interpretations of this hypothesis:
\begin{enumerate}[label={\rm (\roman*)}]
	\item \emph{Local} maximization: the conditional pressure for every finite region of
		the lattice is maximized, so that every finite region is in equilibrium with its surrounding.
		This leads to the concept of \emph{Gibbs measures}.
	\item \emph{Global} maximization: the average pressure per site
		(i.e., Kolmogorov--Sinai entropy minus expected energy per site) is maximized.
		The maximizing measures in this interpretation are referred to as the \emph{equilibrium measures}.
\end{enumerate}
The celebrated theorem of Dobrushin~\cite{Dob68}, Lanford and Ruelle~\cite{LanRue69} says that under broad
conditions, equilibrium measures and shift-invariant Gibbs measures coincide (see~\cite{Rue04}).

\begin{theorem}%
\label{thm:DLR}
	Let $\Sigma$ be a finite set of symbols.
	Let $X\subseteq\Sigma^{\ZZ^d}$ be a $d$-dimensional subshift,
	$\Phi$ an absolutely summable interaction on~$X$,
	and $f_\Phi$ an associated energy observable.
	\begin{enumerate}[label={\rm (\alph*)}] %
		\item \textup{(Dobrushin theorem)} \label{thm:DLR:D} \\
			Assume that $X$ is D-mixing.
			Then, every shift-invariant Gibbs measure for $\Phi$ is an equilibrium measure for $f_\Phi$.
		\item \textup{(Lanford--Ruelle theorem)} \label{thm:DLR:LR} \\
			Assume that $X$ is a subshift of finite type (SFT).
			Then, every equilibrium measure for $f_\Phi$ is a Gibbs measure for~$\Phi$.
	\end{enumerate}
\end{theorem}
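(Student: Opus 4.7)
The plan is to handle the two directions by parallel but distinct machinery, both resting on specific relative entropy and the variational principle for pressure.

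For part (a), fix a shift-invariant Gibbs measure $\mu$ and an arbitrary shift-invariant probability measure $\nu$ with $\nu(X)=1$. On a finite box $\Lambda \subseteq \ZZ^d$, the DLR equations yield that $\mu$'s conditional distribution on $\Lambda$ given the exterior configuration $\xi$ is proportional to $\exp\bigl(-H_\Lambda^\Phi(\cdot,\xi)\bigr)$, where $H_\Lambda^\Phi$ collects interaction terms touching $\Lambda$. A standard computation then gives
$$
H_\Lambda(\nu \,\|\, \mu) \;=\; -H_\Lambda(\nu) \;+\; \int \bigl(H_\Lambda^\Phi + \log Z_\Lambda^\Phi\bigr)\,\dd\nu,
$$
with $H_\Lambda(\nu \,\|\, \mu)\ge 0$ the finite-volume relative entropy. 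Dividing by $|\Lambda|$, taking $\Lambda \to \ZZ^d$ along a van Hove sequence, and using D-mixing to identify $\lim |\Lambda|^{-1} \log Z_\Lambda^\Phi$ with the topological pressure $P(\Phi)$ (sufficiently uniformly in the boundary condition), one obtains $0 \le -h(\nu) + \nu(f_\Phi) + P(\Phi)$. Setting $\nu = \mu$ gives $h(\mu) - \mu(f_\Phi) = P(\Phi)$, so $\mu$ attains the variational principle and is equilibrium.

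For part (b), I would argue by contradiction. Suppose $\mu$ is equilibrium but not Gibbs. Then there is a finite box $\Lambda$ and a $\mu$-positive set of exterior configurations $\xi$ on which $\mu$'s conditional distribution on $\Lambda$ differs from the Gibbs specification $\gamma_\Lambda(\cdot\mid\xi)$. Because $X$ is an SFT, the set of $\Lambda$-patterns extending any given $\xi$ to a point of $X$ depends only on $\xi$ restricted to a bounded neighborhood of $\Lambda$, so $\gamma_\Lambda(\cdot\mid\xi)$ is a probability on valid extensions. Define a (generally non-invariant) measure $\tilde{\mu}$ on $X$ by keeping the exterior law of $\mu$ but replacing its conditional on $\Lambda$ by $\gamma_\Lambda(\cdot\mid\xi)$. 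The local Gibbs variational inequality shows $\tilde{\mu}$ strictly improves $H_\Lambda(\cdot) - \cdot(H_\Lambda^\Phi)$, with a quantitative gain equal to the $\mu$-average of $H\bigl(\mu_\Lambda(\cdot\mid\xi)\,\|\,\gamma_\Lambda(\cdot\mid\xi)\bigr)$, which is strictly positive by assumption.

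To finish, I shift-average $\tilde{\mu}$ over a large box $\Lambda_N$, extract a weak limit $\tilde{\mu}_\infty$ along a van Hove sequence, and verify that the per-site gain survives while the boundary energy introduced by the local substitution is of order $|\partial \Lambda_N|$, hence negligible after dividing by $|\Lambda_N|$. This yields $h(\tilde{\mu}_\infty) - \tilde{\mu}_\infty(f_\Phi) > h(\mu) - \mu(f_\Phi)$, contradicting equilibrium. The main obstacle I anticipate is in part (b): the substitution must be carried out measurably and remain inside $X$ (which is precisely where the SFT hypothesis enters), and the strict per-site gain must be shown to survive the van Hove and weak limits without being absorbed into boundary losses. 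In part (a) the comparable delicate point is identifying the topological pressure with the limiting log-partition function under only the D-mixing hypothesis, rather than under an SFT or stronger mixing assumption.
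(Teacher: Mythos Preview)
Your approach to part~(a) is the classical relative-entropy route (as in Georgii or Ruelle) and is genuinely different from the paper's. The paper never invokes topological pressure or a variational principle; instead it fixes a competing invariant measure $\mu'$ and directly shows $\Psi_{\mu'}(F_n) \le \Psi_\mu(F_n) + o(|F_n|)$ by using D-mixing to build, for each boundary condition $x$, an intermediate measure that matches $\mu'$ on $F_n$ but agrees with $x$ outside the mixing set, then applying the finite-volume variational principle pointwise and integrating against $\mu$. Your route can be made to work, but the displayed identity for $H_\Lambda(\nu\,\|\,\mu)$ is not exact (the $\Lambda$-marginal of a Gibbs measure is not Boltzmann; only the conditional is), and controlling those boundary corrections together with the convergence of $|\Lambda|^{-1}\log Z_\Lambda^\Phi(\xi)$ uniformly in $\xi$ under D-mixing alone is precisely the delicate point you flag. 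The paper's direct comparison sidesteps this by never naming the limit.

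Your argument for part~(b) has a genuine gap. Replacing $\mu$'s conditional on a \emph{single} box $\Lambda$ by the Gibbs kernel yields a measure $\tilde\mu$ whose pressure on any large F{\o}lner set $F_n\supseteq\Lambda$ exceeds that of $\mu$ by only a bounded amount---of the order of the fixed local relative entropy---not by anything of order $|F_n|$. After dividing by $|F_n|$ (equivalently, after shift-averaging over $\Lambda_N$ and passing to a weak limit), this $O(1)$ gain vanishes; it is not ``absorbed into boundary losses,'' it simply has the wrong order of magnitude. The paper's remedy is the key missing idea: apply the Gibbs kernel not once but at a \emph{positive-density} family of disjoint translates of $A$, packed via a Delone set, so that the accumulated gain on $F_n$ is of order $\varepsilon\cdot\underline{d}(D)\cdot|F_n|$ and survives the per-site limit. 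A separate ``local enhancement'' lemma is then needed to guarantee that the gain at each translate is not spoiled by the earlier substitutions at the other translates; this is where the SFT (more generally, weak TMP) hypothesis actually enters---not to keep the substitution inside $X$ (the Gibbs kernel does that automatically), but to give the kernels the Feller-type continuity that makes the local gain stable under perturbations of the measure.
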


\noindent
Here, $X$ is the space of allowed configurations on the $d$-dimensional lattice.
Neither direction is true in general and so some kind of restrictions,
such as D-mixing in part~\ref{thm:DLR:D} and SFT in part~\ref{thm:DLR:LR}, on $X$, are required.
Terminology used in the statement of Theorem~\ref{thm:DLR}, as well
as other terminology used in this section, will be given in Section~\S\ref{sec:prelim}.

We generalize this theorem in several directions.
First, we allow the lattice to be any countable amenable group.
Second, we allow the presence of a random environment that imposes constraints
on the allowed configurations and affects the energy,
and prove the equivalence of local and global maximization \emph{relative to}
this environment.  Third, we relax the ``finite type'' hypothesis in the Lanford--Ruelle direction
to the much weaker
topological Markov property, and discuss the relationship between this and related properties.
We also give several applications.

To be specific, let $\GG$ be a countable amenable group (e.g., $\GG=\ZZ^d$ with $d=1,2,\ldots$),
$\Sigma$ a finite alphabet,
and $\Theta$ a measurable space on which $\GG$ acts via measurable maps.
The group $\GG$ also acts on $\Sigma^\GG$ by translations. For each $\theta\in\Theta$, let $X_\theta\subseteq\Sigma^\GG$ be a non-empty closed set
such that $\Omega\isdef\{(\theta,x): \text{$\theta\in\Theta$ and $x\in X_\theta$}\}$ is measurable
and $X_{g\theta}=\{gx: x\in X_\theta\}$ for every $\theta\in\Theta$ and $g\in\GG$.
We think of $x\in\Sigma^\GG$ as a microscopic configuration of a physical system
and $\theta\in\Theta$ as the external environment.
The fact that $X_\theta$ is not required to be the entire $\Sigma^\GG$
indicates the possibility of ``hard'' (or combinatorial) constraints that the environment
can impose on the system.
We refer to $\Omega$ as a \emph{relative system}.

With suitably formulated relative versions of the hypotheses in
Theorem~\ref{thm:DLR}, our generalization is as follows.

\begin{theorem}%
\label{thm:DLR:relative} Let the environment $\Theta$ and relative system $\Omega$ be as formulated above.
	Let $\nu$ be a $\GG$-invariant probability measure on $\Theta$.
	Let $\Phi$ be an absolutely summable relative interaction on $\Omega$
	and $f_\Phi$ an associated energy observable.
	\begin{enumerate}[label={\rm (\alph*)},ref={(\alph*)}] %
		\item \textup{(Relative Dobrushin theorem)} \label{thm:DLR:D:relative} \\
			Assume that $\Omega$ is D-mixing relative to $\nu$.
			Then, every $\GG$-invariant relative Gibbs measure for $\Phi$ with marginal $\nu$
			is an equilibrium measure for $f_\Phi$ relative to $\nu$.
		\item \textup{(Relative Lanford--Ruelle theorem)} \label{thm:DLR:LR:relative} \\
			Assume that $\Theta$ is a standard Borel space.
			Assume further that $\Omega$ has the topological Markov property relative to $\nu$.
			Then, every equilibrium measure for $f_\Phi$ relative to $\nu$
			is a relative Gibbs measure for $\Phi$ with marginal $\nu$.
	\end{enumerate}
\end{theorem}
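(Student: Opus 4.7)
The plan is to handle parts~\ref{thm:DLR:D:relative} and~\ref{thm:DLR:LR:relative} separately but along parallel lines, reducing each to a variational comparison between the relative entropy rate $h(\mu\mid\nu)$ minus $\int f_\Phi \,\dd\mu$ and the corresponding relative pressure. In both cases I would first fix a F\o{}lner sequence $(F_n)$ in $\GG$ and obtain the relative measure-theoretic entropy as the limit of normalised conditional entropies $\tfrac{1}{\abs{F_n}} H_\mu(x_{F_n} \mid \theta, x_{F_n^\complement})$, adapting the Kieffer--Ornstein--Weiss entropy theory for actions of countable amenable groups to the relative setting. For the energy term I would use a complementary F\o{}lner-averaged decomposition of $\Phi$: the average per site of the interaction terms whose support meets $F_n$ converges in $L^1(\mu)$ to $\int f_\Phi \,\dd\mu$ by the mean ergodic theorem applied to the disintegration of $\mu$ over $\nu$.

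For part~\ref{thm:DLR:D:relative}, starting from a $\GG$-invariant relative Gibbs measure $\mu$ with marginal $\nu$, I would substitute the Gibbs specification into $H_\mu(x_{F_n} \mid \theta, x_{F_n^\complement})$: since the conditional $\mu(\,\cdot\mid\theta, x_{F_n^\complement})$ equals a Boltzmann weight divided by a partition function $Z_{F_n}$, this conditional entropy equals $\int U^{\Phi}_{F_n} \,\dd\mu + \int \log Z_{F_n} \,\dd\mu$. Dividing by $\abs{F_n}$, the first term tends to $\int f_\Phi \,\dd\mu$, while the second is an upper bound for what any competitor measure $\mu'$ with marginal $\nu$ can achieve through the Gibbs variational inequality. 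The role of the D-mixing hypothesis is precisely to guarantee that $\tfrac{1}{\abs{F_n}} \log Z_{F_n}(\theta, x_{F_n^\complement})$ becomes asymptotically insensitive to the boundary configuration, so that its $\mu$-average coincides in the limit with the relative pressure and the inequality is saturated on $\mu$.

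For part~\ref{thm:DLR:LR:relative}, I would argue by contradiction from an equilibrium measure $\mu$: if on some finite $F\subseteq\GG$ the disintegrated conditional $\mu(\,\cdot\mid\theta, x_{F^\complement})$ differed from the Gibbs specification $\gamma_F^\Phi(\,\cdot\mid\theta, x_{F^\complement})$ on a set of positive measure, I would form a competitor $\tilde\mu$ by replacing $\mu$'s conditional on $F$ by a strict mixture toward $\gamma_F^\Phi$. Strict concavity of $-x\log x$ then gives a strict improvement in $H_{\tilde\mu}(x_F\mid\theta, x_{F^\complement}) - \int U_F^\Phi \,\dd\tilde\mu$ on the set of disagreement, and averaging this local gain along a F\o{}lner sequence of translates of $F$, together with a symmetrisation to restore $\GG$-invariance, upgrades it to a strict increase of the pressure functional, contradicting the equilibrium hypothesis. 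The standard Borel hypothesis on $\Theta$ enters through the measurable disintegration $\mu = \int \mu_\theta \,\dd\nu(\theta)$ and through measurable selection of the Gibbs kernels $\gamma_F^\Phi$ in the environmental coordinate.

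The main obstacle I expect is in part~\ref{thm:DLR:LR:relative}: one must check that the weak topological Markov property, although strictly weaker than the SFT hypothesis of the classical theorem, is still strong enough to make the substitution step legal --- every pattern on $F$ charged by $\gamma_F^\Phi(\,\cdot\mid\theta, x_{F^\complement})$ must glue to $x_{F^\complement}$ to produce a configuration actually in $X_\theta$, not merely in some local enlargement of it --- and this must be done measurably in $\theta$. Controlling the interplay between the combinatorial constraint, the random environment, and the F\o{}lner boundary is where I expect most of the technical work to live, and it is presumably what the paper's formulation of the weak topological Markov property relative to $\nu$ is designed to supply.
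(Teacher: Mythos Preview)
For part~\ref{thm:DLR:LR:relative}, you have the right architecture (contrapositive, local improvement, spread over translates, symmetrise) but misidentify where the hypotheses enter. The Gibbs kernel $K_A((\theta,x),\cdot)$ by \emph{definition} charges only patterns $u$ with $u\lor x_{A^\complement}\in X_\theta$, so the ``substitution'' is always legal; weak TMP is not needed for that. Its actual role (Proposition~\ref{prop:kernels:relative-Feller}) is the relative Feller property: each $K_A(\cdot,[p])$ lies in $C_\Theta(\Omega)$. This continuity is what drives the paper's key Lemma~\ref{lem:local-enhancement}, which says that the strict gain $\Psi_{\mu K_A}(A\,|\,A^\complement)-\Psi_\mu(A\,|\,A^\complement)>0$ persists, as a gain $\Psi_{\mu'K_A}(A\,|\,B\setminus A)-\Psi_\mu(A\,|\,B\setminus A)\geq\varepsilon$, for \emph{any} $\mu'$ sharing $\mu$'s marginal on a fixed finite window $B_0$. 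Without this robustness you cannot control the interference between successive applications of $K_{k_iA}$ at different translates --- each sees a measure already perturbed by the previous ones, and your sketch does not address why the gains should not cancel. The paper packs the translates along a Delone set (Lemma~\ref{lem:delone}) whose packing shape contains $B_0$, so that at step $i$ the measure $\mu^{(i-1)}$ still agrees with $\mu$ on $k_iB_0$ and the gains accumulate to a positive-density improvement. Also, the standard Borel hypothesis is used not for disintegration but for compactness of $\xspace{P}_\nu(\Omega)$ (Appendix~\ref{apx:measures:topology}), needed so that the F\o{}lner averages of the improved but non-invariant measure have an accumulation point.

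For part~\ref{thm:DLR:D:relative}, your opening claim that $h_\mu(\Omega\,|\,\Theta)=\lim\abs{F_n}^{-1}H_\mu(\xi^{F_n}\,|\,\field{F}_\Theta\lor\xi^{F_n^\complement})$ is not a general fact; this inner-entropy identity holds for Gibbs measures under D-mixing (Remark~\ref{rem:inner-entropy}) and is a \emph{byproduct} of the proof rather than an input. The paper argues directly: given the Gibbs measure $\mu$ and a competitor $\mu'$, for $\mu$-almost every $(\theta,x)$ it uses D-mixing to graft the $\mu'(\cdot\,|\,\field{F}_\Theta)$-distribution on $F_n$ onto the boundary $x_{(F_n^\theta)^\complement}$ by modifying only the annulus $F_n^\theta\setminus F_n$, after which the finitary variational principle bounds the result by $\log Z_{F_n^\theta|(F_n^\theta)^\complement}(\theta,x)$, which is exactly what $\mu$ achieves. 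Thus D-mixing is used to make $\mu'$-patterns compatible with $\mu$-boundaries, not merely to show that the partition function is boundary-insensitive.
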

\noindent The concepts of relative Gibbs measure, relative equilibrium measure,
relative interaction, absolute summability, relative D-mixing
and relative topological Markov property
are natural analogues of the corresponding non-relative concepts in the relative setting.
The proof of Theorem~\ref{thm:DLR:relative} is given in Section~\S\ref{sec:proof}.
In the non-relative setting, that is, when $\Theta\isdef\{\theta\}$ and $\nu\isdef\delta_{\theta}$,
we recover a generalization of Theorem~\ref{thm:DLR}.

Many aspects of the above generalization are not new.
\begin{enumerate}[label=$\bullet$]
	\item Sepp\"al\"ainen~\cite[Section \S8]{Sep95} proved a relative version of
		the Dobrushin--Lanford--Ruelle theorem on the hyper-cubic lattice $\ZZ^d$.
		In his result, the alphabet is allowed to also be a complete separable metric space.
		On the other hand, the environment space in his setting
		is required to be a complete separable metric space, with $\ZZ^d$ acting by homeomorphisms,
		and the interaction is assumed to be continuous as a function of~$\theta$.
		Moreover, this setting does not allow hard constraints, i.e.,
		$X_\theta=\Sigma^{\ZZ^d}$ for every~$\theta$ (equivalently, $\Omega=\Theta\times\Sigma^{\ZZ^d}$).
		See also the paper by Zegarlinski~\cite{Zeg91}.
	\item Moulin~Ollagnier and Pinchon~\cite{MouPin81} (see Moulin~Ollagnier~\cite[Thm.~7.2.5]{Mou85})
		and Tempelman~\cite[Section \S8]{Tem84})
		have extended the (non-relative)
		Dobrushin--Lanford--Ruelle theorem to countable amenable groups, but again
		these results do not allow hard constraints. On the other hand,
		Templeman allows the alphabet to be an arbitrary $\sigma$-finite measure space.
	\item In the case where the acting group is $\ZZ$, very strong results are known even
		in the relative setting.
		In fact, in this case, if the variations of the energy observable decay rapidly enough,
		there is a unique equilibrium measure which coincides with a unique Gibbs measure,
		and this measure can be described
		explicitly as the unique fixed point of a Ruelle--Perron--Frobenius operator;
		see the survey by Kifer and Liu~\cite{KifLiu06} (Theorem~4.1.1 and the following paragraph)
		and Kifer~\cite{Kif08}.   In this setting, these systems are known as random dynamical systems.
	\item In the case where the acting group is $\ZZ^d$, the framework of a relative system,
		much as we have formulated it above, is given in  Kifer~\cite{Kif95}. In this work the assumptions on $\Omega$ are in some ways more general and
		in some ways less general than ours.
\end{enumerate}

For a given continuous observable $f$, an equilibrium measure achieves the supremum,
over all $\GG$-invariant measures $\mu$, of the difference of the entropy of $\mu$ and
the expected value of $f$ with respect to~$\mu$.
In the standard setting of a continuous $\ZZ^d$-action on a compact metric space,
this supremum is characterized as an intrinsically defined notion
of topological pressure for $f$.  %
Similar variational principles have been established in %
the contexts of the above-mentioned results (see e.g.~\cite{MouPin82,LedWal77,Kif01}).
In our paper, we do not consider such variational principles
(see however Prop.~\ref{prop:variational-principle} for a special case);
rather we focus on conditions which guarantee that every Gibbs measure is an equilibrium
measure and that every equilibrium measure is a Gibbs measure. Also, the
papers~\cite{Sep95, Kif95} include, and are motivated by, large deviations principles,
which is another topic that we do not consider.

Sepp\"al\"ainen~\cite{Sep95} gave several examples to which his result applied.
This includes the Ising model with random external field and the Edwards--Anderson
spin glass model in which the coupling parameters for neighboring spins are i.i.d.\ random.
In these models, there are no hard constraints on the configurations. Below we give
two examples in which there are hard constraints. In both of them we assume that
$\GG$ is finitely generated, and we consider a fixed finite symmetric generating set
$S$ with $S\not\ni 1_\GG$. We consider the Cayley graph of $\GG$ generated by
$S$ as a simple undirected graph with vertex set $\GG$ and edge set
$\EE\isdef\{\{a,b\}: a^{-1}b\in S\}$.

\begin{example}[Ising model on percolation clusters]\label{ex:ising_model} %
	Let $\Theta\isdef\{\symb{0},\symb{1}\}^\GG$, and let $\nu$ be a $\GG$-invariant
	measure on $\Theta$, for instance the Bernoulli measure with parameter $p\in(0,1)$.
	Let $\Sigma\isdef\{\symb{-1},\symb{0},\symb{+1}\}$, and
	for $\theta\in\Theta$, let $X_\theta$ be the set of configurations $x\in\Sigma^\GG$
	for which $x_k=\symb{0}$ if and only if $\theta_k=\symb{0}$.
	Let $h\in\RR$ and consider the relative interaction $\Phi$ defined by
	\useshortskip
	\begin{align}
		\Phi_{\{k\}}(\theta,x) &\isdef -h x_k \\
		\Phi_{\{i,j\}}(\theta,x) &\isdef -x_ix_j	\qquad\text{if $\{i,j\}\in\EE$,}
	\end{align}
	and $\Phi_A(\theta,x)\isdef 0$ whenever $A$ is neither a singleton
	nor an edge in $\GG$. So, the effect of the environment is simply to constrain the configurations.
	Observe that $\Phi(\theta,\cdot)$ is essentially the Ising interaction with
	external magnetic field $h$ on
	the subgraph induced by $\{k\in\GG: \theta_k=\symb{1}\}$.
	
	This system has been studied as a model of a binary alloy
	consisting of a ferromagnetic and a non-magnetic metal~\cite{GriLeb68,Led77,Geo81,HagSchSte00}.
	Each site is chosen at random to carry either a magnetic or a non-magnetic atom.
	The magnetic atoms interact with one another as in the Ising model,
	while the non-magnetic atoms do not interact.
	
	In Section~\S\ref{sec:hypoth} we will see that the relative system $\Omega$
	corresponding to this model satisfies the assumptions    %
	of both parts of Theorem~\ref{thm:DLR:relative} and thus
	the equilibrium measures for $f_\Phi$ relative to~$\nu$ coincide with the
	$\GG$-invariant relative Gibbs measures for $\Phi$ with marginal~$\nu$.
	
	Let us point out that this example can be rephrased so that
	there are no constraints on the configurations and thus fits in
	the setting of Sepp\"al\"ainen's result~\cite{Sep95}. 
	Let $\Theta\isdef\{\symb{0},\symb{1}\}^\EE$
	be the set of bond configurations rather than site configurations.
	Let $\nu$ be the measure induced on $\Theta$ by a Bernoulli measure on the sites
	by letting a bond be open if and only if both its endpoints are open.
	Then we allow an Ising spin ($\symb{\pm 1}$) at every site,
	except that the spins at closed sites will not interact with other
	spins, and therefore will be independent in both the relative
	equilibrium measures and the relative Gibbs measures.
	In this setting, the environment constrains the interaction
	but not the set of allowed configurations.
	\hfill\exampleqed
\end{example}

\begin{example}[Random colorings of random graphs]\label{ex:graphs}
	Let $\Theta\isdef 2^\EE$ denote the set of all subgraphs of $(\GG,\EE)$
	that have the same vertex set $\GG$.
	The group $\GG$ acts on a subgraph $\theta\in\Theta$
	by translation, that is, $g\theta\isdef\{\{ga,gb\}: \{a,b\}\in\theta\}$.
	Let $\Sigma$ be a finite set of colors.
	For a subgraph $\theta\in\Theta$,
	denote by $X_\theta$ the set of valid \emph{$\Sigma$-colorings} of~$\theta$,
	that is, the configurations $x\in\Sigma^\GG$ such that
	$x_a\neq x_b$ whenever $\{a,b\}\in\theta$. %
	Clearly $X_\theta$ is closed and we have $X_{g\theta}=g X_\theta$ for each $\theta$.
	Moreover, the set $\Omega\isdef\{(\theta,x): x\in X_\theta\}$ is measurable.
	
	A $\GG$-invariant measure $\nu$ on $\Theta$ may be viewed as
	the distribution of
	a stationary random subgraph $\pmb{\theta}$ of $(\GG,\EE)$.
	We assume that $\abs{\Sigma}>\abs{S}$ and so $X_{\pmb{\theta}}$ is almost surely non-empty.
	A \emph{max-entropic random coloring} of $\pmb{\theta}$ is a random configuration
	$\pmb{x}$ from $\Sigma^\GG$
	defined in the same probability space as $\pmb{\theta}$
	such that $\pmb{x}\in X_{\pmb{\theta}}$ almost surely
	and	the joint distribution $\mu$ of $(\pmb{\theta},\pmb{x})$
	has maximum possible relative entropy $h_\mu(\Omega\,|\,\Theta)$.
	A \emph{uniform-Gibbs coloring} is a random coloring of $\pmb{\theta}$
	such that for every finite set $A\subseteq\GG$, the conditional distribution
	$\xPr(\pmb{x}_A=\cdot\,|\,\pmb{\theta},\pmb{x}_{A^\complement})$
	is almost surely uniform among all patterns $u\in\Sigma^A$
	for which $u\lor\pmb{x}_{A^\complement}$ is a valid coloring of $\pmb{\theta}$.
	We shall see in Section~\S\ref{sec:hypoth} that the assumptions of the
	relative Dobrushin--Lanford--Ruelle theorem hold and so a stationary random coloring is
	max-entropic if and only if it is uniform-Gibbs. 	
	\hfill\exampleqed
\end{example}

\bigskip

Now we consider some applications.

\paragraph{Equilibrium measures relative to a topological factor.}

Following Ledrappier and Walters~\cite{LedWal77,Wal86}, a related notion of
an equilibrium measure relative to an invariant measure on a topological factor
has been studied, primarily in the context of one-dimensional symbolic dynamics.

Let $\eta:X\to Y$ be a topological factor map
from a one-dimensional SFT $X$ onto another subshift~$Y$.
Let $\nu$ be a fixed shift-invariant measure on $Y$.
Consider an invariant measure $\mu$ on $X$ that projects to $\nu$
and has maximal entropy within the fiber $\eta^{-1}(\nu)$.  %
In~\cite[Thm.~3.3]{AllQua13}, Allahbakhshi and Quas proved that $\mu$
has the following Gibbsian property:
for every finite set $A\Subset\ZZ$ and $\mu$-almost every $x\in X$,
the conditional distribution of the pattern on $A$ given $\eta(x)$ and $x_{A^\complement}$
is uniform among all patterns $u$ on $A$ that are consistent with $x_{A^\complement}$ and $\eta(x)$
(i.e., $u$ and $x_{A^\complement}$ form a configuration that is in $X$ and that maps to $\eta(x)$.)

As an immediate application of Theorem~\ref{thm:DLR:relative}\ref{thm:DLR:LR:relative},
the result of Allahbakhshi and Quas can be generalized in three directions.
First, the SFT condition on $X$ can be replaced by the more general topological Markov property.
Second, we can allow for actions of arbitrary countable amenable groups.
Third, we may include an absolutely summable interaction on $X$, and
obtain a similar Gibbsian property for measures that maximize pressure in the fiber.
The precise statement and further details are given in Section~\S\ref{sec:top_factor}.

\paragraph{Equilibrium measures on group shifts.}

Let $\GG$ be a countable group and $\HH$ a finite group.
The full shift $\HH^\GG$ is itself a group with respect to the pointwise operation
$(x\cdot y)_g \isdef x_g \cdot y_g$ (for $x,y\in \HH^{\GG}$ and $g \in \GG$).
A group shift is a closed
shift-invariant subset $\XX \subseteq \HH^{\GG}$ which is also a subgroup of~$\HH^{\GG}$.
Kitchens and Schmidt~\cite{KitSch88} showed that every group shift over $\GG\isdef\ZZ$
or $\GG\isdef\ZZ^2$ is an SFT.
More generally, any polycyclic-by-finite group has this property~\cite[Thms.~3.8 and~4.2]{Sch95}.
However, this does not hold in general.
For instance, if $\GG$ is a countable group that is not finitely generated and $\HH\isdef\ZZ/2\ZZ$,
then the subshift $\XX \isdef\{\symb{0}^{\GG}, \symb{1}^{\GG}\}$ is a group shift but not an SFT.
More generally, over any countable group $\GG$ that contains a non-finitely generated subgroup,
there are group shifts that are not SFTs~\cite{Sal18}.

Nevertheless, in Section~\S\ref{sec:groupshifts} we show that
every group shift over a countable group has the topological Markov property.
The extended version of the non-relative Lanford--Ruelle theorem
(Theorem~\ref{thm:DLR:relative}\ref{thm:DLR:LR:relative} with trivial environment)
thus gives the following result.

\begin{theorem}[Equilibrium on group shifts]
\label{thm:groupshifts}
	Let $\GG$ be a countable amenable group and $\HH$ a finite group,
	and let $\XX \subseteq \HH^{\GG}$ be a group shift.
	Let $\Phi$ be an absolutely summable interaction on $\XX$ with an associated observable $f_{\Phi}$.
	Then every equilibrium measure on $\XX$ for $f_{\Phi}$ is a Gibbs measure for $\Phi$.
\end{theorem}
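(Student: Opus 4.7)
The plan is the one announced just before the theorem: show that every group shift $\XX$ satisfies the weak topological Markov property, and then invoke the non-relative Lanford--Ruelle direction of Theorem~\ref{thm:DLR:relative}\ref{thm:DLR:LR:relative} with trivial environment $\Theta=\{\ast\}$, $\nu=\delta_{\ast}$.

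The verification of the weak topological Markov property would rest on the following structural fact about group shifts. Let $e$ denote the identity of $\HH$, write $e^{B}\in\HH^{B}$ for the constant pattern $e$ on $B\subseteq\GG$, and for a finite $A\Subset\GG$ define
\[
K_A \;\isdef\; \bigl\{\,v\in\HH^A \,:\, v\lor e^{A^\complement}\in\XX\,\bigr\}.
\]
Since $\XX$ is closed under the pointwise group operation and inversion, $K_A$ is a subgroup of $\HH^A$. The key observation is that for every $x\in\XX$, the set of patterns $u\in\HH^A$ with $u\lor x_{A^\complement}\in\XX$ is precisely the coset $x_A\cdot K_A$: indeed, $u\lor x_{A^\complement}\in\XX$ if and only if $(u\lor x_{A^\complement})\cdot x^{-1}=(u\cdot x_A^{-1})\lor e^{A^\complement}\in\XX$, which is in turn equivalent to $u\cdot x_A^{-1}\in K_A$.

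Hence the set of locally admissible modifications of any configuration on a finite window $A$ is determined by a single subgroup $K_A$ depending on $A$ alone, not on the configuration outside. This uniform coset structure of the follower sets is a very strong form of local exchangeability, from which the weak topological Markov property (as formalized in Section~\S\ref{sec:hypoth}) should follow essentially by direct translation.

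The main obstacle is therefore matching the uniform coset structure against the precise formulation of the weak topological Markov property used in the paper; the group-theoretic content is rigid enough that no substantive additional argument should be needed. Once the property is verified, the conclusion is immediate from the non-relative Lanford--Ruelle direction of Theorem~\ref{thm:DLR:relative}.
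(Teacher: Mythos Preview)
Your overall plan --- show weak TMP for group shifts, then apply Theorem~\ref{thm:DLR:relative}\ref{thm:DLR:LR:relative} with trivial environment --- matches the paper's exactly, and your coset observation is correct and is essentially the paper's starting point as well. However, there is a genuine gap in the passage from that observation to weak TMP.

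Your computation shows that for each $x\in\XX$ the set of patterns $u\in\HH^A$ with $u\lor x_{A^\complement}\in\XX$ is the coset $x_A\cdot K_A$. The subgroup $K_A$ is indeed independent of $x$, but the \emph{coset} is not: it depends on which coset of $K_A$ the pattern $x_A$ represents. Weak TMP asks for a finite $B\supseteq A$ such that whenever $x,x'\in\XX$ agree on $B\setminus A$, one has $x_A\lor x'_{A^\complement}\in\XX$; in your notation this means $(x'_A)^{-1}x_A\in K_A$. Equivalently, setting $z\isdef (x')^{-1}x\in\XX$, one needs: every $z\in\XX$ with $z_{B\setminus A}=e^{B\setminus A}$ satisfies $z_A\lor e^{A^\complement}\in\XX$. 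That such a \emph{finite} $B$ exists is not a ``direct translation'' of the coset structure and is precisely what must be proved; for instance, in the group shift $\XX=\{\symb{0}^\GG,\symb{1}^\GG\}\subseteq(\ZZ/2\ZZ)^\GG$ with $A=\{1_\GG\}$ one has $K_A=\{\symb{0}\}$, so every admissible set is a singleton --- but the coset formula alone does not tell you which finite annulus determines \emph{which} singleton.

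The missing step is a stabilization argument. The paper (Proposition~\ref{prop:groupshifts_are_wtmp}) considers, for an exhaustion $B_0\subseteq B_1\subseteq\cdots$ of $\GG\setminus A$ by finite sets, the subgroups $L_{A|B_n}\isdef\{p\in L_A(\XX): p\lor e^{B_n}\in L_{A\cup B_n}(\XX)\}$. These form a descending chain of subgroups of the finite group $L_A(\XX)\subseteq\HH^A$, hence stabilize at some finite stage $N$, and then $A\cup B_N$ is a memory set for $A$. Your $K_A$ is precisely the intersection $\bigcap_n L_{A|B_n}$, and the content of the argument is that this intersection is already attained at a finite stage. Once you insert this (short) step, your proof coincides with the paper's.
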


Note that the special case of the above theorem with $\GG\isdef\ZZ^d$
follows from the classical Lanford--Ruelle theorem (Theorem~\ref{thm:DLR}\ref{thm:DLR:LR})
and the fact that every group shift on $\ZZ^d$ is an SFT.
The general case requires not only the extension to countable amenable groups
but also the relaxation of the SFT condition to the topological Markov property.

In Section~\S\ref{sec:groupshifts}, we use Theorem~\ref{thm:groupshifts}
to give a sufficient condition for the Haar measure to be the unique measure
of maximal entropy on a group shift.

\paragraph{Relative equilibrium measures on lattice slices.}
Our original motivation to develop a relative Dobrushin--Lanford--Ruelle theorem
was to characterize equilibrium measures on two-dimensional subshifts
in terms of equilibrium conditions on finite-height horizontal strips. 

More specifically, let $Y\subseteq\Sigma^{\ZZ^2}$ be a two-dimensional subshift.
Given a positive integer $N$, we can view $Y$ as a relative system $\Omega_N$
with respect to horizontal shift,
by thinking of each $y\in Y$ as a configuration $x\isdef y_{\ZZ\times[0,N-1]}$
on the horizontal strip $\ZZ\times[0,N-1]$
together with the configuration $\theta\isdef y_{\ZZ\times[0,N-1]^\complement}$
on the complement of the strip as the environment.
In analogy with the Lanford--Ruelle theorem, one may expect that
every equilibrium measure on~$Y$ (with respect to $\ZZ^2$-shift)
is a relative equilibrium measure on $\Omega_N$ (with respect to horizontal shift).
Conversely, analogy with the Dobrushin theorem suggests that
if a $\ZZ^2$-invariant measure $\mu$ is a relative equilibrium measure on $\Omega_N$
(with respect to horizontal shift)
for each positive $N$, then $\mu$ must be an equilibrium measure on $Y$
(with respect to $\ZZ^2$-shift).

We now state a version of this characterization.
Let $\Pi_N$ denote the projection $y\mapsto y_{\ZZ\times[0,N-1]^\complement}$
on the complement of the strip $\ZZ\times[0,N-1]$.
We assume that $Y$ satisfies topological strong spatial mixing (TSSM), defined in
Section~\S\ref{sec:hypoth}, which implies, in this setting, hypotheses of
both parts of the relative Dobrushin--Lanford--Ruelle theorem. 
Examples of subshifts with TSSM include the hard-core subshift
and the subshift of $5$-colorings on $\ZZ^2$ (see~\cite{Bri17}).

\begin{theorem}[Equilibrium vs.\ relative equilibrium on strips]
\label{thm:Z2_slices}
	Let $Y$ be a $\ZZ^2$-subshift that satisfies TSSM.
	Let $\Phi$ be an absolutely summable interaction on~$Y$
	and $\mu$ a $\ZZ^2$-invariant measure on~$Y$.
	Then $\mu$ is an equilibrium measure for $\Phi$ (with respect to $\ZZ^2$-shift)
	if and only if for each positive integer $N$,
	$\mu$ is an equilibrium measure for $\Phi$ relative to its projection $\Pi_N\mu$
	(with respect to horizontal shift).
\end{theorem}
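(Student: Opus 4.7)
The plan is to apply Theorem~\ref{thm:DLR:relative} twice—in its non-relative form to $Y$ as a $\ZZ^2$-subshift, and in its relative form to $\Omega_N$ for each $N \geq 1$—so as to rewrite both sides of the claimed equivalence as Gibbsian conditions, and then to match these via a cofinality argument on the finite subsets of $\ZZ^2$.

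First I would handle the left-hand side. Since $Y$ satisfies TSSM, both parts of Theorem~\ref{thm:DLR:relative} apply with trivial environment (the non-relative DLR theorem being the special case $\Theta=\{\theta\}$, $\nu=\delta_\theta$). Hence $\mu$ is a $\ZZ^2$-equilibrium measure for $\Phi$ on $Y$ if and only if $\mu$ is a $\ZZ^2$-invariant Gibbs measure for $\Phi$ on $Y$.

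For the right-hand side, fix $N\geq 1$ and view $Y$ as the relative system $\Omega_N$ with acting group $\ZZ$ (horizontal shift), alphabet $\Sigma^{[0,N-1]}$, environment space $\Theta_N:=\Pi_N(Y)$, and marginal $\nu_N:=\Pi_N\mu$. The interaction $\Phi$ induces a canonical absolutely summable relative $\ZZ$-interaction on $\Omega_N$, obtained by redistributing each term $\Phi_B$ ($B\Subset\ZZ^2$) along the horizontal projection of $B$. As indicated in the paragraph preceding the theorem, TSSM on $Y$ propagates to both relative D-mixing and the relative weak topological Markov property on $\Omega_N$ with respect to $\nu_N$: intuitively, TSSM lets one fill two horizontally separated patterns on the strip independently of the configuration outside the strip. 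Applying Theorem~\ref{thm:DLR:relative} then gives: $\mu$ is an equilibrium measure for $\Phi$ relative to $\nu_N$ if and only if $\mu$ is a horizontal-shift-invariant relative Gibbs measure for $\Phi$ with marginal $\nu_N$.

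Finally, I would match the two Gibbs conditions. The absolute Gibbs condition on $Y$ specifies, for every finite $F\Subset\ZZ^2$, the conditional distribution of $\mu$ on $F$ given $y_{F^\complement}$. The relative Gibbs condition on $\Omega_N$ specifies the same kernel, but only for sets of the form $F=A\times[0,N-1]$ with $A\Subset\ZZ$, because conditioning on $\theta$ together with $x_{A^\complement\times[0,N-1]}$ is precisely conditioning on $y_{F^\complement}$. Since the rectangles $\{A\times[0,N-1]:A\Subset\ZZ,\ N\geq 1\}$ are cofinal in the finite subsets of $\ZZ^2$ under inclusion, and a Gibbs specification is determined by its restriction to any cofinal family, the absolute Gibbs condition is equivalent to the conjunction over all $N\geq 1$ of the relative Gibbs conditions. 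Chaining the three equivalences yields the theorem. The main obstacle is the middle step: verifying rigorously that TSSM passes from $Y$ to the relative hypotheses of Theorem~\ref{thm:DLR:relative} on each $\Omega_N$, and checking that the induced relative interaction is set up so that its relative Gibbs kernels coincide with the restrictions to strips of the absolute $\Phi$-Gibbs kernels on $Y$.
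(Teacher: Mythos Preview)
Your proposal is correct and follows essentially the same route as the paper: both arguments reduce each side of the equivalence to a Gibbs condition via the (relative) Dobrushin--Lanford--Ruelle theorem, match the absolute and relative Gibbs kernels on rectangular sets (the paper's Proposition~\ref{prop:slices-interactions}), and use cofinality of rectangles in the finite subsets of $\ZZ^2$. The obstacles you flag---that TSSM yields the relative hypotheses on each $\Omega_N$ and that the induced relative interaction has the right Gibbs kernels---are exactly what the paper isolates as Proposition~\ref{prop:slices-TMP}, Proposition~\ref{prop:slices-interactions}, and the relative SI argument in the corollary following Theorem~\ref{thm:DLR:slices}.
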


This theorem can be seen as an in-between characterization,
being local in one direction and global in the other.  In Section~\S\ref{sec:slices},
we prove a more general statement for subshifts on countable amenable groups.
In that setting, finite-width horizontal strips are replaced
by finite-width slices, which are unions of finitely many cosets of
a fixed subgroup.
Interestingly, when the subgroup is the trivial subgroup,
we recover the Dobrushin--Lanford--Ruelle theorem.
In principle, Theorem~\ref{thm:Z2_slices} and its generalization may enable better understanding of
an equilibrium measure for a $\GG$-action by a relative equilibrium measure for an action of a subgroup.

\paragraph{Relative version of Meyerovitch's theorem.}
The Dobrushin--Lanford--Ruelle theorem is not valid on arbitrary subshifts,
and the conditions of D-mixing and topological Markov property seem to be the
appropriate hypotheses. Meyerovitch~\cite{Mey13} has generalized the
Lanford--Ruelle theorem by removing the assumption on the subshift
while weakening the conclusion. To state his theorem, we need to introduce
some terminology. Two finite patterns $u,v\in\Sigma^A$ are said to be
\emph{interchangeable} in a subshift $X\subseteq\Sigma^\GG$
if for every $w\in\Sigma^{\GG\setminus A}$, we have $u\lor w\in X$ if and only if $v\lor w\in X$.
For example, in the golden mean shift, the words $\symb{010}$ and $\symb{000}$
are interchangeable, and in the even shift, the words $\symb{001}$ and $\symb{100}$
are interchangeable. Given $B\subseteq\ZZ^d$, we denote by $\xi^B$ the
$\sigma$-algebra on $X$ consisting of the events that depend only on the
pattern seen on $B$. Meyerovitch's result can be restated as follows.

\begin{theorem}[Meyerovitch's theorem]
\label{thm:Tom}
	Let $X\subseteq\Sigma^{\ZZ^d}$ be an arbitrary $d$-dimensional subshift.
	Let $\Phi$ be an absolutely summable interaction on $X$
	and $\mu$ an equilibrium measure for an associated observable~$f_\Phi$.
	Then, every two finite patterns $u,v\in\Sigma^A$ that are interchangeable in $X$
	satisfy
	\begin{align}
		\frac{\mu\big([u]\,\big|\,\xi^{A^\complement}\big)(x)}{\ee^{-E_{A|A^\complement}(u\lor x_{A^\complement})}} &=
			\frac{\mu\big([v]\,\big|\,\xi^{A^\complement}\big)(x)}{\ee^{-E_{A|A^\complement}(v\lor x_{A^\complement})}}
	\end{align}
	for $\mu$-almost every $x\in[u]\cup[v]$.
\end{theorem}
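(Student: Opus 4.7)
The plan is to deduce Meyerovitch's theorem from the relative Lanford--Ruelle theorem (Theorem~\ref{thm:DLR:relative}\ref{thm:DLR:LR:relative}) by encoding ``everything about a configuration except the interchangeable-swap degrees of freedom'' as an external environment. Under this packaging, the weak topological Markov property should hold essentially by construction, and the resulting relative Gibbs property specializes to Meyerovitch's equation.

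Concretely, define an equivalence relation $\sim$ on $X$ generated by the elementary relation ``$z\sim z'$ whenever there is a finite $B\Subset\ZZ^d$ with $z_{B^\complement}=z'_{B^\complement}$ and $z_B,z'_B$ interchangeable in $X$''. Since translates of interchangeable patterns remain interchangeable, $\sim$ is $\ZZ^d$-equivariant. Let $\Theta$ be a standard Borel model for the quotient $X/\sim$ (realized, for instance, by a countable family of $\sim$-invariant Borel sets separating the classes), let $\pi:X\to\Theta$ be the canonical projection, and set
\[
\Omega \isdef \{(\theta,x)\in\Theta\times\Sigma^{\ZZ^d}:\pi(x)=\theta\},
\]
so that the fiber $X_\theta$ is exactly the $\sim$-class $\pi^{-1}(\theta)$. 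Put $\nu\isdef\pi_*\mu$, and lift $\mu$ to the measure $\widetilde{\mu}$ on $\Omega$ along the graph $x\mapsto(\pi(x),x)$.

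Three conditions of Theorem~\ref{thm:DLR:relative}\ref{thm:DLR:LR:relative} must then be checked. The relative equilibrium property of $\widetilde{\mu}$ follows from the absolute equilibrium property of $\mu$: the class of $\ZZ^d$-invariant measures with $\pi$-projection $\nu$ is a sub-class of all invariant measures, and the usual decomposition $h(\mu')=h_{\mu'}(\,\cdot\,|\,\sigma(\pi))+h(\nu)$, valid when the $\pi$-projection is fixed, reduces the absolute functional $h(\mu')-\int f_\Phi\,\dd\mu'$ to its relative counterpart up to an additive constant. The WTMP relative to $\nu$, by contrast, is meant to be automatic: given $\theta$ and $x_{A^\complement}$, the admissible $u\in\Sigma^A$ completing a configuration in $X_\theta$ are exactly those for which $u\lor x_{A^\complement}$ is $\sim$-equivalent to $x_A\lor x_{A^\complement}$, and this set includes, by definition of $\sim$, every pattern interchangeable with $x_A$. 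The main technical obstacle is to cleanly exhibit the quotient as a standard Borel space and verify that this tautological-looking condition matches the paper's formal definition of WTMP relative to $\nu$; once both are in place, the rest is essentially bookkeeping.

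Granting the conclusion of Theorem~\ref{thm:DLR:relative}\ref{thm:DLR:LR:relative}, $\widetilde{\mu}$ is a relative Gibbs measure, so for $\mu$-almost every $x$ and every pair of admissible patterns $u,u'\in\Sigma^A$ with $u\lor x_{A^\complement},\,u'\lor x_{A^\complement}\in X_{\pi(x)}$,
\[
\frac{\widetilde{\mu}\bigl([u]\,\big|\,\sigma(\pi)\lor\xi^{A^\complement}\bigr)(x)}{\widetilde{\mu}\bigl([u']\,\big|\,\sigma(\pi)\lor\xi^{A^\complement}\bigr)(x)}
= \frac{\ee^{-E_{A|A^\complement}(u\lor x_{A^\complement})}}{\ee^{-E_{A|A^\complement}(u'\lor x_{A^\complement})}}.
\]
Given interchangeable $u,v\in\Sigma^A$ and $x\in[u]\cup[v]$, both $u\lor x_{A^\complement}$ and $v\lor x_{A^\complement}$ are in $X$ and are directly $\sim$-equivalent, so they belong to the same fiber $X_{\pi(x)}$ and the displayed ratio applies to the pair $(u,v)$. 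Because the right-hand side depends only on $x_{A^\complement}$, a tower-property argument (taking $\EE[\,\cdot\,|\,\xi^{A^\complement}]$ of both sides of the resulting identity $\widetilde{\mu}([u]\,|\,\sigma(\pi)\lor\xi^{A^\complement})=r(x_{A^\complement})\,\widetilde{\mu}([v]\,|\,\sigma(\pi)\lor\xi^{A^\complement})$) removes the conditioning on $\sigma(\pi)$, yielding exactly the Meyerovitch equation.
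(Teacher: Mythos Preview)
Your strategy has the right spirit---push the ``non-swappable'' data into the environment and let Theorem~\ref{thm:DLR:relative}\ref{thm:DLR:LR:relative} do the work---but the packaging via the global quotient $X/{\sim}$ does not meet the hypotheses of that theorem, and this is a genuine obstruction rather than a technicality.

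First, $X/{\sim}$ is typically not standard Borel. On a full shift every pair of patterns is interchangeable, so $\sim$ is the tail-equivalence relation, the prototypical non-smooth countable Borel equivalence relation; your parenthetical ``countable family of $\sim$-invariant Borel sets separating the classes'' is exactly the definition of smoothness, and it fails. Second, the fibers $X_\theta=\pi^{-1}(\theta)$ are single $\sim$-classes, hence countable and (again on a full shift) dense rather than closed, so the basic requirement that each $X_\theta\subseteq\Sigma^\GG$ be closed is violated. Third, the relative weak TMP is not ``automatic''. Concretely, on the sunny-side-up shift the patterns $\symb{10}$ and $\symb{01}$ are interchangeable, so all configurations with exactly one $\symb{1}$ form a single $\sim$-class; taking $x$ with its $\symb{1}$ at the origin and $x'$ with its $\symb{1}$ far outside any candidate memory set $B$, one has $x_{B\setminus\{0\}}=x'_{B\setminus\{0\}}$ yet $x_{\{0\}}\lor x'_{\{0\}^\complement}$ has two $\symb{1}$'s and is not even in $X$.

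The paper avoids all three issues by encoding \emph{per pair} rather than globally. For fixed $u,v\in\Sigma^A$ it rewrites $(\theta,x)$ as $((\theta,\widehat{x}),\widehat{z})$, where $\widehat{x}$ is $x$ with each interchangeable occurrence of $u$ or $v$ overwritten by a placeholder, and $\widehat{z}\in\{\hexed{\ },\hexed{u},\hexed{v}\}^{\GG}$ records which of $u,v$ sat at each placeholder. The enlarged environment $(\theta,\widehat{x})$ lives in a concrete Polish space, the fibers over it are closed, and the resulting system is \emph{relatively independent}: each $\hexed{u}\leftrightarrow\hexed{v}$ can be flipped freely. Overlapping occurrences of $u,v$ are handled by first tensoring with an auxiliary hard-core shift that selects a non-overlapping subfamily. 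The key move, in short, is not to quotient out the swap freedom but to expose it as an independent symbolic layer over a richer---but still concrete---environment.
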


The conclusion of Meyerovitch's theorem becomes equivalent to the
Gibbs property when the subshift has the topological Markov property:
roughly speaking, the topological Markov property means that
every two patterns that share the same sufficiently thick margin are interchangeable.
It turns out that (the countable amenable group version of) Meyerovitch's result follows
from the relative Lanford--Ruelle theorem by means of an appropriate encoding.
In fact, we prove a relative version of Meyerovitch's result using this approach.
The notion of interchangeability extends naturally to relative systems:
we say that two finite patterns $u,v\in\Sigma^A$ are
\emph{interchangeable in }$X_\theta$
if for every $w\in\Sigma^{\GG\setminus A}$,
we have $u\lor w\in X_\theta$ if and only if $v\lor w\in X_\theta$.
The \emph{interchangeability set} of two finite patterns $u,v\in\Sigma^A$
is the set $\Theta_{u,v}$ of all environments $\theta\in\Theta$ for which
$u$ and $v$ are interchangeable.
Extending our earlier notation, we write $\xi^B$ for the $\sigma$-algebra on $\Omega$
generated by the projection $(\theta,x)\mapsto x_B$.
The $\sigma$-algebra on $\Omega$ generated by the environment
will be denoted by $\field{F}_\Theta$.
We prove the following theorem for an arbitrary relative system $\Omega$
on a countable amenable group $\GG$.

\begin{theorem}[Relative version of Meyerovitch's theorem]
\label{thm:Tom:relative}
	Assume that $\Theta$ is a standard Borel space.
	Let $\nu$ be a $\GG$-invariant probability measure on $\Theta$
	and $\Phi$ a relative absolutely summable interaction on~$\Omega$.
	Let $\mu$ be an equilibrium measure for $f_\Phi$ relative to $\nu$.
	Then, every two finite patterns $u,v\in\Sigma^A$ satisfy
	\begin{align}
		\label{eq:tom:rel:conclusion}
		\frac{\mu\big([u]\,\big|\,\xi^{A^\complement}\lor\field{F}_\Theta\big)(\theta,x)}{\ee^{-E_{A|A^\complement}(\theta,u\lor x_{A^\complement})}} &=
			\frac{\mu\big([v]\,\big|\,\xi^{A^\complement}\lor\field{F}_\Theta\big)(\theta,x)}{\ee^{-E_{A|A^\complement}(\theta,v\lor x_{A^\complement})}}
	\end{align}
	for $\mu$-almost every $(\theta,x)\in[u]\cup[v]$ such that $\theta\in\Theta_{u,v}$.
\end{theorem}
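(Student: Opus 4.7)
The plan is to deduce Theorem~\ref{thm:Tom:relative} from the relative Lanford--Ruelle theorem, Theorem~\ref{thm:DLR:relative}\ref{thm:DLR:LR:relative}, by means of an equivariant measurable encoding that moves all the non-local combinatorial content of $\Omega$ into an enlarged environment. Fix the pair $u,v\in\Sigma^A$. I would construct an enriched standard Borel environment $\hat\Theta$ whose typical element $\hat\theta$ records the original environment $\theta$ together with an ``anonymised'' copy of the configuration in which, at each translate $gA$ with both $g^{-1}\theta\in\Theta_{u,v}$ and $x_{gA}\in\{gu,gv\}$, the identity of the pattern is erased. The remaining degree of freedom --- whether $u$ or $v$ appears at each such ``free'' translate --- is recorded in a two-letter coordinate of a new alphabet $\hat\Sigma$, giving rise to a $\GG$-equivariant Borel factor map $\pi\colon(\theta,x)\mapsto(\hat\theta,\hat x)$ onto a relative system $\hat\Omega$. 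The admissibility set $\hat X_{\hat\theta}$ is, by construction, cut out by constraints that are certified directly by $\hat\theta$, so $\hat\Omega$ satisfies the weak topological Markov property relative to the marginal $\hat\nu$ of $\pi_*\mu$ on $\hat\Theta$ almost automatically.

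Next, I would verify that the encoding preserves the thermodynamic data. The interaction $\Phi$ transports to an absolutely summable relative interaction $\hat\Phi$ on $\hat\Omega$, and since $\pi$ is essentially bijective modulo information that lives in $\hat\Theta$, both the relative entropy and the expected energy are preserved; consequently $\hat\mu\isdef\pi_*\mu$ is an equilibrium measure for $f_{\hat\Phi}$ relative to $\hat\nu$. Applying Theorem~\ref{thm:DLR:relative}\ref{thm:DLR:LR:relative} to $\hat\Omega$ then gives that $\hat\mu$ is a relative Gibbs measure for $\hat\Phi$.

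The conclusion~\eqref{eq:tom:rel:conclusion} is reached by unwinding the relative Gibbs property of $\hat\mu$ on the coordinate at $1_\GG$ corresponding to the window $A$ in $\Omega$: the two Gibbs weights at that cell are precisely the exponentiated energies appearing in~\eqref{eq:tom:rel:conclusion}, and tower-conditioning back through $\pi$ yields the desired equality of ratios $\mu$-almost everywhere on $([u]\cup[v])\cap\{\theta\in\Theta_{u,v}\}$.

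The hard part is the first step. The encoding must be simultaneously $\GG$-equivariant, Borel, bijective up to $\hat\Theta$-measurable data, entropy- and energy-preserving, and, most importantly, must produce a system enjoying the weak topological Markov property. The main technical wrinkle is handling overlaps between translates $gA$ and $g'A$, for which the identifications of $gu,gv$ and of $g'u,g'v$ may interact nontrivially: two swaps that are individually legitimate may fail to be jointly legitimate. I would address this by enlarging the encoded alphabet to record the joint structure of overlapping free translates and by placing into $\hat\Theta$ a complete description of the swap-orbit structure, so that each coordinate of $\hat x$ becomes an independent local choice whose admissibility is certified by $\hat\theta$ alone. Once the encoding is in place, the transport of absolute summability, the preservation of equilibrium, and the final unwinding are routine calculations.
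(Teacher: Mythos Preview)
Your overall strategy matches the paper's: encode $\Omega$ into a new relative system $\widehat\Omega$ enjoying the relative independence property, transport the interaction, verify that equilibrium is preserved, apply Theorem~\ref{thm:DLR:relative}\ref{thm:DLR:LR:relative}, and unwind. For \emph{non-overlapping} $u,v$ your encoding is essentially the paper's: push the anonymised configuration (the paper uses placeholder symbols $\sun,\star$) into the enlarged environment and retain in the symbolic part only the binary choice $u$-versus-$v$ at each interchangeable occurrence.

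Where you diverge is in the treatment of overlaps, and here your proposal is not concrete enough to work as stated. You suggest enlarging the alphabet to ``record the joint structure of overlapping free translates'' and putting ``a complete description of the swap-orbit structure'' into $\hat\Theta$, so that the coordinates of $\hat x$ become independent local choices. But the swap orbit need not factor into site-wise independent choices: when $gA$ and $g'A$ overlap, swapping $u\leftrightarrow v$ at $g$ can create or destroy an occurrence at $g'$, so the very set of ``free'' positions depends on the choices already made. The orbit under all legitimate swaps is then not a product set, and no finite local alphabet will make it one while preserving relative independence (or even weak TMP). Trying to group overlapping positions into blocks just reintroduces the non-overlap requirement at the block level.

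The paper sidesteps this difficulty by a different device. It does \emph{not} attempt to encode all overlapping occurrences. Instead it adjoins an auxiliary hard-core layer $Y$ with shape $A$, forms $\widetilde\Omega\isdef\Omega\times Y$ with the product measure $\tilde\mu\isdef\mu\times\pi$ for some measure of maximal entropy $\pi$ on $Y$ with $\pi([\symb{1}])>0$, and observes that the patterns $\tilde u,\tilde v$ obtained by placing a particle atop $u,v$ are \emph{automatically} non-overlapping in $\widetilde\Omega$. The non-overlapping case then applies to $\tilde u,\tilde v$; independence of the $Y$-layer and positivity of $\pi([\symb{1}])$ allow one to cancel the auxiliary factor and recover~\eqref{eq:tom:rel:conclusion} for $u,v$. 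This reduction-to-non-overlapping trick is the missing idea in your proposal.
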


We prove Theorem~\ref{thm:Tom:relative} in Section~\S\ref{sec:relative_tom}.
As in the non-relative case,
the conclusion of Theorem~\ref{thm:Tom:relative} becomes equivalent to the relative Gibbs property
when the system has the topological Markov property relative to $\nu$,
and we recover the relative Lanford--Ruelle theorem.
This means that, up to relatively simple reductions,
Theorem~\ref{thm:DLR:relative}\ref{thm:DLR:LR:relative}
and Theorem~\ref{thm:Tom:relative} are equivalent!

\paragraph{Acknowledgements.}
We thank Raimundo Brice\~no for helpful discussions on the lattice slices application,
Nishant Chandgotia for useful remarks and many discussions concerning the relation
between interactions and observables,
and Fran\c{c}ois Ledrappier for raising a question that led to the application on lattice slices
and in turn motivated this paper.
We wish to express our gratitude to PIMS and the Department of Mathematics of the University of British columbia for their support. S. Barbieri was supported by the ANR project CoCoGro (ANR-16-CE40-0005). R. G\'omez was supported by the PAPIIT-UNAM project IN107718 and by PASPA-UNAM.

\section{Preliminaries}
\label{sec:prelim}

\subsection{Setting}
Throughout this article, we work in a general setting in which
the underlying lattice is a countable amenable group.
A countable group $(\GG,\cdot)$ is \emph{amenable} if there is a sequence of
non-empty finite subsets $F_n\subseteq\GG$ that are approximately (right) $\GG$-invariant,
in the sense that
for every $g\in\GG$, $\abs{F_n\triangle F_ng}=\smallo(\abs{F_n})$ as $n\to\infty$.
Such a sequence is called a (right) \emph{F\o{}lner sequence}.
For instance, the hyper-cubic lattice $(\ZZ^d,+)$ (with $d=1,2,\ldots$)
is amenable with the boxes $F_n\isdef[-n,n]^d\cap\ZZ^d$ forming a F\o{}lner sequence.
For many results in ergodic theory including the pointwise
ergodic theorem~\cite{Lin01}, the Shannon--McMillan--Breiman
theorem~\cite{OrnWei83,Lin01}, and the Ornstein isomorphism
theorem~\cite{OrnWei87}, the setting of amenable groups seems to be the right level of generality.
A reader who is not concerned with this level of generality is welcome to take $\GG\isdef\ZZ^d$.

Let $\GG$ be a countable amenable group and $\Sigma$ a finite alphabet.
We will use the notation $A \Subset \GG$ to indicate that $A$ is a finite subset of $\GG$.
We think of $x\in\Sigma^\GG$ as a microscopic configuration of a physical
system, with $x_g$ representing the local state of the system at spatial position~$g \in \GG$.
The group $\GG$ acts on $\Sigma^\GG$ by translations: the action of
an element $g\in\GG$ on a configuration $x\in\Sigma^\GG$ is the shifted configuration $gx$
where $(gx)_k\isdef x_{g^{-1}k}$ for every $k\in \GG$.
Given $A, B \subseteq \GG$, $u \in \Sigma^A$ and $v \in \Sigma^B$ such that $u_{A\cap B} = v_{A\cap B}$,
define $u \lor v \in \Sigma^{A \cup B}$ by $(u \lor v)_g = u_g$ for $g \in A$ and
$(u \lor v)_g = v_g$ for $g \in B$.

The system interacts with an external environment.
The space of all possible states of the environment
is a measurable space $\Theta$ on which $\GG$ acts via measurable maps.
For each $\theta\in\Theta$, let $X_\theta\subseteq\Sigma^\GG$ be a non-empty closed set,
representing the configurations that are consistent with environment $\theta$.
We impose two assumptions on the family $(X_\theta: \theta\in\Theta)$:
\begin{enumerate}[label={\rm (\roman*)}]
	\item (measurability) $\Omega\isdef\{(\theta,x): \text{$\theta\in\Theta$ and $x\in X_\theta$}\}$ is measurable
		in the product $\sigma$-algebra,
	\item (translation symmetry) $X_{g\theta}=g X_\theta$ %
		for each $\theta\in\Theta$ and $g\in\GG$.
\end{enumerate}
We call $\Omega$ a \emph{relative system}.
As an alternative interpretation,
if $\nu$ is a probability measure on $\Theta$, then $\theta\mapsto X_\theta$
is a random set in the probability space $(\Theta,\nu)$.

Recall that the $\sigma$-algebra on $\Omega$
generated by projection on $\Theta$ is denoted by $\field{F}_\Theta$.
We denote by $\xi$ the finite partition of $\Omega$ generated by
the projection $(\theta,x)\mapsto x_{1_\GG}$.
A \emph{cylinder} set is a set of the form
$[q]\isdef\{(\theta,x)\in\Omega: x_A=q\}$ where $q\in\Sigma^A$
is a pattern with (finite) support $A \Subset \GG$.
The set $A$ is called the \emph{base} of~$[q]$.
Given a subset $B\subseteq\GG$,
we write $\xi^B\isdef\bigvee_{k\in B}\xi^k$ (with $\xi^k\isdef k\xi$)
for the $\sigma$-algebra on $\Omega$ generated by cylinder sets
whose bases are included in $B$.

We call a measurable function $f\colon \Omega\to\RR$ an \emph{observable}.
An observable is said to be \emph{relatively local} if
it is $(\field{F}_\Theta\lor\xi^A)$-measurable for some $A\Subset\GG$.
An observable $f$ is \emph{relatively continuous} if
the family $(f(\theta,\cdot): \theta\in\Theta)$ is equicontinuous,
that is, for every $\varepsilon >0$, there exists a set $A \Subset \GG$
such that for every $\theta\in\Theta$ and $x,y \in X_{\theta}$ satisfying $x_A = y_A$,
we have $\abs{f(\theta,x)-f(\theta,y)}<\varepsilon$.
Every relatively local observable is clearly relatively continuous.
The set of bounded relatively continuous observables,
denoted by $C_\Theta(\Omega)$, is a Banach space with the uniform norm.
The bounded relatively local observables form a dense linear subspace of $C_\Theta(\Omega)$.

For a closed subset $Y\subseteq\Sigma^\GG$ and a finite set $A\Subset\GG$,
we write $L_A(Y)$ for the set of all patterns $q\in\Sigma^A$
such that $y_A=q$ for some $y\in Y$.
We define $L(Y)\isdef\bigcup\limits_{A\Subset \GG} L_A(Y)$.

\subsection{Relative interactions and Hamiltonians}
A (relative) \emph{interaction} on $\Omega$ is a collection
$\Phi=(\Phi_A: \text{$A\Subset\GG$})$ of
bounded measurable maps $\Phi_A\colon\Omega\to\RR$ such that
\begin{enumerate}[label={\rm (\roman*)},series=interaction]
	\item (relative locality) $\Phi_A$ is $(\field{F}_\Theta\lor\xi^A)$-measurable,
	\item (translation symmetry) $\Phi_{gA}(\theta,x)=\Phi_A(g^{-1}\theta,g^{-1} x)$.
\end{enumerate}
A relative interaction is \emph{absolutely summable} if
\begin{align}
	\norm{\Phi} \isdef
	\sum_{\substack{A\Subset\GG\\ A\ni 1_\GG}} \norm{\Phi_A} &< \infty \;,
\end{align}
where $\norm{\Phi_A}$ denotes the uniform norm of $\Phi_A$. %

Given an interaction $\Phi$, the \emph{energy content} of a configuration $x\in X$
in a finite set $A$ relative to an environment $\theta\in\Theta$ is
\begin{align}
	E_A(\theta,x) &\isdef \sum_{C\subseteq A} \Phi_C(\theta,x) \;.
\end{align}
The collection $E=(E_A: A\Subset\GG)$ is called
the (relative) \emph{Hamiltonian} defined by $\Phi$.
The \emph{conditional} energy content of $x$ inside $A\Subset\GG$
in the \emph{context} of $B\Subset\GG$ and environment $\theta\in\Theta$ is
\begin{align}
	E_{A|B}(\theta,x) &\isdef E_{A\cup B}(\theta,x) - E_B(\theta,x) \\
		&= \sum_{\substack{C\subseteq A\cup B\\ C\cap(A\setminus B)\neq\varnothing}}
			\Phi_C(\theta,x) \;.
\end{align}
Observe that $E_A$ is relatively continuous with $\norm{E_A}\leq\abs{A}\norm{\Phi}$,
and similarly, $\norm{E_{A|B}}\leq\abs{A\setminus B}\norm{\Phi}$.
The absolute summability of $\Phi$ ensures that the limit
\begin{align}
\label{eq:energy:conditional:limit}
	E_{A|A^\complement}(\theta,x) &\isdef
		\lim_{B\nearrow\GG} E_{A|(B\setminus A)}(\theta,x) \\
		&=	\sum_{\substack{C\Subset\GG\\ C\cap A\neq\varnothing}}
				\Phi_C(\theta,x)
\end{align}
exists along the finite subsets of $\GG$ directed by inclusion.
Moreover, the convergence is uniform in $(\theta,x)$,
hence $E_{A|A^\complement}(\theta,x)$ is %
bounded (namely, $\norm{E_{A|A^\complement}}\leq\abs{A}\norm{\Phi}$) and relatively continuous.

It is easy to see that
\useshortskip
\begin{align}
	\label{eq:Hamiltonian:residue}
	\norm{E_{A|A^\complement}-E_A} &\leq
		\sum_{\substack{C\Subset\GG \\
			C\cap A\neq\varnothing\\ C\cap A^\complement\neq\varnothing}}
			\norm{\Phi_C} \;.
\end{align}
Suppose that $(F_n)_{n\in\NN}$ is a F\o{}lner sequence in $\GG$.
It follows from the absolute summability of $\Phi$ that if we choose $A\isdef F_n$,
the right-hand side of~\eqref{eq:Hamiltonian:residue} becomes
of order $\smallo(\abs{F_n})$ as $n\to\infty$, hence
\begin{align}
	\label{eq:Hamiltonian:residue:Folner}
	\norm{E_{F_n|F_n^\complement}-E_{F_n}}
	&= \smallo(\abs{F_n})
\end{align}
as $n\to\infty$ (see~Sec.~\S\ref{sec:Hamiltonian:residue:Folner:argument}).
Another useful inequality is
\begin{align}
	\label{eq:Hamiltonian:different-sets:residue}
	\norm{E_{B|B^\complement}-E_{A|A^\complement}} &\leq
		\abs{B\setminus A}\norm{\Phi} \;,
\end{align}
which holds whenever $A$ and $B$ are finite and $A\subseteq B$
(see Sec.~\S\ref{sec:Hamiltonian:different-sets:residue:argument}).

The value $\Phi_A(\theta,x)$ is interpreted as the energy resulting
from the interaction between the symbols at sites in $A$ and the environment.
In models from physics, the interaction values are often physically meaningful values,
either being prescribed by the microscopic physics behind the model,
or representing rough microscopic tendencies for alignment or misalignment of
the physical quantities at different locations. The contribution of a
single site to the energy can be measured, for instance, by the following
bounded relatively continuous observable %
\begin{align}
\label{eq:interaction:associated-observable}
	f_\Phi(\theta,x) &\isdef
		\sum_{\substack{A\Subset\GG\\ A\ni 1_\GG}}
		\frac{1}{\abs{A}}\Phi_A(\theta,x) \;.
\end{align}
There are many other choices to distribute the energy contributions
between sites; see~\cite[Sec.~\S3.2]{Rue04} for some other choices.
The key relationship between $\Phi$ and $f_\Phi$ is that
for every F\o{}lner sequence $(F_n)_{n\in\NN}$,
\begin{align}
	\label{eq:interaction-observable:equivalence:1}
	\Bigabs{E_{F_n}(\theta,x)-\sum_{g\in F_n}f_\Phi(g^{-1}\theta,g^{-1}x)}
		&= \smallo(\abs{F_n})
\end{align}
as $n\to\infty$, uniformly in $(\theta,x)\in\Omega$
(see Sec.~\S\ref{sec:interaction-observable:equivalence}).
As a consequence, 
\begin{align}
	\label{eq:interaction-observable:equivalence:2}
	\mu(f_\Phi)
	&=
		\lim_{n\to\infty}\frac{\mu\big(E_{F_n}\big)}{\abs{F_n}}
\end{align}
for every $\GG$-invariant measure $\mu$ on $\Omega$.
From~\eqref{eq:Hamiltonian:residue:Folner} it follows that the above equality remains valid
if we replace $E_{F_n}$ with $E_{F_n|F_n^\complement}$.

\subsection{Relative pressure}

Let $E$ be the relative Hamiltonian associated to a relative absolutely summable interaction $\Phi$
and $\mu$ a $\GG$-invariant probability measure on $\Omega$.
For every $A\Subset\GG$, we define
\begin{align}
	\Psi_\mu(A) &\isdef H_\mu(\xi^A\,|\,\field{F}_\Theta) - \mu(E_A) \;,
\end{align}
where $H_\mu(\xi^A\,|\,\field{F}_\Theta)$ denotes the conditional entropy
of $\xi^A$ given $\field{F}_\Theta$ under $\mu$.
This is the \emph{relative pressure} on $A$ under $\mu$.

The \emph{relative pressure per site} under $\mu$ is given by
\begin{align}
	\psi(\mu) &\isdef \lim_{n\to\infty} \frac{1}{\abs{F_n}}\Psi_\mu(F_n)
\end{align}
where $(F_n)_{n\in\NN}$ is an arbitrary F\o{}lner sequence in $\GG$.
It can be verified using~\cite[Sec.~\S4.7]{KerLi16}
and~\eqref{eq:interaction-observable:equivalence:2}
that the limit exists, is independent of
the choice of the F\o{}lner sequence, and coincides with
$h_\mu(\Omega\,|\,\Theta)-\mu(f_\Phi)$,
where $h_\mu(\Omega\,|\,\Theta)$
denotes the conditional entropy per site
(i.e., the conditional Kolmogorov--Sinai entropy for the $\GG$-action)
of $\mu$ given the $\GG$-invariant $\sigma$-algebra $\field{F}_\Theta$,
and $f_\Phi$ is the energy observable associated to the interaction~$\Phi$.

The \emph{conditional} relative pressure on $A\Subset\GG$ given $B\Subset\GG$
under $\mu$ is
\begin{align}
	\Psi_\mu(A\,|\,B) &\isdef \Psi_\mu(A\cup B) - \Psi_\mu(B) \\
	&=
		H_\mu(\xi^A\,|\,\xi^B\lor\field{F}_\Theta) - \mu(E_{A|B}) \;.
\end{align}
The advantage of this definition is that it has formal properties
similar to those of conditional entropy $H_\mu(\xi^A\,|\,\xi^B)$
and conditional energy $E_{A|B}$.
Most importantly, the conditional relative pressure satisfies the chain rule
\begin{align}
\label{eq:pressure:chain-rule}
	\Psi_\mu(A\cup B\,|\,C) &= \Psi_\mu(B\,|\,C) + \Psi_\mu(A\,|\,B\cup C) \;.
\end{align}
Observe that $\Psi_\mu(A\,|\,B)$ depends only on the restriction of $\mu$
to $\field{F}_\Theta\lor\xi^{A\cup B}$.
Moreover,
\begin{align}
\label{eq:pressure:bound}
	\Psi_\mu(A\,|\,B) &\leq (\log\abs{\Sigma}+\norm{\Phi})\abs{A\setminus B} \;.
\end{align}
The martingale convergence theorem, the monotonicity of
conditional entropy on the condition, the absolute summability of the interaction
(in particular, the existence of the limit in~\eqref{eq:energy:conditional:limit})
and the bounded convergence theorem imply the existence of the limit
\begin{align}
	\Psi_\mu(A\,|\,A^\complement) &\isdef
		\lim_{B\nearrow\GG}\Psi\big(A\,\big|\,(B\setminus A)\big) \\
	&=
		H_\mu(\xi^A\,|\,\xi^{A^\complement}\lor\field{F}_\Theta)
		- \mu(E_{A|A^\complement}) \;.
\end{align}

Let us remark that for a fixed $A\Subset\GG$ and a measure $\nu$ on $\Theta$,
the conditional entropy $H_\mu(\xi^A\,|\,\field{F}_\Theta)$ and as a result
the relative pressure $\Psi_\mu(A)$ are concave
as functions of $\mu$
when $\mu$ runs over measures with marginal $\nu$.
In turn, the conditional entropy per site $h_\mu(\Omega\,|\,\Theta)$ and
the relative pressure per site $\psi(\mu)$ are affine
when restricted to measures $\mu$ with marginal $\nu$.

\subsection{Relative Gibbs measures and relative equilibrium measures}

According to a fundamental hypothesis of equilibrium statistical mechanics,
the macroscopic states of a system at thermal equilibrium are suitably described
by probability distributions maximizing the pressure.
Identifying the equilibrium measures thus amounts to solving an optimization problem,
where the pressure is interpreted as the gain.

On a finite space, the optimization problem is solved by
the Boltzmann distribution.
The uniqueness of the solution is a consequence of the strict concavity of the entropy.
\begin{proposition}[Finitary variational principle]
\label{prop:gibbs-inequality}
	Let $M$ be a finite set
	and $U\colon M\to\RR$ a real-valued function.
	Given a probability distribution $p\colon M\to[0,1]$, define
	\begin{align}
		\Psi(p) &\isdef H(p) - p(U) \;.
	\end{align}
	Then, $\Psi(p)$
	takes its maximum if and only if $p(a)=\ee^{-U(a)}/Z$ for each $a\in M$,
	where $Z\isdef\sum_{a\in M}\ee^{-U(a)}$ is the normalizing constant.
	The maximum value is $\log Z$.
\end{proposition}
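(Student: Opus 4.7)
The plan is to rewrite $\Psi(p)$ as $\log Z$ minus a non-negative penalty measuring how far $p$ deviates from the Boltzmann candidate, and then invoke Gibbs' inequality.

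First, let $q$ denote the candidate distribution $q(a) \isdef \ee^{-U(a)}/Z$, noting that $q(a)>0$ for every $a\in M$ since $M$ is finite and $U$ is real-valued. Taking logarithms gives $U(a) = -\log q(a) - \log Z$, so that for any probability distribution $p$ on $M$,
\begin{align}
    \Psi(p) = H(p) - p(U)
    = -\sum_{a\in M} p(a)\log p(a) + \sum_{a\in M} p(a)\log q(a) + \log Z
    = \log Z - D(p\,\|\,q),
\end{align}
where $D(p\,\|\,q) \isdef \sum_{a\in M} p(a)\log\!\big(p(a)/q(a)\big)$ is the relative entropy, with the usual convention $0\log 0 = 0$ (the denominator $q(a)$ is never zero, so no issue arises there).

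Next, I would apply Gibbs' inequality: $D(p\,\|\,q)\geq 0$, with equality if and only if $p=q$. The quickest justification is strict concavity of the logarithm combined with Jensen's inequality: writing $D(p\,\|\,q) = -\sum_a p(a)\log\!\big(q(a)/p(a)\big)$, restricting the sum to the support of $p$, and using $\log t \leq t-1$ for $t>0$ with equality only at $t=1$, one gets
\begin{align}
    -D(p\,\|\,q) \leq \sum_{a:\, p(a)>0} p(a)\!\left(\frac{q(a)}{p(a)}-1\right)
    = \sum_{a:\, p(a)>0} q(a) - 1 \leq 0,
\end{align}
with equality throughout forcing $q(a)/p(a) = 1$ on $\supp(p)$ and $\supp(p)=M$, i.e.\ $p=q$.

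Combining the two displays, $\Psi(p) = \log Z - D(p\,\|\,q) \leq \log Z$ with equality if and only if $p = q$, which is exactly the claimed characterization and value of the maximum. There is no real obstacle here: the only small point to be careful about is the convention $0\log 0=0$ and the fact that $q$ has full support, which together ensure that the rewriting and the equality condition in Gibbs' inequality behave as expected.
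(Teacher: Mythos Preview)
Your proof is correct and is essentially a fleshed-out version of what the paper does: the paper simply states that the result ``is well known and easily follows from Jensen's inequality,'' and your relative-entropy rewriting $\Psi(p)=\log Z - D(p\,\|\,q)$ together with the $\log t \leq t-1$ bound is exactly that Jensen argument made explicit.
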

\noindent This is well known and easily follows from Jensen's inequality.

A relative Gibbs measure for an absolutely summable relative interaction $\Phi$
is a probability measure on $\Omega$ that is \emph{locally optimal},
in the sense that it maximizes the pressure
on every finite region of the lattice $\GG$ conditioned on
the configuration outside the region and the environment.
In other words, a probability measure $\mu$ on $\Omega$ is a relative Gibbs measure for $\Phi$
if for every $A\Subset\GG$, the conditional probability according to $\mu$
of seeing a pattern $u$ on $A$ given a configuration $x_{A^\complement}$
outside $A$ and an environment $\theta$ is the Boltzmann distribution associated to
the energy function $U(u)\isdef E_{A|A^\complement}(\theta,x_{A^\complement}\lor u)$,
where $E$ is the Hamiltonian associated to $\Phi$.

More specifically, %
for every $A\Subset\GG$, the prescribed distribution of
the pattern on $A$ given a boundary condition $\theta,x_{A^\complement}$
is the Boltzmann distribution
\begin{align}
	\label{eq:boltzman-dist}
	\pi_{\theta,x_{A^\complement}}(u) &\isdef
		\begin{dcases}
			\frac{1}{Z_{A|A^\complement}(\theta,x)}
				\ee^{-E_{A|A^\complement}(\theta,x_{A^\complement}\lor u)}
					& \text{if $x_{A^\complement}\lor u\in X_\theta$,} \\
				0	& \text{otherwise,}
		\end{dcases}	
\end{align}
where $Z_{A|A^\complement}(\theta,x)$ is the normalizing constant known as the \emph{partition function}.
Given $(\theta,x)\in\Omega$, the distribution $\pi_{\theta,x_{A^\complement}}(u)$ extends to
a probability measure $K_A\big((\theta,x),\cdot\big)$ on $\Omega$ by setting
\begin{align}
	K_A\big((\theta,x),[u]\cap W\big) &\isdef
		\indicator{W}(\theta,x)\pi_{\theta,x_{A^\complement}}(u)
\end{align}
for each $u\in\Sigma^A$ and $W\in\field{F}_\Theta\lor\xi^{\GG\setminus A}$.
It can be verified that given a set $W\in\field{F}_\Omega$,
the function $K_A(\cdot,W)$ is measurable.
A probability measure $\mu$ on $\Omega$ is a \emph{relative Gibbs} measure for $\Phi$
if for every $A\Subset\GG$ and each measurable set $W\in\field{F}_\Omega$,
\begin{align}
\label{eq:relative-Gibbs:def:conditioning}
	\mu\big(W\,\big|\,\field{F}_\Theta\lor\xi^{\GG\setminus A}\big)(\cdot) &=
		K_A(\cdot,W)
\end{align}
$\mu$-almost surely.
Notice that in order for $\mu$ to be a relative Gibbs measure, it is sufficient
that the above equality holds for every $W\in\xi^A$.

We refer to the function $K_A(\cdot,\cdot)$ as the \emph{Gibbs kernel} for set $A$.
Every Gibbs kernel $K_A$ naturally defines a linear operator $\nu\mapsto\nu K_A$
on probability measures on $\Omega$ by
\begin{align}
	(\nu K_A)(W) &\isdef \nu\big(K_A(\cdot,W)\big) = \int K_A(\cdot,W)\, \dd\nu \;,
\end{align}
and its adjoint operator $f\mapsto K_A f$ on bounded measurable observables on $\Omega$
by
\begin{align}
	(K_A f)(\omega) &\isdef K_A(\omega,f)
		= \int f(\omega') \, K_A(\omega,\dd\omega') \;.
\end{align}
If $(\pmb{\theta},\pmb{x})$ is a random point chosen according to $\nu$,
the measure $\nu K_A$ can be interpreted as the distribution of $(\pmb{\theta},\pmb{x})$
after resampling the pattern on $A$ according to
$\pi_{\pmb{\theta},\pmb{x}_{A^\complement}}$.

With the above definition, one can see that a measure $\mu$
is relative Gibbs for $\Phi$ if and only if $\mu K_A=\mu$
for every $A\Subset\GG$.
The collection $K=(K_A:A\Subset\GG)$ of the Gibbs kernels for all $A\Subset\GG$
is referred to as the \emph{relative Gibbs specification} associated to $\Phi$.

The local optimality of relative Gibbs measures is an immediate
consequence of Proposition~\ref{prop:gibbs-inequality}.
For future reference, let us spell this out as a corollary
in the following specific way.

\begin{corollary}[Local optimality of relative Gibbs measures]
\label{cor:gibbs:local-optimality}
	Let $\Phi$ be an absolutely summable relative interaction on $\Omega$.
	Let $\Psi$ denote the pressure associated to $\Phi$,
	and $K_A$ the Gibbs kernel associated to $\Phi$ for a set $A\Subset\GG$.
	Then,
	for every probability measure $\mu$ on $\Omega$	and $\mu$-almost every $(\theta,x)\in\Omega$,
	we have \useshortskip %
	\begin{align}
		\label{eq:gibbs:local-optimality:pointwise}
		\MoveEqLeft[3]\nonumber
		H_{\mu(\cdot|\field{F}_\Theta\lor\xi^{\GG\setminus A})(\theta,x)}(\xi^A)
			- \mu\big(E_{A|A^\complement}\,|\,\field{F}_\Theta\lor\xi^{\GG\setminus A}\big)(\theta,x) \\
		&\leq
		H_{K_A((\theta,x),\cdot)}(\xi^A)
			- K_A\big((\theta,x),E_{A|A^\complement}\big)
		\quad \text{\textup{(} $= \log Z_{A|A^\complement}(\theta,x)$ \textup{)}}
	\end{align}
	with equality if and only if $\mu(\cdot\,|\,\field{F}_\Theta \lor\xi^{\GG\setminus A})(\theta,x)=K_A((\theta,x),\cdot)$.
	In particular,
	\begin{align}
		\label{eq:gibbs:local-optimality:integral}
		\Psi_{\mu}(A\,|\,A^\complement) &\leq \Psi_{\mu K_A}(A\,|\,A^\complement) \;,
	\end{align}
	with equality if and only if $\mu K_A=\mu$.
\end{corollary}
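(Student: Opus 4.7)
The plan is to reduce everything to a pointwise application of the finitary variational principle (Proposition~\ref{prop:gibbs-inequality}) for each fixed boundary condition $(\theta, x_{A^\complement})$, and then integrate to obtain the global inequality~\eqref{eq:gibbs:local-optimality:integral}.

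First I would observe that conditioning on $\field{F}_\Theta \lor \xi^{\GG \setminus A}$ freezes both the environment $\theta$ and the pattern $x_{A^\complement}$, so the regular conditional measure $\mu(\cdot \mid \field{F}_\Theta \lor \xi^{\GG \setminus A})(\theta, x)$ is carried by configurations of the form $(\theta, x_{A^\complement} \lor u)$ with $u$ ranging over the finite set
\[
M_{\theta, x_{A^\complement}} \isdef \big\{u \in \Sigma^A : x_{A^\complement} \lor u \in X_\theta\big\}.
\]
Let $p$ denote the induced distribution on $M_{\theta, x_{A^\complement}}$ and set $U(u) \isdef E_{A|A^\complement}(\theta, x_{A^\complement} \lor u)$. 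Since $\xi^A$ separates points of $M_{\theta, x_{A^\complement}}$ and the conditional expectation of $E_{A|A^\complement}$ reduces to $p(U)$, the left-hand side of~\eqref{eq:gibbs:local-optimality:pointwise} is exactly $H(p) - p(U)$. Proposition~\ref{prop:gibbs-inequality} then yields the pointwise inequality with equality if and only if $p$ is the Boltzmann distribution $u \mapsto \ee^{-U(u)}/Z_{A|A^\complement}(\theta, x)$. By the definition~\eqref{eq:boltzman-dist} of the Gibbs kernel, this Boltzmann distribution is precisely the $\xi^A$-marginal of $K_A((\theta, x), \cdot)$, and the maximum value $\log Z_{A|A^\complement}(\theta, x)$ agrees with $H_{K_A((\theta,x),\cdot)}(\xi^A) - K_A((\theta,x), E_{A|A^\complement})$, matching the right-hand side.

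To pass to the integrated statement~\eqref{eq:gibbs:local-optimality:integral} I would integrate the pointwise inequality against $\mu$. The left-hand side becomes $H_\mu(\xi^A \mid \xi^{A^\complement} \lor \field{F}_\Theta) - \mu(E_{A|A^\complement}) = \Psi_\mu(A \mid A^\complement)$ by definition. For the right-hand side, the key observation is that $\mu K_A$ coincides with $\mu$ on $\field{F}_\Theta \lor \xi^{\GG \setminus A}$ (the kernel only resamples the pattern on $A$), so integrating the right-hand side against $\mu$ equals integrating it against $\mu K_A$, which produces $\Psi_{\mu K_A}(A \mid A^\complement)$. For the equality characterization: if $\mu K_A = \mu$ then both sides of~\eqref{eq:gibbs:local-optimality:integral} are equal trivially; conversely, equality in the integrated inequality forces equality $\mu$-a.s.\ in the pointwise inequality, which by the first part gives $\mu(\cdot \mid \field{F}_\Theta \lor \xi^{\GG \setminus A})(\theta, x) = K_A((\theta, x), \cdot)$ $\mu$-a.s., and integrating this identity against $\mu$ yields $\mu = \mu K_A$.

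I do not expect a serious obstacle; the work is largely bookkeeping. The only points requiring care are the existence of the regular conditional probability $\mu(\cdot \mid \field{F}_\Theta \lor \xi^{\GG \setminus A})$ and the joint measurability in $(\theta, x)$ of the quantities appearing in~\eqref{eq:gibbs:local-optimality:pointwise}, both of which are routine because $\Omega$ inherits a standard Borel structure from $\Theta$ and because $K_A$ is defined by an explicit measurable formula. Once these are in place, the argument is just a fiberwise invocation of Proposition~\ref{prop:gibbs-inequality} followed by Fubini-type integration.
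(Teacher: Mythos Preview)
Your proposal is correct and follows exactly the same approach as the paper: apply Proposition~\ref{prop:gibbs-inequality} pointwise with the choices $M=\{u\in\Sigma^A: x_{A^\complement}\lor u\in X_\theta\}$, $U(u)=E_{A|A^\complement}(\theta,x_{A^\complement}\lor u)$, and $p(u)=\mu([u]\,|\,\field{F}_\Theta\lor\xi^{\GG\setminus A})(\theta,x)$, then integrate against $\mu$. Your write-up is in fact more detailed than the paper's two-sentence proof, correctly spelling out why the integral of the right-hand side equals $\Psi_{\mu K_A}(A\,|\,A^\complement)$ and how the equality case propagates; the only superfluous remark is the appeal to a standard Borel structure on $\Theta$, which the corollary does not assume and which is not needed since conditional expectations suffice here.
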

\begin{proof}
	To obtain the first inequality,
	apply the finitary variational principle (Prop.~\ref{prop:gibbs-inequality})
	with $M\isdef\{u\in\Sigma^A: x_{A^\complement}\lor u\in X_\theta\}$,
	$U(u)\isdef E_{A|A^\complement}(\theta,x_{A^\complement}\lor u)$, and
	$p(u)\isdef \mu([u]\,|\,\field{F}_\Theta\lor\xi^{\GG\setminus A})(\theta,x)$.
	The second inequality follows from the first inequality
	by integrating both sides with respect to $\mu$.
\end{proof}
\noindent An alternative way to think about the above corollary is that
applying a Gibbs kernel $K_A$ on a measure locally optimizes that measure
on the set $A$.

A relative equilibrium measure on $\Omega$ is a $\GG$-invariant measure that is
\emph{globally optimal} among all $\GG$-invariant measures with the same marginal on $\Theta$.
More generally, let $\nu$ be a $\GG$-invariant measure on~$\Theta$
and $f\in C_\Theta(\Omega)$ an arbitrary bounded relatively continuous observable
(i.e., not necessarily one associated to an absolutely summable interaction).
An \emph{equilibrium measure} for $f$ \emph{relative to} $\nu$ is
a $\GG$-invariant measure on $\Omega$ with marginal $\nu$ on $\Theta$
which maximizes the relative pressure $h_\mu(\Omega\,|\,\Theta)-\mu(f)$
among all $\GG$-invariant measures with the same marginal $\nu$ on $\Theta$.
A measure that is an equilibrium measure relative to its marginal on $\Theta$
is simply said to be a relative equilibrium measure.

\subsection{Types of constraints on configurations}
\label{sec:hypoth}

In this section, we define various classes of constraints on configuration spaces that are sufficient for
the relative Dobrushin--Lanford--Ruelle theorem.

A \emph{subshift} on $\GG$ (or a \emph{$\GG$-subshift}) is a closed $\GG$-invariant
subset of $\Sigma^\GG$.
A subshift $X\subseteq\Sigma^\GG$ is \emph{of finite type} (\emph{SFT})
if there exists a finite set $F\Subset\GG$ and a subset $\family{F}\subseteq\Sigma^F$
such that $x\in X$ if and only if $(g^{-1}x)_F\notin\family{F}$ for all $g\in\GG$.
The elements of $\family{F}$ are called the \emph{forbidden} patterns defining $X$.

\subsubsection{Conditions for the Lanford--Ruelle direction.}
\label{subsec:cond_LR}

The classical constraint on the set of configurations $X$ which enables the
Lanford--Ruelle direction in Theorem~\ref{thm:DLR} is that $X$ is an SFT.
However, this is used only in the form of a Markovian property: the possible
configurations that may appear in a finite support $A\Subset\GG$ given a fixed
configuration in $\GG \setminus A$ do not depend upon the whole complement
but only on a finite subset $B \supseteq A$. 

We say that a closed set $X \subseteq \Sigma^\GG$ satisfies the
\emph{topological Markov property} (\emph{TMP}) if for all
$A \Subset \GG$ there exists a finite set $B \supseteq A$ such that
whenever  $x,x' \in X$ satisfy $x_{B \setminus A} =  x'_{B \setminus A}$,
then $x_B \lor x'_{\GG \setminus A} \in X$.
We call such a set $B$ a \emph{memory set} for $A$ in $X$.
Equivalently, one may think of this property
as follows: if $x_{B \setminus A} =  x'_{B \setminus A}$, then $x_B$ and
$x'_B$ are interchangeable in the sense that every appearance of $x_B$
may be replaced by $x'_B$ and vice-versa.

In the relative setting, similar notions of relative SFT and relative TMP
can be formulated as follows.
Let $\Omega\subseteq\Theta\times\Sigma^\GG$ be a relative system.
We say that $\Omega$ is a \emph{relative SFT} (or is an \emph{SFT relative to $\Theta$})
if there is a finite set $F \Subset \GG$
and a family of subsets $\family{F}_\theta\subseteq\Sigma^F$ (for $\theta\in\Theta$)
such that for each $\theta\in\Theta$, we have
$x\in X_\theta$ if and only if $(g^{-1}x)_F\notin\family{F}_{g^{-1}\theta}$
for all $g\in\GG$.
Similarly, we say that $\Omega$ satisfies the
\emph{topological Markov property relative to $\Theta$} (\emph{relative TMP})
if all the sets $X_\theta$ (for $\theta\in\Theta$) satisfy the TMP
with common choices of the memory sets.
In other words, $\Omega$ has relative TMP if for every $A\Subset\GG$,
there is a finite set $B\supseteq A$ such that 
whenever $\theta\in\Theta$ and $x,x' \in X_\theta$ satisfy $x_{B \setminus A} =  x'_{B \setminus A}$,
then $x_B \lor x'_{\GG \setminus A} \in X_\theta$.

Given a $\GG$-invariant measure $\nu$ on the environment space $\Theta$,
we can also consider the more relaxed conditions of
SFT \emph{relative to $\nu$} and TMP \emph{relative to $\nu$}
under which the corresponding conditions are satisfied for $\nu$-almost every $\theta\in\Theta$
rather than for all $\theta\in\Theta$.
However, by removing a null set from $\Theta$, we can always turn the system
into one that satisfies the condition surely.

Observe that for a subshift $X$ that satisfies TMP,
if $B$ is a memory set for $A$ in $X$, then for all $g \in \GG$, $gB$ is a memory set for $gA$ in $X$.
Similarly, for a relative system that satisfies TMP, it follows that
if $B$ is a memory set for $A$, then $gB$ is a memory set for $gA$.

A closed set $X\subseteq\Sigma^\GG$ 
satisfies the \emph{strong topological Markov property} (\emph{strong TMP})
if there is a finite set $F \Subset \GG$ with $1_\GG \in F$ such that
for every finite set $A\Subset\GG$, the set $AF$ is a memory set for $A$ in $X$.
The notion of relative strong TMP is defined analogously.  %
We remark that for subshifts, TMP and strong TMP are topological conjugacy invariants
and that strong TMP is the conjugacy invariant class generated by the class of topological Markov
fields as defined in~\cite{ChaHanMarMeyPav14, ChaMey16}.

Clearly, every SFT satisfies the strong TMP, and the strong TMP implies the TMP.
Moreover, these collections are all distinct.
The class of subshifts with strong TMP, %
is much larger than the class of SFTs in the sense that the latter is countable
while the former are uncountable:
if $X$ is any $\ZZ$-subshift over $\Sigma$, then the set of all configurations on $\ZZ^2$
whose rows are elements of $X$ and whose columns are constant
satisfies strong TMP but is not necessarily an SFT.
See also \cite{ChaHanMarMeyPav14} (bottom of page 233)
for a simple example of a $\ZZ$-subshift which satisfies the strong TMP but is not an SFT. In Section~\S\ref{subsec:examples} below, we provide an example
of a $\ZZ^2$-subshift that satisfies TMP but not strong TMP.

\subsubsection{Conditions for the Dobrushin direction.}
\label{subsec:cond_D}

The notion of D-mixing used in Theorem~\ref{thm:DLR}\ref{thm:DLR:D}
was introduced by Ruelle~\cite[Sec.~\S4.1]{Rue04} for SFTs on $\ZZ^d$
but remains meaningful in a more general setting.

Let $Y$ be a closed subset of $\Sigma^\GG$.
Given $A \Subset \GG$, we say that a finite set
$B \supseteq A$ is a \emph{mixing set} for $A$ in $Y$ if for every $y,y' \in Y$,
there exists $z \in Y$ satisfying $z_{A} = y_{A}$ and $z_{\GG \setminus B} = y'_{\GG \setminus B}$.
In other words, we can paste a pattern on $A$ from a configuration $y\in Y$
into any other configuration $y'\in Y$ provided that we modify the annulus $B\setminus A$.
We say that $Y$ is \emph{Dobrushin-mixing} (or \emph{D-mixing})
\emph{with respect to} a F\o{}lner sequence $(F_n)_{n\in\NN}$ if for each $n$,
there is a mixing set $\overline{F}_n$ for $F_n$ in $Y$ such that
$\bigabs{\overline{F}_n\setminus F_n}=\smallo(\abs{F_n})$ as $n\to\infty$.
We say that $Y$ satisfies \emph{D-mixing} if it satisfies D-mixing with
respect to some F\o{}lner sequence $(F_n)_{n\in\NN}$.

A relative version of the D-mixing property suitable for our purposes
is the following.
Let $\Omega$ be a relative system and $\nu$ a $\GG$-invariant measure
on its environment space $\Theta$.
We say that $\Omega$ satisfies \emph{D-mixing relative to $\nu$}
\emph{with respect to a F\o{}lner sequence $(F_n)_{n\in\NN}$} if
for $\nu$-almost every $\theta \in \Theta$ and each $n\in\NN$,
there is a mixing set $F_n^\theta$ for $F_n$ in $X_\theta$
such that $\abs{F^\theta_n\setminus F_n}$ is measurable and
$\int\bigabs{F^\theta_n\setminus F_n}\dd\nu(\theta)=\smallo(\abs{F_n})$
as $n\to\infty$.  We say that $\Omega$ satisfies \emph{D-mixing relative to $\nu$}
if it satisfies D-mixing relative to $\nu$ with respect to
some F\o{}lner sequence $(F_n)_{n\in\NN}$.

There are two stronger notions which imply D-mixing and are better known in
the symbolic dynamics community. We say that $Y$ satisfies the
\emph{uniform filling property} (\emph{UFP}) with respect to a F\o lner sequence
$(F_n)_{n \in \NN}$ if there exists a finite set $F \Subset \GG$ such that $F_nF$ is a mixing
set for $F_n$. We say that $Y$ satisfies the \emph{UFP} if $Y$ satisfies the UFP with
respect to some F\o lner sequence. We say that $Y$ is \emph{strongly irreducible}
(\emph{SI}) if there exists $F \Subset \GG$ such that for every two finite sets
$A,B\Subset\GG$ satisfying $AF \cap BF = \varnothing$ and every two configurations $y,y'\in Y$
there is a configuration $z \in Y$ such that $z_A = y_A$ and $z_B = y'_B$. 

The UFP can be regarded as a uniform version of D-mixing:
the fact that $(F_n)_{n \in \NN}$ is F\o lner ensures that
$\abs{F_nF \setminus F_n} = o(\abs{F_n})$.
In turn, a compactness argument shows that SI implies
the UFP. %
An example of a
subshift satisfying UFP but not SI is given in~\cite{KasMad13}.
We do not know of any example of a D-mixing subshift which does not satisfy the UFP.

The relative versions of SI and the UFP
are defined analogously.

\subsubsection{Conditions implying both directions of the theorem}
\label{subsec:cond_extra}

A natural condition that implies both directions of the theorem in the non-relative
setting is that $X$ is the full $\GG$-shift $\Sigma^{\GG}$. In the terminology of
TMP, a $\GG$-subshift is a full $\GG$-shift if and only if
every set $A\Subset\GG$ is a memory set for itself.
In other words, the symbol at each site can be changed \emph{independently}
of the rest of the configuration.
The only $\GG$-subshift
with alphabet $\Sigma$ satisfying that property is $\Sigma^{\GG}$.  However,
the relative version of this notion turns out to be more interesting.

We say that a relative system $\Omega$ has the \emph{relative independence property}
(or \emph{independence property relative to $\Theta$})
if every finite set is a memory set for itself, that is,
if for every $\theta\in\Theta$, every finite set $A \Subset \GG$ and every pair
$x,x' \in X_\theta$, we have $x_A \lor x'_{\GG \setminus A} \in X_\theta$.
Equivalently, $\Omega$ has the relative independence property if for each $\theta\in\Theta$ and $A \Subset \GG$,
any two elements of $L_A(X_\theta)$ are interchangeable in $X_\theta$.
Independence property \emph{relative to} a measure $\nu$ on $\Theta$
is defined accordingly.
Note that, as in the case of a relative SFT,
there is no need for all the sets $X_{\theta}$ to be the same.

Every relatively independent system satisfies the TMP,
and moreover, is relatively D-mixing (with $F_n^\theta\isdef F_n$).
Therefore, in a relatively independent system, both hypotheses of the
relative Dobrushin--Lanford--Ruelle theorem hold.
In Section~\S\ref{sec:relative_tom},
we shall show that under simple reductions, the Lanford--Ruelle theorem for
relatively independent systems implies Theorem~\ref{thm:DLR:relative}~\ref{thm:DLR:LR:relative} for
relative systems with relative TMP.

There is a notion introduced by Brice{\~{n}}o~\cite{Bri17}
that is less restrictive than independence but 
still implies both conditions of the Dobrushin--Lanford--Ruelle theorem.
A closed subset $Y\subseteq \Sigma^{\GG}$ is \emph{topologically strong spatial mixing} (\emph{TSSM}) 
if there exists a set $F \Subset \GG$ such that for any finite disjoint
sets $A,B,S \Subset \GG$ such that $AF \cap BF = \varnothing$,
and for every $u \in L_A(Y)$, $v\in L_B(Y)$ and $w\in L_S(Y)$
such that $u\lor w \in L_{A\cup S}(Y)$ and $v\lor w \in L_{B\cup S}(Y)$, we have
$u\lor v\lor w \in L_{A\cup B\cup S}(Y)$.
In fact, TSSM implies both SI and SFT (see~\cite{Bri17}).

\begin{proposition}[TSSM $\Longrightarrow$ SI SFT]
\label{prop:TSSM_is_SFT_SI}
	Let $X$ be a TSSM $\GG$-subshift. Then $X$ is a strongly irreducible SFT.
\end{proposition}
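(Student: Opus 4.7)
The plan is to establish the two conclusions independently, reusing the TSSM constant $F$ in both.

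\textbf{Strong irreducibility.} I would derive SI with the same constant $F$ by invoking TSSM with an empty ``context'' $S$ (so that $w$ is vacuous). Given $A, B \Subset \GG$ with $AF \cap BF = \varnothing$ and $y, y' \in X$, set $u := y_A \in L_A(X)$ and $v := y'_B \in L_B(X)$; the compatibility hypotheses of TSSM reduce to the tautologies $u \in L_A(X)$ and $v \in L_B(X)$, so TSSM yields $y_A \lor y'_B \in L_{A \cup B}(X)$, and any configuration in $X$ extending this pattern witnesses SI.

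\textbf{SFT property.} Let $F' := \{1_\GG\} \cup F F^{-1}$ and let $\widetilde{X}$ be the $\GG$-SFT defined by the forbidden list $\family{F} := \Sigma^{F'} \setminus L_{F'}(X)$; the inclusion $X \subseteq \widetilde{X}$ is immediate, and membership in $\widetilde{X}$ unpacks to $x_{g F'} \in L_{g F'}(X)$ for every $g \in \GG$. To prove the reverse inclusion, I would fix $x \in \widetilde{X}$ and an enumeration $\GG = \{g_1, g_2, \ldots\}$, and show by induction on $n$ that $x_{B_n} \in L_{B_n}(X)$ where $B_n := \{g_1, \ldots, g_n\}$; compactness then yields $x \in X$. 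The base case follows from $1_\GG \in F'$ by restriction of a witness of $x_{g_1 F'} \in L(X)$. For the inductive step, I would split $B_n$ into the ``close'' part $S := B_n \cap g_{n+1} F F^{-1}$ and the ``far'' part $C := B_n \setminus S$. The construction of $S$ forces $\{g_{n+1}\} F \cap C F = \varnothing$, so TSSM applies with $A := \{g_{n+1}\}$, $B := C$, context $S$, and patterns $u := x_{g_{n+1}}$, $v := x_C$, $w := x_S$. The compatibility $v \lor w = x_{B_n} \in L(X)$ is the inductive assumption, while $u \lor w = x_{\{g_{n+1}\} \cup S} \in L(X)$ follows from $\{g_{n+1}\} \cup S \subseteq g_{n+1} F'$ together with $x_{g_{n+1} F'} \in L(X)$. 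The conclusion of TSSM then supplies $x_{B_{n+1}} \in L_{B_{n+1}}(X)$.

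\textbf{Main obstacle.} The delicate part is the SFT direction: one must choose the window $F'$ large enough that \emph{both} TSSM hypotheses in the inductive step can be certified from ``$x$ avoids no pattern in $\family{F}$'', yet small enough to depend only on the TSSM constant. The key observation is that extending $x$ one site at a time confines the ``close'' set $S$ to the single translate $g_{n+1} F F^{-1}$, so the single window $F' = \{1_\GG\} \cup F F^{-1}$ suffices; a multi-site extension would demand a window growing with the number of sites and would not produce an SFT with a fixed defining support.
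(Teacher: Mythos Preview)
Your proof is correct. The strong irreducibility argument is identical to the paper's: invoke TSSM with $S=\varnothing$.

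For the SFT part, your route differs from the paper's. The paper argues by contradiction: assuming $y\in\widetilde{X}\setminus X$, it selects a \emph{minimal} finite set $D\supseteq FF^{-1}$ with $y_D\notin L_D(X)$, picks some $g\in D\setminus FF^{-1}$, and applies TSSM once with $A=\{1_\GG\}$, $B=\{g\}$, $S=D\setminus\{1_\GG,g\}$ to contradict minimality. You instead give a direct inductive construction, growing the admissible pattern $x_{B_n}$ one site at a time and applying TSSM at each step with $A=\{g_{n+1}\}$, $B=B_n\setminus g_{n+1}FF^{-1}$, $S=B_n\cap g_{n+1}FF^{-1}$. Both arguments hinge on the same mechanism (TSSM with a single-site $A$ and context contained in a translate of $FF^{-1}$), and both use essentially the same defining window $FF^{-1}$ (you add $1_\GG$ explicitly; the paper assumes $1_\GG\in F$). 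Your version has the minor advantage that the two TSSM compatibility hypotheses are verified by visibly independent facts---the inductive hypothesis and the single window $x_{g_{n+1}F'}\in L(X)$---whereas in the paper's minimality argument one has to think for a moment about why \emph{both} $y_{D\setminus\{g\}}$ and $y_{D\setminus\{1_\GG\}}$ lie in $L(X)$. The paper's argument is a bit shorter.
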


\begin{proof}
	The set $S$ in the definition of TSSM can be chosen to be empty,
	hence $X$ is SI. In order to show that $X$ is an SFT, let $F$ be the set appearing in the
	definition of TSSM, and without loss of generality, assume that $F\ni 1_\GG$.
	Let $\family{F}\subseteq\Sigma^{FF^{-1}}$
	be the set of all patterns on $FF^{-1}$ that do not occur
	on the elements of $X$, that is,
	$\family{F}\isdef\{q\in\Sigma^{FF^{-1}}: \text{$x_{FF^{-1}}\neq q$ for all $x\in X$}\}$.
	Let $X'$ be the SFT defined by $\family{F}$ as the set of forbidden patterns.
	We show that $X=X'$.
	
	Clearly, $X\subseteq X'$.  Suppose that there exists a configuration
	$y\in X'\setminus X$.
	Let $D\subseteq\GG$ be a minimal set containing $FF^{-1}$
	such that $x_D\neq y_D$ for all $x\in X$.
	By compactness, $D$ is finite and thus $y_D\notin L_D(X)$.
	By the definition of $X'$, we have $y_{FF^{-1}}\in L_{FF^{-1}}(X)$,
	hence there exists an element $g\in D\setminus FF^{-1}$.
	Now, set $A\isdef\{1_\GG\}$, $B\isdef\{g\}$ and $S\isdef D\setminus\{1_\GG,g\}$.
	Observe that $AF\cap BF=F\cap gF=\varnothing$.
	By the choice of $D$,
	we have $y_{A\cup S}\in L_{A\cup S}(X)$ and $y_{B\cup S}\in L_{B\cup S}(X)$
	but $y_D\notin L_D(X)$.  But this contradicts the TSSM property of~$X$.
\end{proof}

The relative version of TSSM demands the existence of a set $F$ for which the
above condition holds on $X_{\theta}$ for all (or $\nu$-almost every) $\theta \in \Theta$.
It can be verified that relative TSSM implies both relative SI and relative SFT.
In Figure~\ref{fig:conditions}, we summarize the relationships between
all of the conditions introduced in this section. %
The same relations hold for their relative counterparts.

\begin{figure}[h!]
	\centering
	\begin{tikzpicture}
	\draw[fill = gray, opacity = 0.3, rounded corners] (-1.7,-1.5) rectangle (10.8,0.5);
	\draw[pattern = north east lines, opacity = 0.2, rounded corners] (-1.7,-1.5) rectangle (10.8,0.5);
	\draw[fill = gray, opacity = 0.3, rounded corners] (-1.2,-0.5) rectangle (11.2,1.5);
	\draw[pattern = north west lines, opacity = 0.2, rounded corners] (-1.2,-0.5) rectangle (11.2,1.5);
	\node (B) at (2,0) {TSSM};
	\node (BB) at (-0.5,0) {Indep.};
	\node (C) at (4.5,1) {SFT};
	\node (D) at (7,1) {strong TMP};
	\node (E) at (10,1) {TMP};
	\node (F) at (4.5,-1) {SI};
	\node (G) at (7,-1) {UFP};
	\node (H) at (9.5,-1) {D-Mixing};
	\draw [->] (BB) to (B);
	\draw [->] (B) to (C);
	\draw [->] (C) to (D);
	\draw [->] (D) to (E);
	\draw [->] (B) to (F);
	\draw [->] (F) to (G);
	\draw [->] (G) to (H);
	\node  at (0.7,1.2) {\textsf{Eq. measure $\implies$ Gibbs}};
	\node  at (0.2,-1.2) {\textsf{Gibbs $\implies$ Eq. measure}};
	\end{tikzpicture}
	\caption{Sufficient conditions for both directions of the (relative) Dobrushin--Lanford--Ruelle theorem.}
	\label{fig:conditions}
\end{figure}
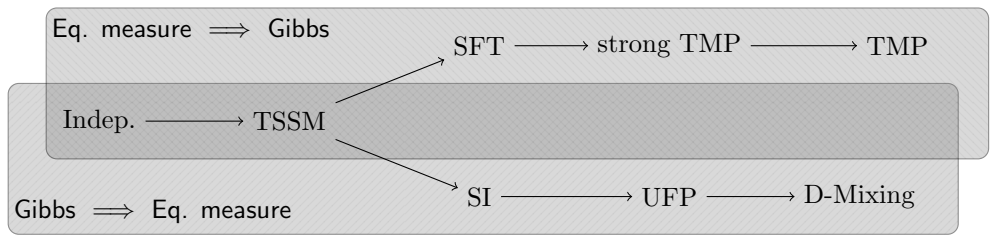

\subsubsection{Examples}
\label{subsec:examples}

This last subsection is dedicated to examples that illustrate
the conditions introduced in the previous subsections.
We begin by examining the two examples given in the Introduction.

In Example~\ref{ex:ising_model} the environment $\theta$ completely
determines the allowed symbols at each site, independently from site to site.
Thus, the Ising model on percolation clusters satisfies the relative independence
property and so the relative Dobrushin--Lanford--Ruelle theorem holds.
Note that in this case, all the sets~$X_{\theta}$ are disjoint.

In Example~\ref{ex:graphs} the coloring condition is a nearest neighbor
condition, and so the relative system is a relative SFT and therefore the
relative Lanford--Ruelle theorem holds.
Moreover, we claim that the assumption $\abs{\Sigma}>\abs{S}$
is sufficient to ensure that the system is relative SI,
and thus the relative Dobrushin theorem holds.
Indeed, let $F \isdef S$ be the set
of generators. Given a subgraph $\theta$ and two valid colorings
$x,x' \in X_{\theta}$, if $AF \cap BF = \varnothing$ then no vertex in $A$ is
adjacent to a vertex in~$B$. The partial configuration $w\isdef x_{A} \lor x'_{B}$
can be inductively extended to a valid coloring of $(\GG,\EE)$ by filling each position in
$\GG\setminus A \cup B$ with a color not already taken by any of its $\abs{S}$
neighbors along the generators.
In fact, the same argument shows that the system is indeed relatively TSSM (see~\cite{Bri17}).

\newcommand{\unobox}{
	\begin{tikzpicture}[scale = 0.25,baseline=0.4pt]
	\draw  (0,0) rectangle +(1,1);
	\draw [fill = black] (0.5,0.5) circle (0.3);
	\end{tikzpicture}
}
\newcommand{\cerobox}{
	\begin{tikzpicture}[scale = 0.25,baseline=0.4pt]
	\draw (0,0) rectangle +(1,1);
	\end{tikzpicture}
}

\begin{example}[A $\ZZ^2$-subshift with TMP but not strong TMP]
	Consider the set $Y$ of all configurations $y \in \{\cerobox,\unobox\}^{\ZZ^2}$ such that every
	$8$-connected component of sites with symbol $\unobox$ is a finite square,
	that is, every $8$-connected component
	of $y^{-1}(\unobox)$ is a set of the form $\vec{u}+[0,n-1]^2\cap \ZZ^2$ for some $\vec{u} \in \ZZ^2$
	and $n \geq 1$. Let $X \subseteq \{\cerobox,\unobox\}^{\ZZ^2}$ be the closure of $Y$.
	We claim that $X$ has TMP but not strong TMP.
	
	In order to see that $X$ does not have strong TMP, let $A_n \isdef [-2n,2n]^2$ and consider $x$ to be equal
	to $\unobox$ in the support $(3n,0)+[-n,n]^2$ and $\cerobox$ everywhere else. Similarly, consider
	$y$ equal to $\unobox$ in $(3n+1,0)+[-n,n]^2$ and $\cerobox$ in the complement.
	Let $B_n \isdef [-4n, 4n]^2$. For any fixed $F \Subset \ZZ^2$ we have that $A_nF \subseteq B_n$
	for all large enough $n$. However, we have that $x_{B_n \setminus A_n} = y_{B_n \setminus A_n}$
	but $z \isdef x_{B_n} \vee y_{A_n^{\complement}}$ is not an element of $X$ as
	$z^{-1}(\unobox) = (2n,0)+([0,n+1]\times [0,n])$ is not a square.
	This is illustrated in Figure~\ref{fig:squarestmp}.

	\begin{figure}[h!]
		\centering
		\begin{tikzpicture}[scale = 0.15]
			\begin{scope}[shift = {(0,0)}]
				\node at (0,-14)  {\scalebox{1}{$x \in X$}};
				\clip[draw,decorate,decoration={random steps, segment length=4pt, amplitude=2pt}] (-11.8,-11.8) rectangle (11.8,11.8); 
				\draw [black!20] (-12,-12) grid (12,12);

				\foreach \i in {-8,...,8}{
					\foreach \j in {-8,...,8}{
						\draw [black!20, pattern = north east lines, opacity = 0.3] (\i,\j) rectangle +(1,1);
					}
				}
				\foreach \i in {-4,...,4}{
					\foreach \j in {-4,...,4}{
						\draw [black!20, fill = white] (\i,\j) rectangle +(1,1);
						\draw [black!20, fill = black!50, opacity=0.3] (\i,\j) rectangle +(1,1);
					}
				}
				\foreach \i in {-2,...,2}{
					\foreach \j in {-2,...,2}{
						\draw [fill = black] (6.5+\i,0.5+\j)circle (0.3);
					}
				}
				\node at (-2,-2)  {\scalebox{1}{$A_n$}};
				\node at (-6,-6)  {\scalebox{1}{$B_n$}};
			\end{scope}
			\begin{scope}[shift = {(30,0)}]
				\node at (0,-14)  {\scalebox{1}{$y \in X$}};
				\clip[draw,decorate,decoration={random steps, segment length=4pt, amplitude=2pt}] (-11.8,-11.8) rectangle (11.8,11.8); 
				\draw [black!20] (-12,-12) grid (12,12);

				\foreach \i in {-8,...,8}{
					\foreach \j in {-8,...,8}{
						\draw [black!20, pattern = north east lines, opacity = 0.3] (\i,\j) rectangle +(1,1);
					}
				}
				\foreach \i in {-4,...,4}{
					\foreach \j in {-4,...,4}{
						\draw [black!20, fill = white] (\i,\j) rectangle +(1,1);
						\draw [black!20, fill = black!50, opacity=0.3] (\i,\j) rectangle +(1,1);
					}
				}
				\foreach \i in {-2,...,2}{
					\foreach \j in {-2,...,2}{
						\draw [fill = black] (7.5+\i,0.5+\j)circle (0.3);
					}
				}
				\node at (-2,-2)  {\scalebox{1}{$A_n$}};
				\node at (-6,-6)  {\scalebox{1}{$B_n$}};
			\end{scope}
	
			\begin{scope}[shift = {(60,0)}]
				\node at (0,-14)  {\scalebox{1}{$x_{A_n} \vee y_{A_n^{\complement}} \notin X$}};
				\clip[draw,decorate,decoration={random steps, segment length=4pt, amplitude=2pt}] (-11.8,-11.8) rectangle (11.8,11.8); 
				\draw [black!20] (-12,-12) grid (12,12);

				\foreach \i in {-8,...,8}{
					\foreach \j in {-8,...,8}{
						\draw [black!20, pattern = north east lines, opacity = 0.3] (\i,\j) rectangle +(1,1);
					}
				}
				\foreach \i in {-4,...,4}{
					\foreach \j in {-4,...,4}{
						\draw [black!20, fill = white] (\i,\j) rectangle +(1,1);
						\draw [black!20, fill = black!50, opacity=0.3] (\i,\j) rectangle +(1,1);
					}
				}
				\foreach \i in {-2,...,3}{
					\foreach \j in {-2,...,2}{
						\draw [fill = black] (6.5+\i,0.5+\j)circle (0.3);
					}
				}
				\node at (-2,-2)  {\scalebox{1}{$A_n$}};
				\node at (-6,-6)  {\scalebox{1}{$B_n$}};
			\end{scope}
			
		\end{tikzpicture}
		\caption{Two configurations $x$ and $y$ which coincide in $B_n\setminus A_n$ but cannot be put together.}
		\label{fig:squarestmp}
	\end{figure}
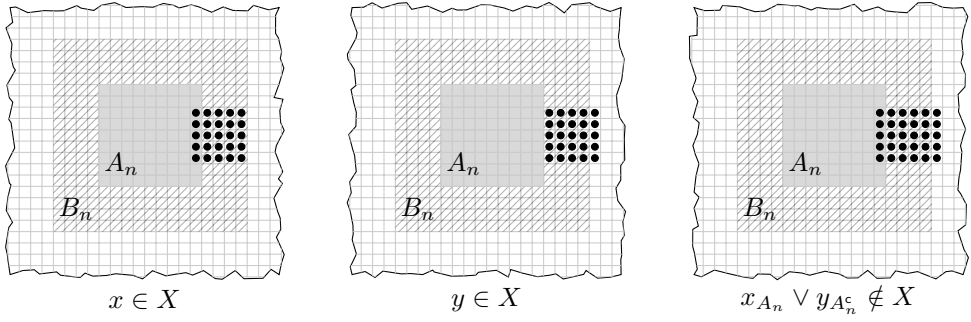

	To see that $X$ has TMP, without loss of generality let $A = [-n,n]^2$.
	We claim that $B \isdef [-10n,10n]^2$ is a memory set for $A$ for every $n \geq 1$. Indeed, let $x,y \in X$ be
	such that $x_{B\setminus A} = y_{B \setminus A}$. We must show that
	$z \isdef x_{A} \vee y_{A^{\complement}} \in X$. 

	First, we claim that we can reduce to the case where $x,y \in Y$, that is, where every
	square is finite.
	Indeed, assume that  $x,y\in X\setminus Y$.  %
	By definition, there are sequences
	$(x^{(m)})_{m \in \NN}$ and $(y^{(m)})_{m \in \NN}$ in~$Y$ converging to
	$x$ and $y$ respectively.
	For $m$ large enough, we have $x^{(m)}_{B\setminus A}=y^{(m)}_{B\setminus A}$.
	The sequence of configurations
	$z^{(m)} \isdef x_B^{(m)} \vee y_{A^{\complement}}^{(m)}$ converges to $z$.
	Thus if we know that each $z^{(m)}$ is in $Y$, then we can conclude that $z \in X$.
	Now, observe that for $m$ large enough, $x_B^{(m)} = x_{B}$ and $y_B^{(m)} = y_{B}$,
	and in particular $x_{B\setminus A}^{(m)}=y_{B\setminus A}^{(m)}$.

	So, let $x,y \in Y$ and suppose that $z \notin Y$. Then there exists an $8$-connected region
	$R_z \subseteq z^{-1}(\unobox)$ which is either an infinite set,
	or a finite set that is not a square.
	The first case cannot occur, as every connected component of $y^{-1}(\unobox)$
	is bounded and $z$ differs from $y$ only on the finite set $A$.
	Therefore, $R_z$ is finite but is not a square.

	As $y \in Y$, we have that $R_z \cap A \neq \varnothing$.
	Let $R_x,R_y \Subset \ZZ^2$ be the finite squares in
	$x^{-1}(\unobox)$ and $y^{-1}(\unobox)$ which contain $R_z \cap A$ and
	$R_z \cap A^{\complement}$ respectively.
	Denote the $8$-boundary of $A$ by $\partial A$,
	that is, $\partial A\isdef[-n-1,n+1]^2\setminus[-n,n]^2$.
	Let $N\isdef [-n-1,n+1] \times \{n+1\}$, $S\isdef [-n-1,n+1] \times \{-n-1\}$,
	$W\isdef \{-n-1\} \times [-n-1,n+1]$ and $E \isdef\{n+1\} \times [-n-1,n+1]$
	be respectively the north, south, west and east $8$-boundaries of $A$,
	so that $\partial A= N\cup E\cup S\cup W$.
	There are three possibilities on how $R_x$ can intersect $\partial A$.

	\begin{itemize}
		\item If $R_x \cap \partial A = \varnothing$, then necessarily
			$R_x \subseteq A$, and since $x_{B\setminus A} = y_{B \setminus A}$,
			we conclude that $R_z = R_x$, which is a square. This contradicts the assumption.
		\item If $R_x$ intersects only one of the sets $N$, $S$, $W$ and $E$,
			then the size of $R_x$ %
			can be at most $(2n+1)\times(2n+1)$, the size of $A$.
			As $R_x$ intersects $A$, %
			we deduce that $R_x\cup\partial R_x\subseteq[-10n,10n]=B$.
			Again, this implies that $R_z = R_x$,
			contradicting the assumption.
		\item If $R_x$ intersects two (or more) of the boundaries $N$, $S$, $W$ and $E$,
			such boundaries must themselves intersect,
			otherwise $R_x$ is not a square.
			Observe that if $\unobox$ appears on two diagonally adjacent sites
			in any configuration from $Y$,
			that is, either the pattern
			\begin{tikzpicture}[scale = 0.2,baseline=1pt]
				\draw (1,0) rectangle +(1,1);
				\draw (0,1) rectangle +(1,1);
				\draw (1,1) rectangle +(1,1);
				\draw (0,0) rectangle +(1,1);
				\node at (1.5,0.5) {\scalebox{0.5}{$\textbf{?}$}};
				\node at (0.5,1.5) {\scalebox{0.5}{$\textbf{?}$}};
				\draw [fill = black] (0.5,0.5) circle (0.3);
				\draw [fill = black] (1.5,1.5) circle (0.3);
			\end{tikzpicture}
			or the pattern
			\begin{tikzpicture}[scale = 0.2,baseline=1pt]
				\draw (1,0) rectangle +(1,1);
				\draw (0,1) rectangle +(1,1);
				\draw (1,1) rectangle +(1,1);
				\draw (0,0) rectangle +(1,1);
				\node at (0.5,0.5) {\scalebox{0.5}{$\textbf{?}$}};
				\node at (1.5,1.5) {\scalebox{0.5}{$\textbf{?}$}};
				\draw [fill = black] (0.5,1.5) circle (0.3);
				\draw [fill = black] (1.5,0.5) circle (0.3);
			\end{tikzpicture}
			appears, then %
			the said pattern is necessarily
			\begin{tikzpicture}[scale = 0.2,baseline=1pt]
				\draw (1,0) rectangle +(1,1);
				\draw (0,1) rectangle +(1,1);
				\draw (1,1) rectangle +(1,1);
				\draw (0,0) rectangle +(1,1);
				\draw [fill = black] (0.5,0.5) circle (0.3);
				\draw [fill = black] (1.5,1.5) circle (0.3);
				\draw [fill = black] (0.5,1.5) circle (0.3);
				\draw [fill = black] (1.5,0.5) circle (0.3);
			\end{tikzpicture}
			(i.e.~$\textbf{?}=\unobox$).
			Thus, in the current case, the information on $x_{\partial A}$
			uniquely determines $A\cap (R_z\cup\partial R_z)$.
			As $\partial A \subseteq B$ and $x_{B\setminus A} = y_{B \setminus A}$
			we conclude that $x_{A \cap R_z} = y_{A \cap R_z} \equiv \unobox$
			and $x_{A\cap\partial R_z}=y_{A\cap\partial R_z}\equiv\cerobox$.
			Thus $A \cap R_z = A \cap R_x = A \cap R_y$ and hence $R_z = R_y$ which is a square.
			This contradicts the assumption.
	\end{itemize} 

	As this example satisfies TMP, the conclusion of the Lanford--Ruelle direction
	holds even though $X$ is not an SFT nor satisfies strong TMP.
	\hfill\exampleqed
\end{example}

In Section~\S\ref{sec:groupshifts}, we will see another example of a subshift that satisfies TMP but not strong TMP.

\subsection{A topology on measures and the Feller property of Gibbs kernels}

An important consequence of the relative TMP %
is the following continuity property of the Gibbs kernels.

\begin{proposition}[Relative Feller property of Gibbs kernels]
\label{prop:kernels:relative-Feller}
	Let $\Omega$ be a relative system and $K_A$ be the Gibbs kernel for $A\Subset\GG$ associated to
	a relative absolutely summable interaction. The following are equivalent.
	\begin{enumerate}[label={\rm (\alph*)},ref={(\alph*)}]
		\item $\Omega$ has the topological Markov property
		relative to $\Theta$. \label{prop:kernels:relative-Feller_a}
		\item For every $A \Subset G$ and $p\in\Sigma^A$, the function $K_A(\cdot,[p])$
		is in $C_\Theta(\Omega)$. \label{prop:kernels:relative-Feller_b}
		\item For every $A \Subset G$ we have $K_A f\in C_\Theta(\Omega)$ whenever $f\in C_\Theta(\Omega)$. \label{prop:kernels:relative-Feller_c}
	\end{enumerate}
\end{proposition}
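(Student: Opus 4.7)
My plan is to prove the cycle (c) $\Rightarrow$ (b) $\Rightarrow$ (a) $\Rightarrow$ (b) $\Rightarrow$ (c), which yields both stated equivalences. The implication (c) $\Rightarrow$ (b) is immediate: $\mathbbm{1}_{[p]}$ is bounded and relatively local, hence lies in $C_\Theta(\Omega)$, and $K_A\mathbbm{1}_{[p]} = K_A(\cdot,[p])$ by definition.

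For (b) $\Rightarrow$ (c), I would first promote (b) to any base $B \supseteq A$: since $B \setminus A \subseteq A^\complement$, the definition of the kernel gives $K_A((\theta,x),[r]) = \mathbbm{1}_{[r_{B\setminus A}]}(\theta,x)\, K_A((\theta,x),[r_A])$ for each $r \in \Sigma^B$, a product of a relatively local indicator and an element of $C_\Theta(\Omega)$. Any bounded relatively local observable $f$ is a finite linear combination of terms $F(\theta)\,\mathbbm{1}_{[r]}$ with $F$ bounded and measurable in $\theta$ alone; since multiplication by such $F$ preserves equicontinuity in $x$, $K_A f$ lies in $C_\Theta(\Omega)$ for relatively local $f$. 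Density of the relatively local observables in $C_\Theta(\Omega)$ combined with the contraction $\norm{K_A f} \leq \norm{f}$ then extends the conclusion to every $f \in C_\Theta(\Omega)$.

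For (a) $\Rightarrow$ (b), fix $A$ and let $B$ be a memory set supplied by the relative weak TMP. The key observation is that the admissibility indicator $\mathbbm{1}[x_{A^\complement} \vee p \in X_\theta]$, viewed as a function on $\Omega$, depends only on $(\theta, x_{B \setminus A})$: indeed, if $x, x' \in X_\theta$ agree on $B \setminus A$ and $y \isdef x_{A^\complement} \vee p \in X_\theta$, then $y_{B\setminus A} = x'_{B\setminus A}$, so relative weak TMP yields $y_B \vee x'_{A^\complement} = x'_{A^\complement} \vee p \in X_\theta$ (and the converse holds symmetrically). The numerator of the Boltzmann weight is therefore the product of a relatively local indicator and the relatively continuous function $e^{-E_{A|A^\complement}(\theta,\, x_{A^\complement}\vee p)}$ (inheriting relative continuity from $E_{A|A^\complement}$), and the partition function $Z_{A|A^\complement}$ is a finite sum of such terms, uniformly bounded below by $e^{-\norm{E_{A|A^\complement}}}$ since the summand $q \isdef x_A$ always contributes. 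Hence the ratio $K_A(\cdot,[p])$ lies in $C_\Theta(\Omega)$.

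The main obstacle I anticipate is the step (b) $\Rightarrow$ (a), which has to extract a combinatorial statement from a merely quantitative continuity hypothesis. The plan rests on a gap dichotomy: on $\Omega$ each value $K_A((\theta,x),[p])$ lies in $\{0\} \cup [c,1]$, where $c \isdef \abs{\Sigma}^{-\abs{A}} e^{-2\norm{E_{A|A^\complement}}} > 0$, because $Z_{A|A^\complement}(\theta,x) \leq \abs{\Sigma}^{\abs{A}} e^{\norm{E_{A|A^\complement}}}$ while any nonzero numerator is at least $e^{-\norm{E_{A|A^\complement}}}$. Using (b), for each $p \in \Sigma^A$ I pick a finite $C_p \Subset \GG$ such that $K_A(\cdot,[p])$ varies by less than $c/2$ whenever configurations agree on $C_p$; the dichotomy then forces the indicator $\mathbbm{1}[x_{A^\complement}\vee p \in X_\theta]$ to depend only on $(\theta, x_{C_p \setminus A})$. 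Setting $B \isdef A \cup \bigcup_{p \in \Sigma^A} C_p$ (a finite union) and specialising $p \isdef x_A$ recovers the memory-set property for $A$, which is exactly the relative weak TMP.
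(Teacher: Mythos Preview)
Your proposal is correct and follows essentially the same approach as the paper: the gap-dichotomy argument for (b)~$\Rightarrow$~(a) and the admissibility-indicator-is-local observation for (a)~$\Rightarrow$~(b) match the paper's proof almost exactly (the paper writes the lower bound as $\varepsilon \isdef \tfrac{1}{2}\inf_{(\theta,x)} K_A((\theta,x),[x_A])$ rather than your explicit $c$, but this is cosmetic). The only structural difference is that the paper closes the cycle via (a)~$\Rightarrow$~(c) directly from the explicit formula for $K_A f$, whereas you insert the intermediate step (b)~$\Rightarrow$~(c) via the density of relatively local observables; both routes use the same ingredients.
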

\begin{proof}\
	\begin{description}[font=\rmfamily\mdseries] %
	\item[\ref{prop:kernels:relative-Feller_a}~$\implies$~\ref{prop:kernels:relative-Feller_c}]
		By definition of the Gibbs kernel, we have
		\begin{align}
			\label{eq:kernels:relative-Feller}
			(K_A f)(\theta,x) &=
				\frac{1}{Z_{A|A^\complement}(\theta,x)}\sum_{u\in \Sigma^A}
					\ee^{-E_{A|A^\complement}(\theta,x_{A^\complement}\lor u)}
					\indicator{X_\theta}(x_{A^\complement}\lor u)
					f(\theta,x_{A^\complement}\lor u) \;.
		\end{align}
		The map $E_{A|A^\complement}$ is relatively continuous
		and $x\mapsto x_{A^\complement}\lor u$ is uniformly continuous. Let $B\supseteq A$ be a memory set for $A$ witnessing the relative TMP. %
		Then, for every $\theta\in\Theta$,
		we have $\indicator{X_\theta}(x_{A^\complement}\lor u)=\indicator{L(X_\theta)}(x_{B\setminus A}\lor u)$,
		from which we can see that the maps $x\mapsto\indicator{X_\theta}(x_{A^\complement}\lor u)$
		are equicontinuous for $\theta\in\Theta$.
		The partition function $Z_{A|A^\complement}(\theta,x)$ has the same form as the sum in~\eqref{eq:kernels:relative-Feller}
		with $f$ replaced with constant~$1$.
		Therefore, if $f\in C_\Theta(\Omega)$, the functions $x\mapsto(K_A f)(\theta,x)$
		are equicontinuous for $\theta\in\Theta$, which means $K_A f\in C_\Theta(\Omega)$.
	
	\item[\ref{prop:kernels:relative-Feller_c}~$\implies$~\ref{prop:kernels:relative-Feller_b}]
		The function $f \isdef \indicator{[p]}$ is in $C_\Theta(\Omega)$, therefore $K_A(\cdot,[p])$
		is in $C_\Theta(\Omega)$.
	
	\item[\ref{prop:kernels:relative-Feller_b}~$\implies$~\ref{prop:kernels:relative-Feller_a}]
		Let %
		$\varepsilon  \isdef \frac{1}{2}\inf_{(\theta,x) \in \Omega} K_A((\theta,x),[x_A])$.
		Since $E_{A|A^\complement}$ is bounded,
		we deduce that $\varepsilon > 0$. %
		By~\ref{prop:kernels:relative-Feller_b}
		we know that for every $p \in \Sigma^A$ we have $K_A(\cdot,[p]) \in C_\Theta(\Omega)$.
		Since $K_A((\theta,x), [p])$ depends only on $\theta$, $x_{A^{\complement}}$ and $p$,
		we can find $B \supseteq A$ such that for all $\theta \in \Theta$, $p \in \Sigma^A$ and $x,y \in X_\theta$,
		if $x_{B \setminus A}= y_{B \setminus A}$, we have
		\begin{align}
		\bigabs{K_A((\theta,x), [p]) - K_A((\theta,y), [p])} &< \varepsilon.
		\end{align}
		In particular, we obtain that if $x,y \in X_\theta$ and $x_{B \setminus A} = y_{B \setminus A  }$, then
		\begin{align}
		\bigabs{K_A((\theta,x), [x_A]) - K_A((\theta,y), [x_A])}
			&< \varepsilon \leq \frac{1}{2}K_A((\theta,x), [x_A])
		\end{align}
		and so $K_A((\theta,y), [x_A])\geq\varepsilon > 0$.  This shows that $x_A \lor y_{A^\complement} \in X_\theta$.
		As the choice of $B$ does not depend upon $\theta$ or $x,y \in X_{\theta}$ we deduce that $\Omega$ satisfies relative TMP.
		\qedhere
	\end{description}
\end{proof}

Let $\xspace{P}_\nu(\Omega)$ denote the space of probability measures
on $\Omega$ with marginal $\nu$ on $\Theta$.
The above proposition suggests topologizing $\xspace{P}_\nu(\Omega)$
by declaring the integration $\mu\mapsto\mu(f)$ continuous for each $f\in C_\Theta(\Omega)$.
The operator $\mu\mapsto\mu K_A$ would then become continuous %
whenever $\Omega$ has the TMP 
relative to $\Theta$.

Recall that $C_\Theta(\Omega)$ is a Banach space with the uniform norm.
When the environment space $\Theta$ is a standard Borel space,
one can identify $\xspace{P}_\nu(\Omega)$
with a closed subset of the dual space $C^*_\Theta(\Omega)$ (Proposition~\ref{prop:Riesz:relative}).
Alaoglu's theorem then implies that
the space $\xspace{P}_\nu(\Omega)$ is compact.
We will use the compactness of $\xspace{P}_\nu(\Omega)$
only at one point in the proof of Theorem~\ref{thm:DLR:relative}\ref{thm:DLR:LR:relative}
to argue that if $\nu$ is $\GG$-invariant and $\mu\in\xspace{P}_\nu(\Omega)$,
then the sequence of averages $\abs{F_n}^{-1}\sum_{g\in F_n}g^{-1}\mu$,
with $(F_n)_{n\in\NN}$ a F\o{}lner sequence, has a ($\GG$-invariant) cluster point.

At the more fundamental level,
the compactness of $\xspace{P}_\nu(\Omega)$ together with the relative Feller property of
the Gibbs kernels can be used to give
a direct proof of the existence of (invariant) relative Gibbs measures.

\begin{proposition}[Existence of invariant relative Gibbs measures]
\label{prop:existence:Gibbs}
    Assume that $\Theta$ is a standard Borel space and $\Omega$ satisfies TMP
    relative to~$\Theta$. 
    Let $\nu$ be a $\GG$-invariant probability measure on $\Theta$.  Then there
    exists a $\GG$-invariant relative Gibbs measure with marginal $\nu$.
\end{proposition}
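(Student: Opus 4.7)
The plan is to produce a relative Gibbs measure with marginal $\nu$ first (without any invariance), and then Ces\`aro-average it over the $\GG$-action along a F\o{}lner sequence. First I need $\xspace{P}_\nu(\Omega)\neq\varnothing$: since $\Theta$ is standard Borel and the fibers $X_\theta$ are closed in the Polish space $\Sigma^\GG$, the Jankov--von Neumann uniformization theorem provides a $\nu$-measurable selector $s\colon\Theta\to\Sigma^\GG$ with $(\theta,s(\theta))\in\Omega$ for $\nu$-a.e.~$\theta$ (implicitly assuming $X_\theta\neq\varnothing$ $\nu$-a.s., as is necessary for the conclusion); the push-forward of $\nu$ under $\theta\mapsto(\theta,s(\theta))$ then yields a point $\mu_0\in\xspace{P}_\nu(\Omega)$.

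Next, fix a F\o{}lner sequence $(F_n)_{n\in\NN}$ and set $\mu_n\isdef\mu_0 K_{F_n}$. Since $K_{F_n}$ only resamples the $\Sigma^\GG$-coordinate, each $\mu_n$ stays in the compact space $\xspace{P}_\nu(\Omega)$, so I can extract a weak cluster point $\mu^\star$. For any $A\Subset\GG$, eventually $A\subseteq F_n$, and the standard compatibility of Gibbs specifications (namely $\mu K_B K_A=\mu K_B$ whenever $A\subseteq B$, an elementary consequence of the Boltzmann formula~\eqref{eq:boltzman-dist}) yields $\mu_n K_A=\mu_n$ for all large $n$. By Proposition~\ref{prop:kernels:relative-Feller}, the relative weak TMP makes $\mu\mapsto\mu K_A$ continuous on $\xspace{P}_\nu(\Omega)$, so passing to the limit gives $\mu^\star K_A=\mu^\star$; since $A$ was arbitrary, $\mu^\star$ is a relative Gibbs measure with marginal $\nu$.

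Finally, let $\xspace{G}_\nu(\Phi)\subseteq\xspace{P}_\nu(\Omega)$ denote the set of relative Gibbs measures with marginal $\nu$. It is convex (each $\mu\mapsto\mu K_A$ is affine) and closed (once more by Proposition~\ref{prop:kernels:relative-Feller}), hence a compact convex set; it is also $\GG$-stable because $\Phi$ is translation symmetric and $\nu$ is $\GG$-invariant. I then form the Ces\`aro averages $\tilde\mu_n\isdef\abs{F_n}^{-1}\sum_{g\in F_n}g\mu^\star\in\xspace{G}_\nu(\Phi)$, extract a cluster point $\bar\mu\in\xspace{G}_\nu(\Phi)$, and invoke the F\o{}lner property to conclude $h\bar\mu=\bar\mu$ for every $h\in\GG$; this $\bar\mu$ is the desired $\GG$-invariant relative Gibbs measure with marginal $\nu$.

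The delicate step is the middle paragraph: propagating the identity $\mu_n K_A=\mu_n$ through the weak limit requires simultaneously the compactness of $\xspace{P}_\nu(\Omega)$ (hence the standard Borel hypothesis on $\Theta$) and the continuity of the Gibbs kernels (hence the relative weak TMP). Both hypotheses of the proposition are thus used essentially and precisely here; the initial seeding and the final amenable averaging are routine.
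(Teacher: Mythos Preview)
Your argument is essentially the same as the paper's: start from some $\mu_0\in\xspace{P}_\nu(\Omega)$, push it through Gibbs kernels along an exhausting sequence of finite sets, pass to a cluster point using compactness of $\xspace{P}_\nu(\Omega)$ and the relative Feller property, and then Ces\`aro-average along a F\o{}lner sequence. You add the explicit construction of $\mu_0$ via a measurable selector, which the paper leaves implicit.

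One small point: you write ``for any $A\Subset\GG$, eventually $A\subseteq F_n$'', but a F\o{}lner sequence need not exhaust $\GG$ (e.g.\ $F_n=\{n,\ldots,2n\}$ in $\ZZ$). The paper separates the two roles, using an arbitrary nested exhausting sequence $A_1\subseteq A_2\subseteq\cdots$ to produce the Gibbs measure and only then a F\o{}lner sequence for the averaging; alternatively you can simply stipulate that your F\o{}lner sequence is chosen exhausting. This is a minor fix and does not affect the substance of your proof.
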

\begin{proof}
	Since $\Theta$ is a standard Borel space, $\xspace{P}_\nu(\Omega)$ is compact.
	Since $\Omega$ satisfies relative TMP, the Gibbs kernels have the relative Feller property.
	Let $A_1 \subseteq A_2 \subseteq \ldots, $ be a nested sequence of finite
	subsets that exhaust  $\GG$. Let $\mu_0 \in \xspace{P}_\nu(\Omega)$.
	For $n\geq 1$, set $\mu_n \isdef \mu_0 K_{A_n}$.  Then for all $B \subseteq A_n$,
	$\mu_nK_B =  \mu_0 K_{A_n}K_B =   \mu_0 K_{A_n} = \mu_n$.
	So, using the topology on $P_\nu(\Omega)$ and the relative Feller property of the Gibbs kernels,
	any accumulation point $\mu$ of the sequence
	$\mu_n$ is a relative Gibbs measure.  It follows that for any
	$g \in \GG$, $g^{-1}\mu$ is also a relative  Gibbs measure.
	Thus, each $(1/\abs{A})\sum_{g \in A} g^{-1}\mu$ is a relative Gibbs measure.
	For any F\o{}lner sequence $F_n$, any accumulation point of
	\begin{align}
		(1/\abs{F_n})\sum_{g \in F_n} g^{-1}\mu
	\end{align}
	is a $\GG$-invariant relative Gibbs measure with marginal $\nu$.
	The existence of such accumulation points is guaranteed by the compactness of $P_\nu(\Omega)$.
\end{proof}

An example of a subshift on which Gibbs measures (invariant or not) do not exist
is the \emph{sunny-side up} shift $X\subseteq\{\symb{0},\symb{1}\}^\ZZ$,
which is defined as the set of all configurations with at most one occurrence of symbol~$\symb{1}$.

The following crude notion of closeness
between measures will be sufficient for our purposes.

\begin{proposition}[Closeness of measures]
\label{prop:measures:topology}
	Let $\nu$ be a probability measure on $\Theta$ and let $f\in C_\Theta(\Omega)$.
	For every $\varepsilon>0$,
	there exists $B\Subset\GG$ such that
	$\abs{\mu'(f)-\mu(f)}<\varepsilon$ whenever $\mu,\mu'\in\xspace{P}_\nu(\Omega)$
	satisfy $\mu'|_{\field{F}_\Theta\lor\xi^B}=\mu|_{\field{F}_\Theta\lor\xi^B}$
	\textup{(}i.e., $\mu$ and $\mu'$ have the same marginal on $(\theta,x_B)$\textup{)}.
\end{proposition}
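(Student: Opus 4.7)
The plan is to reduce the comparison of $\mu(f)$ and $\mu'(f)$ to a comparison of integrals of a bounded $(\field{F}_\Theta\lor\xi^B)$-measurable observable, which is forced to be an exact equality by the hypothesis that $\mu$ and $\mu'$ agree on this $\sigma$-algebra. The reduction is supplied by the density of bounded relatively local observables in $C_\Theta(\Omega)$, a fact asserted earlier in this section.

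Concretely, given $\varepsilon>0$, I would use the density statement to pick a bounded relatively local $\tilde{f}\in C_\Theta(\Omega)$ with $\norm{\tilde{f}-f}<\varepsilon/3$, and let $B\Subset\GG$ be a finite set such that $\tilde{f}$ is $(\field{F}_\Theta\lor\xi^B)$-measurable (which exists by the very definition of relative locality). If $\mu,\mu'\in\xspace{P}_\nu(\Omega)$ restrict to the same measure on $\field{F}_\Theta\lor\xi^B$, then boundedness and $(\field{F}_\Theta\lor\xi^B)$-measurability of $\tilde{f}$ force $\mu(\tilde{f})=\mu'(\tilde{f})$. A triangle inequality then closes the argument:
\begin{align*}
\abs{\mu(f)-\mu'(f)}
\leq \abs{\mu(f-\tilde{f})} + \abs{\mu(\tilde{f})-\mu'(\tilde{f})} + \abs{\mu'(\tilde{f}-f)}
\leq 2\norm{f-\tilde{f}} < \varepsilon.
\end{align*}

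The real content of the proof therefore sits inside the density claim. That claim in turn rests on the equicontinuity built into the definition of $C_\Theta(\Omega)$: given $\delta>0$, one first picks $B_0\Subset\GG$ so that $\abs{f(\theta,x)-f(\theta,y)}<\delta$ whenever $(\theta,x),(\theta,y)\in\Omega$ satisfy $x_{B_0}=y_{B_0}$, and then manufactures a $\delta$-close $(\field{F}_\Theta\lor\xi^{B_0})$-measurable approximant. The only subtlety I anticipate is producing such an approximant when $\Theta$ is merely a measurable space: this can be handled either by a measurable-selection argument, choosing one allowed extension of each pattern on $B_0$, or, to avoid a selection hypothesis, by taking a conditional expectation against a fixed reference probability measure on $\Omega$ and verifying via equicontinuity that the resulting function is within $\delta$ of $f$ everywhere on $\Omega$. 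I expect this approximation/measurability step to be the only place in the argument that requires genuine care; once $\tilde{f}$ is in hand, the three-$\varepsilon$ wrap-up above is entirely routine.
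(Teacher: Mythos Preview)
Your argument is correct. The paper takes a different, more direct route that avoids invoking the density of relatively local observables: after choosing $B$ by equicontinuity so that $\abs{f(\theta,x')-f(\theta,x)}<\varepsilon$ whenever $x'_B=x_B$, it writes
\[
\mu'(f)-\mu(f)=\int\Bigl[\mu'\bigl(f\,\big|\,\field{F}_\Theta\lor\xi^B\bigr)-\mu\bigl(f\,\big|\,\field{F}_\Theta\lor\xi^B\bigr)\Bigr]\dd\mu,
\]
using that both conditional expectations are $(\field{F}_\Theta\lor\xi^B)$-measurable and that $\mu,\mu'$ agree on that $\sigma$-algebra to swap $\dd\mu'$ for $\dd\mu$; equicontinuity then pins each conditional expectation to the interval $[\inf_{y_B=x_B}f(\theta,y),\sup_{y_B=x_B}f(\theta,y)]$ of width at most $\varepsilon$, bounding the integrand. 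This is essentially the ``conditional expectation against a reference measure'' idea you sketched for proving the density claim, except that the paper uses $\mu$ and $\mu'$ themselves as the references, which makes a single universal approximant unnecessary and sidesteps the measurable-selection issue you correctly flagged. Your route is arguably cleaner once the density statement is taken as given (the paper does assert it), while the paper's route is more self-contained and makes no demands on the structure of $\Theta$ beyond what is already assumed.
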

\begin{proof}
	Since $f\in C_\Theta(\Omega)$,
	the family $(f(\theta,\cdot):\theta\in\Theta)$ is equicontinuous.
	Let $B\Subset\GG$ be such that
	$\abs{f(\theta,x')-f(\theta,x)}<\varepsilon$ whenever
	$x'_B=x_B$.
	We have
	\begin{align}
		\abs{\mu'(f)-\mu(f)} &=
			\abs{\int \mu'\big(f\,\big|\,\field{F}_\Theta\lor\xi^B\big)\dd\mu' -
			\int \mu\big(f\,\big|\,\field{F}_\Theta\lor\xi^B\big)\dd\mu} \\
		&=
			\abs{\int\Big[
				\mu'\big(f\,\big|\,\field{F}_\Theta\lor\xi^B\big) -
				\mu\big(f\,\big|\,\field{F}_\Theta\lor\xi^B\big)
			\Big]\dd\mu} \\
		&< \varepsilon \;. \\
		& \qedhere
	\end{align}
\end{proof}

\section{Proof of the main theorem}
\label{sec:proof}

\subsection{Relative Gibbs measures are relative equilibrium}
\label{sec:proof:relative-D}

\begin{proof}[Proof of Theorem~\ref{thm:DLR:relative}\ref{thm:DLR:D:relative}]
	Let $\mu$ be a $\GG$-invariant measure on $\Omega$ that projects to $\nu$
	and is relative Gibbs for $\Phi$.
	Let $\mu'$ be another $\GG$-invariant measure that projects to $\nu$.
	We show that $\psi(\mu')\leq\psi(\mu)$.
	
	Let $K=(K_A:A\Subset\GG)$ be the relative Gibbs specification associated to $\Phi$.
	Let $(F_n)_{n\in\NN}$ be a F\o{}lner sequence in $\GG$ with respect to which
	$\Omega$ is D-mixing relative to $\nu$,
	and	denote by $F^\theta_n$ the mixing set corresponding to $F_n$ and $\theta$
	that witnesses the D-mixing condition. %
	Fix $n\in\NN$.
	For $\nu$-almost every $\theta \in \Theta$ and every $x \in X_\theta$,
	let $\mu''_{\theta,x}$ %
	be a measure on $\Omega$ %
	which has the same distribution as
	$\mu'(\cdot | \field{F}_\Theta)(\theta,x)$ on $\xi^{F_n}$ and is supported on
	$\{(\theta, y) \in \Omega : y_{(F^\theta_n)^\complement} = x_{(F^\theta_n)^\complement} \}$.
	We choose $\mu''_{\theta,x}$ in such a way that for every cylinder $[u]$, the
	value $\mu''_{\theta,x}([u])$ is measurable as a function of $(\theta,x)$.

	Observe that
	\begin{align}
		H_{\mu'(\cdot | \field{F}_\Theta)(\theta,x)}\big(\xi^{F_n}\big)
		- \mu'(E_{F_n} \,|\, \field{F}_\Theta)(\theta,x)
		&=
		H_{\mu''_{\theta,x}}(\xi^{F_n})
		- \mu''_{\theta,x}(E_{F_n}) \;,
	\end{align}
	and hence
	\begin{align}
		\label{eq:proof:relative-D:same-distribution}
		\smash{\overbrace{%
			H_{\mu'}\big(\xi^{F_n} \,\big|\, \field{F}_\Theta\big) - \mu'(E_{F_n})
		}^{\Psi_{\mu'}(F_n)}}
		&= \int
		\big[%
		\smash{\overbrace{%
		H_{\mu''_{\theta,x}}(\xi^{F_n})- \mu''_{\theta,x}(E_{F_n})
		}^{\mbox{$\circled{\circ}$}}}%
		\big] \dd\mu(\theta,x) \;,
	\end{align}
	where we have used the fact that $\mu$ and $\mu'$ have the same marginals on $\Theta$
	and that $\circled{\circ}$ does not depend on $x$.
	On the other hand, by the finitary variational principle
	(Prop.~\ref{prop:gibbs-inequality}, or Cor.~\ref{cor:gibbs:local-optimality}), we have
	\begin{align}
		\label{eq:proof:relative-D:variational-principle}
		\underbrace{%
		H_{\mu''_{\theta,x}}(\xi^{F_n^\theta})
		- \mu''_{\theta,x}(E_{F_n^\theta|(F_n^\theta)^\complement})
		}_{\mbox{$\circled{{\bullet}}$}}
		&\leq
		\underbrace{%
		H_{K_{F_n^\theta}((\theta,x),\cdot)}\big(\xi^{F^\theta_n}\big)
		- K_{F_n^\theta}\big((\theta,x), E_{F_n^\theta | (F^\theta_n)^\complement}\big)
		}_{\mbox{$\circled{\star}$}} \;.
	\end{align}
	Here, we are applying this variational principle
	to the set $M\isdef\{u\in\Sigma^{F^\theta_n}: x_{(F^\theta_n)^\complement}\lor u\in X_\theta\}$,
	the energy function
	$U(u)\isdef E_{F^\theta_n|(F^\theta_n)^\complement}(\theta,x_{(F^\theta_n)^\complement}\lor u)$, and
	the distribution $p(u)\isdef\mu''_{\theta,x}([u])$.
	
	Combining~\eqref{eq:Hamiltonian:residue:Folner} and~\eqref{eq:Hamiltonian:different-sets:residue},
	we have
	\begin{align}
		\label{eq:Hamiltonian:residue:mixed}
		\norm{E_{F_n^\theta|(F_n^\theta)^\complement}-E_{F_n}} &\leq
			\abs{F_n^\theta\setminus F_n}\norm{\Phi} + \smallo(\abs{F_n})
	\end{align}
	as $n\to\infty$, with the $\smallo(\abs{F_n})$ term not depending on $(\theta,x)$.
	Therefore,
	\begin{align}
		\abs{\circled{\bullet}-\circled{\circ}} &\leq
			\abs{F_n^\theta\setminus F_n}(\log\abs{\Sigma} + \norm{\Phi}) + \smallo(\abs{F_n}) \;.
	\end{align}
	Integrating with respect to $\mu$ and using the relative D-mixing condition, we get
	\begin{align}
		\abs{\int\circled{\bullet}\,\dd\mu - \int\circled{\circ}\,\dd\mu} &\leq
			(\log\abs{\Sigma} + \norm{\Phi})\int\abs{F_n^\theta\setminus F_n}\dd\mu(\theta,x) + \smallo(\abs{F_n})\\
		\label{eq:proof:relative-D:error-A}
		&=
			\smallo(\abs{F_n})
	\end{align}
	as $n\to\infty$.
	
	For $\circled{\star}$, on the other hand, we have
	\begin{align}
		\int\circled{\star}\,\dd\mu
		&= \label{eq:proof:relative-D:error-F}
			\int\mu(\circled{\star}\,|\,\field{F}_\Theta)(\theta,x)\,\dd\mu(\theta,x) \\
		\label{eq:proof:relative-D:acrobatics} &=
			\int
			\Big[
			H_{\mu(\cdot|\field{F}_\Theta)(\theta,x)}\big(\xi^{F^\theta_n}\,\big|\, \xi^{(F^\theta_n)^\complement}\big)
			- \mu\big(E_{F_n^\theta|(F^\theta_n)^\complement}\,\big|\,\field{F}_\Theta\big)(\theta,x)
			\Big]
			\dd\mu(\theta,x) \\
		&\leq
			\begin{multlined}[t]
			\int
			\Big[
			H_{\mu(\cdot|\field{F}_\Theta)}\big(\xi^{F_n}\,\big|\, \xi^{F_n^\complement}\big)
			- \mu\big(E_{F_n}\,\big|\,\field{F}_\Theta\big)
			\Big]
			\dd\mu \\
			+ (\log\abs{\Sigma} + \norm{\Phi})\int\abs{F_n^\theta\setminus F_n}\dd\mu(\theta,x) + \smallo(\abs{F_n})
			\end{multlined} \\
		\label{eq:proof:relative-D:error-C} &=
			H_\mu\big(\xi^{F_n}\,\big|\, \xi^{F_n^\complement} \lor \field{F}_\Theta\big)
			- \mu(E_{F_n}) + \smallo(\abs{F_n}) \\
		\label{eq:proof:relative-D:error-B} &\leq
			{\underbrace{
			H_\mu\big(\xi^{F_n}\,\big|\, \field{F}_\Theta\big)
			- \mu(E_{F_n})
			}_{\Psi_\mu(F_n)}} + \smallo(\abs{F_n})
	\end{align}
	as $n\to\infty$, where we have again used~\eqref{eq:Hamiltonian:residue:mixed}.
	The equality between $\mu(\circled{\star}\,|\,\field{F}_\Theta)(\theta,x)$
	and the integrand on the right-hand side of~\eqref{eq:proof:relative-D:acrobatics}
	can be seen by partitioning $\Omega$ into countably many $\field{F}_\Theta$-measurable
	subsets over each of which $F_n^\theta$ is constant.
	
	Putting together~\eqref{eq:proof:relative-D:same-distribution},
	\eqref{eq:proof:relative-D:variational-principle},
	\eqref{eq:proof:relative-D:error-A}
	and~\eqref{eq:proof:relative-D:error-B},
	we obtain
	\begin{align}
		\Psi_{\mu'}(F_n) &\leq \Psi_\mu(F_n) + \smallo(\abs{F_n})
	\end{align}
	as $n\to\infty$.
	Dividing by $\abs{F_n}$ and letting $n\to\infty$
	yields $\psi(\mu')\leq\psi(\mu)$ as desired.
\end{proof}

\begin{remark}[Relative inner entropy for Gibbs measures]
\label{rem:inner-entropy}
	A closer look at the proof of Theorem~\ref{thm:DLR:relative}\ref{thm:DLR:D:relative},
	namely~\eqref{eq:proof:relative-D:error-C},
	shows that in fact
	\begin{align}
		\Psi_{\mu'}(F_n) &\leq
			H_\mu\big(\xi^{F_n}\,\big|\, \xi^{F_n^\complement} \lor \field{F}_\Theta\big)
			- \mu(E_{F_n}) + \smallo(\abs{F_n})
	\end{align}
	as $n\to\infty$.
	Choosing $\mu'=\mu$, we obtain
	\begin{align}
		\psi(\mu) &=
			\lim_{n\to\infty}
			\frac{
				H_\mu\big(\xi^{F_n}\,\big|\, \xi^{F_n^\complement} \lor \field{F}_\Theta\big)
			}{\abs{F_n}} - \mu(f_\Phi) \;.
	\end{align}
	In particular, every $\GG$-invariant Gibbs measure relative to $\nu$
	satisfies
	\begin{align}
		h_\mu(\Omega\,|\,\Theta) &=
			\lim_{n\to\infty}
			\frac{
				H_\mu\big(\xi^{F_n}\,\big|\, \xi^{F_n^\complement} \lor \field{F}_\Theta\big)
			}{\abs{F_n}} \;,
	\end{align}
	as long as $\Omega$ is D-mixing relative to $\nu$.
	The corresponding equality in the non-relative setting
	is observed by F\"ollmer and Snell~\cite{FolSne77} and Tempelman~\cite[Sec.~\S5.3]{Tem84}.
	\hfill\remarkqed
\end{remark}

When both relative D-mixing and relative TMP are satisfied,
we can obtain an explicit expression for the maximum pressure
in terms of partition functions,
generalizing the similar expression in the non-relative setting
(see e.g.~\cite[Thm.~3.12]{Rue04}).
Recall the definition of the partition function $Z_{A|A^\complement}(\theta,x)$
for environment $\theta$ and boundary condition $x$ in~\eqref{eq:boltzman-dist}.
Given $A\Subset\GG$ and $\theta\in\Theta$, we may also define the
partition function with \emph{free} boundary condition as
\begin{align}
	Z_A(\theta) &\isdef
		\sum_{u\in L_A(X_\theta)} \ee^{-E_A(\theta,u)} \;,
\end{align}
where $E_A(\theta,u)$ is understood as $E_A(\theta,x)$ for
any $x\in [u]\cap X_\theta$.

\begin{proposition}[Variational principle]
\label{prop:variational-principle}
	Let $\Omega$ be a relative system and $\nu$ a $\GG$-invariant probability measure
	on its environment space $\Theta$.
	Let $\Phi$ be an absolutely summable relative interaction on $\Omega$
	and $f_\Phi$ its associated energy observable.
	Assume that $\Theta$ is a standard Borel space.
	Assume further that $\Omega$ satisfies TMP and D-mixing relative to $\nu$.
	Then,
	\begin{align}
	\label{eq:variational-principle}
		\sup_{\mu\in\xspace{P}_\nu(\Omega)}\big[h_\mu(\Omega\,|\,\Theta)-\mu(f_\Phi)\big]
		&=
			\lim_{n\to\infty}\int\frac{\log Z_{F_n}(\theta)}{\abs{F_n}}\dd\nu(\theta) \;,
	\end{align}
	where $(F_n)_{n\in\NN}$ is a F\o{}lner sequence with respect to which
	the D-mixing condition holds.
	Moreover, every relative $\GG$-invariant Gibbs measure for $\Phi$ with marginal $\nu$
	achieves the supremum in the left hand side of~(\ref{eq:variational-principle}).
\end{proposition}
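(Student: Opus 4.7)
The plan is to prove the variational principle in four steps: an upper bound via the finitary variational principle applied fiber-by-fiber in $\theta$; existence of a $\GG$-invariant relative Gibbs measure which, by Theorem~\ref{thm:DLR:relative}\ref{thm:DLR:D:relative}, attains the supremum; an identity for the pressure of such a Gibbs measure in terms of the constrained partition function $Z_{F_n|F_n^\complement}$; and a comparison of $Z_{F_n|F_n^\complement}$ with $Z_{F_n}$.

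For the upper bound, let $\mu\in\xspace{P}_\nu(\Omega)$ be $\GG$-invariant. For $\nu$-almost every $\theta$, I would apply Proposition~\ref{prop:gibbs-inequality} on the $\theta$-fiber with $M\isdef L_{F_n}(X_\theta)$, $U(u)\isdef E_{F_n}(\theta,u)$ and $p(u)\isdef\mu([u]\,|\,\field{F}_\Theta)(\theta)$ to obtain
\begin{align*}
H_{\mu(\cdot|\field{F}_\Theta)(\theta)}\big(\xi^{F_n}\big)-\mu\big(E_{F_n}\,\big|\,\field{F}_\Theta\big)(\theta) &\leq \log Z_{F_n}(\theta).
\end{align*}
Integrating against $\nu$ (the marginal of $\mu$) gives $\Psi_\mu(F_n)\leq\int\log Z_{F_n}\,\dd\nu$; dividing by $\abs{F_n}$ and invoking~\eqref{eq:interaction-observable:equivalence:2} for the pressure-per-site yields $\psi(\mu)\leq\liminf_n\frac{1}{\abs{F_n}}\int\log Z_{F_n}(\theta)\,\dd\nu(\theta)$. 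This step requires neither mixing nor weak TMP.

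Next, Proposition~\ref{prop:existence:Gibbs} supplies a $\GG$-invariant relative Gibbs measure $\mu^\ast$ with marginal $\nu$, and Theorem~\ref{thm:DLR:relative}\ref{thm:DLR:D:relative} certifies $\psi(\mu^\ast)=\sup_\mu\psi(\mu)$. On $\mu^\ast$ the Gibbs property $\mu^\ast(\cdot\,|\,\field{F}_\Theta\lor\xi^{F_n^\complement})(\theta,x)=K_{F_n}((\theta,x),\cdot)$ turns Proposition~\ref{prop:gibbs-inequality} into a pointwise equality at the Gibbs kernel, and integrating yields $\Psi_{\mu^\ast}(F_n\,|\,F_n^\complement)=\int\log Z_{F_n|F_n^\complement}(\theta,x)\,\dd\mu^\ast(\theta,x)$. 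Combined with Remark~\ref{rem:inner-entropy} (which uses relative D-mixing), this gives
\begin{align*}
\psi(\mu^\ast) &= \lim_n\frac{1}{\abs{F_n}}\int\log Z_{F_n|F_n^\complement}(\theta,x)\,\dd\mu^\ast(\theta,x).
\end{align*}

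It remains to compare the two partition-function integrals. Pointwise, $Z_{F_n|F_n^\complement}(\theta,x)\leq\ee^{\smallo(\abs{F_n})}Z_{F_n}(\theta)$, since the $u$'s summed on the left lie in $L_{F_n}(X_\theta)$ and the weights agree up to $\ee^{\smallo(\abs{F_n})}$ by~\eqref{eq:Hamiltonian:residue:Folner}. For the reverse direction I would pass through $Z_{F_n^\theta|(F_n^\theta)^\complement}(\theta,x)$: relative D-mixing lets each $u\in L_{F_n}(X_\theta)$ be extended across the annulus $F_n^\theta\setminus F_n$ into a pattern compatible with $x_{(F_n^\theta)^\complement}$, producing $Z_{F_n^\theta|(F_n^\theta)^\complement}(\theta,x)\geq\ee^{-\norm{\Phi}\bigabs{F_n^\theta\setminus F_n}-\smallo(\abs{F_n})}Z_{F_n}(\theta)$ via~\eqref{eq:Hamiltonian:different-sets:residue} and~\eqref{eq:Hamiltonian:residue:Folner}. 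The gap between $\int\log Z_{F_n^\theta|(F_n^\theta)^\complement}\,\dd\mu^\ast$ and $\int\log Z_{F_n|F_n^\complement}\,\dd\mu^\ast$ is controlled, using the Gibbs identity on both scales, by an entropy piece $H_{\mu^\ast}(\xi^{F_n^\theta\setminus F_n}\,|\,\field{F}_\Theta\lor\xi^{(F_n^\theta)^\complement})\leq\bigabs{F_n^\theta\setminus F_n}\log\abs{\Sigma}$ from the chain rule and an energy piece bounded by $\bigabs{F_n^\theta\setminus F_n}\norm\Phi$ from~\eqref{eq:Hamiltonian:different-sets:residue}. Integrating and invoking $\int\bigabs{F_n^\theta\setminus F_n}\,\dd\nu=\smallo(\abs{F_n})$ yields $\int\log Z_{F_n|F_n^\complement}\,\dd\mu^\ast=\int\log Z_{F_n}\,\dd\nu+\smallo(\abs{F_n})$, which together with the previous steps proves~\eqref{eq:variational-principle}; the final claim that every $\GG$-invariant relative Gibbs measure with marginal $\nu$ realizes the supremum is immediate from Theorem~\ref{thm:DLR:relative}\ref{thm:DLR:D:relative}. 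The \emph{main obstacle} is this last comparison: without a mixing hypothesis the ratio $Z_{F_n|F_n^\complement}/Z_{F_n}$ may be unbounded because the fixed boundary $x$ can rigidly pre-determine the allowed interior patterns, and additional care is needed to handle the $\theta$-dependence of the mixing sets $F_n^\theta$, which must be dealt with by a measurable partition of $\Theta$ into countably many pieces on which $F_n^\theta$ is constant before the Gibbs specification for a fixed finite set can be invoked.
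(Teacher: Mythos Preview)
Your proposal is correct and follows essentially the same route as the paper. The paper likewise takes an invariant relative Gibbs measure $\mu^\ast$ from Proposition~\ref{prop:existence:Gibbs}, invokes Theorem~\ref{thm:DLR:relative}\ref{thm:DLR:D:relative} to identify $\psi(\mu^\ast)$ with the supremum, uses Remark~\ref{rem:inner-entropy} and the Gibbs property to write $\psi(\mu^\ast)=\lim_n\abs{F_n}^{-1}\int\log Z_{F_n|F_n^\complement}\,\dd\mu^\ast$, and then proves the same three-step comparison you outline: $\log Z_{F_n|F_n^\complement}\leq\log Z_{F_n}+\smallo(\abs{F_n})$, $\log Z_{F_n}\leq\log Z_{F_n^\theta|(F_n^\theta)^\complement}+\norm{\Phi}\abs{F_n^\theta\setminus F_n}$, and $\int\log Z_{F_n^\theta|(F_n^\theta)^\complement}\,\dd\mu^\ast\leq\int\log Z_{F_n|F_n^\complement}\,\dd\mu^\ast+\smallo(\abs{F_n})$, the last by pointing back to the computation~\eqref{eq:proof:relative-D:error-F}--\eqref{eq:proof:relative-D:error-C} in the Dobrushin proof (which already handles the $\theta$-dependence of $F_n^\theta$ by the countable partition you anticipate). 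The one substantive difference is that your Step~1 (the fiberwise upper bound $\Psi_\mu(F_n)\leq\int\log Z_{F_n}\,\dd\nu$) is absent from the paper: it is redundant once the two-sided comparison in Step~4 is established, since equating $\psi(\mu^\ast)$ with the limit already suffices.
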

\begin{proof}
	Let $\mu$ be a $\GG$-invariant relative Gibbs measure for $\Phi$ with marginal $\nu$.
	The existence of relative Gibbs measures is guaranteed by Proposition~\ref{prop:existence:Gibbs}.
	By the relative Dobrushin theorem, $\mu$ achieves the supremum on the left-hand side
	of~\eqref{eq:variational-principle}.
	It remains to show that the pressure of $\mu$ coincides with
	the right-hand side of~\eqref{eq:variational-principle}.
	
	The observation made in Remark~\ref{rem:inner-entropy}
	together with~\eqref{eq:Hamiltonian:residue:Folner} gives the expression
	\begin{align}
		\psi(\mu) &=
			\lim_{n\to\infty}
			\frac{
				\Psi_\mu(F_n\,|\,F_n^\complement)
			}{\abs{F_n}}
	\end{align}
	for the relative pressure of $\mu$.
	Let $K=(K_A:A\Subset\GG)$ be the relative Gibbs specification associated to~$\Phi$.
	Since $\mu$ is a relative Gibbs measure, we have
	\begin{align}
		\Psi_\mu(F_n\,|\,F_n^{\complement}) &=
			\int\Big[
				H_{K_{F_n}((\theta,x),\cdot)}(\xi^{F_n})
					- K_{F_n}\big((\theta,x),E_{F_n|F_n^\complement}\big)
			\Big]
			\dd\mu(\theta,x) \\
		&=
			\int \log Z_{F_n|F_n^\complement}(\theta,x) \;\dd\mu(\theta,x) \;.
	\end{align}
	Thus, we only need to show that
	\begin{align}
	\label{eq:variational-principle:proof:approx1}
		\biggabs{%
			\int \log Z_{F_n|F_n^\complement}(\theta,x)\; \dd\mu(\theta,x) -
			\int \log Z_{F_n}(\theta)\; \dd\mu(\theta,x)
		}
		&=
			\smallo(\abs{F_n})
	\end{align}
	as $n\to\infty$.
	
	Let $F_n^\theta$ be the mixing set for $F_n$ witnessing the D-mixing condition relative to $\nu$.
	In order to prove~\eqref{eq:variational-principle:proof:approx1},
	it is enough to show that
	\begin{align}
		\label{eq:variational-principle:proof:ineqA}
		\log Z_{F_n|F_n^\complement}(\theta,x) &\leq \log Z_{F_n}(\theta) + \smallo(\abs{F_n}) \;, \\
		\label{eq:variational-principle:proof:ineqB}
		\log Z_{F_n}(\theta) &\leq
			\log Z_{F_n^\theta|(F_n^\theta)^\complement}(\theta,x) +
			\norm{\Phi}\abs{F_n^\theta\setminus F_n} \;, \\
		\label{eq:variational-principle:proof:ineqC}
		\int \log Z_{F_n^\theta|(F_n^\theta)^\complement}(\theta,x)\; \dd\mu(\theta,x) &\leq
			\int \log Z_{F_n|F_n^\complement}(\theta,x)\; \dd\mu(\theta,x) + \smallo(\abs{F_n})
	\end{align}
	and use the fact that $\int\abs{F_n^\theta\setminus F_n}\dd\mu(\theta,x)=\smallo(\abs{F_n})$
	by the D-mixing condition.
	
	Inequalities~\eqref{eq:variational-principle:proof:ineqA}
	and~\eqref{eq:variational-principle:proof:ineqB} can be verified
	by a straightforward calculation using the fact that $F_n^\theta$
	is a mixing set for $F_n$ (see Sec.~\S\ref{sec:variational-principle:proof:ineqAnB}).
	Inequality~\eqref{eq:variational-principle:proof:ineqC}
	follows from the fact that the left hand side of~\eqref{eq:proof:relative-D:error-F} is lesser or equal to the right hand side of~\eqref{eq:proof:relative-D:error-C}
	once we recall that $\log Z_{F_n^\theta|(F_n^\theta)^\complement}(\theta,x)$
	is the same as~$\circled{\star}$,
	and that the integral of $\log Z_{F_n|F_n^\complement}(\theta,x)$ is nothing but
	$\Psi_\mu(F_n\,|\,F_n^\complement)=
	H_\mu\big(\xi^{F_n}\,\big|\, \field{F}_\Theta \vee \xi^{F_n^{\complement}}  \big)	- \mu(E_{F_n|F_n^\complement})$,
	which differs from the the right-hand side of~\eqref{eq:proof:relative-D:error-C}
	by no more than $\smallo(\abs{F_n})$.
\end{proof}

\subsection{Relative equilibrium measures are relative Gibbs}
\label{sec:proof:relative-LR}

The idea of the proof of Theorem~\ref{thm:DLR:relative}\ref{thm:DLR:LR:relative}
is as follows:
if a measure $\mu$ on $\Omega$ is not relative Gibbs,
then the conditional relative pressure $\Psi_\mu(A\,|\,A^\complement)$
has to be sub-optimal for some $A\Subset\GG$ (Corollary~\ref{cor:gibbs:local-optimality}).
Therefore, applying the Gibbs kernel $K_A$ on $\mu$ would locally increase
the pressure.  In order to increase the relative pressure per site $\psi$,
we apply the Gibbs kernels on a positive-density set of translations of $A$, one after another.
The translations of $A$ should be sufficiently far apart so that
the applications of the different kernels do not significantly interfere with one another.
The final step is to do the standard averaging procedure to make the new measure $\GG$-invariant.

This strategy for proving a result of this type is not entirely new. The fundamental idea
of making a local improvement in a positive density set in order to achieve a global gain has
been used many times in the literature.
This idea is explicit in the works of F\"ollmer~\cite{Fol73} and Burton and Steif~\cite{BurSte94}
(see also~\cite[Sec.~\S15.4]{Geo88} and the bibliographic notes therein).
Similar ideas have appeared in other contexts, for instance in the proof of
the Garden-of-Eden theorem~\cite{Moo62,Myh63} (see~\cite[Chap.~5]{CecCoo10}).

To follow the above strategy, we need three lemmas.
The first provides a sufficient condition for the uniform convergence of
a certain type of martingale.
The second lemma complements Corollary~\ref{cor:gibbs:local-optimality}
by stating that the improvement achieved by applying a Gibbs kernel
is truly local.  The last lemma ensures the existence of
a non-overlapping packing of copies of a given finite set
with strictly positive uniform lower density.
Without loss of generality, by removing a $\nu$-null set from $\Theta$ if necessary,
we will assume that $\Omega$ has the TMP relative to the entire $\Theta$.

Let $f\colon\Sigma^\GG\to\RR$ be a bounded
measurable function %
and $\mu$ a probability measure on~$\Sigma^\GG$.
According to the martingale convergence theorem,
the conditional expectations $\mu(f\,|\,\xi^B)$ converge
$\mu$-almost surely to $f$ as $B$ grows to $\GG$ along any co-final sequence
of finite subsets of $\GG$.  Marcus and Pavlov~\cite{MarPav15} observed that
if $f$ has a continuous version modulo $\mu$
(i.e., $f=g$ $\mu$-almost surely for a continuous map $g\colon\Sigma\to\RR$),
then the convergence of $\mu(f\,|\,\xi^B)$ is uniform over
a set of full measure and holds in the net sense,
along the family of finite subsets of $\GG$ directed by inclusion.
The following lemma is a relative version of the Marcus--Pavlov lemma.
\begin{lemma}[Relative uniform martingale convergence]
\label{lem:martingale-convergence:uniform:relative}
	Let $f\in C_\Theta(\Omega)$ and let $\nu$ be a probability measure on $\Theta$.
	Then, for every probability measure $\mu\in\xspace{P}_\nu(\Omega)$,
	there is a set of full measure on which $\mu(f\,|\,\xi^B\lor\field{F}_\Theta)$
	converges uniformly to $f$ as $B\nearrow\GG$ along the family of finite
	subsets of $\GG$ directed by inclusion.
	Furthermore, the convergence is also uniform over the choice of $\mu$.
\end{lemma}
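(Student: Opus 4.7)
The plan is to leverage the density of bounded relatively local observables in $C_\Theta(\Omega)$ noted in the Preliminaries: every $f \in C_\Theta(\Omega)$ can be approximated in the uniform norm by observables that are $(\field{F}_\Theta \lor \xi^{A})$-measurable for some finite $A \Subset \GG$. Any such relatively local approximant is automatically $(\xi^B \lor \field{F}_\Theta)$-measurable once $B \supseteq A$ and therefore coincides with its own conditional expectation under that conditioning. This will pin down the limit uniformly in both $(\theta,x)$ and $\mu$.

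First I would fix $\varepsilon > 0$ and pick a bounded $(\field{F}_\Theta \lor \xi^{A_\varepsilon})$-measurable observable $\tilde{f}_\varepsilon$ with $\norm{f - \tilde{f}_\varepsilon} < \varepsilon$, for some $A_\varepsilon \Subset \GG$. For any finite $B \supseteq A_\varepsilon$ and any $\mu \in \xspace{P}_\nu(\Omega)$, $\tilde{f}_\varepsilon$ itself serves as a version of $\mu(\tilde{f}_\varepsilon \,|\, \xi^B \lor \field{F}_\Theta)$, so by linearity of conditional expectation
\begin{align*}
\mu(f \,|\, \xi^B \lor \field{F}_\Theta) - \tilde{f}_\varepsilon &= \mu(f - \tilde{f}_\varepsilon \,|\, \xi^B \lor \field{F}_\Theta).
\end{align*}
Since $\abs{f - \tilde{f}_\varepsilon} \leq \varepsilon$ pointwise on $\Omega$, a version of the right-hand side that is bounded by $\varepsilon$ everywhere may be selected, for instance by truncating to the interval $[-\varepsilon, \varepsilon]$. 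Combining with the pointwise bound $\abs{\tilde{f}_\varepsilon - f} < \varepsilon$ yields $\bigabs{\mu(f \,|\, \xi^B \lor \field{F}_\Theta)(\theta,x) - f(\theta,x)} < 2\varepsilon$ for every $(\theta,x) \in \Omega$, uniformly in $B \supseteq A_\varepsilon$ and in $\mu$.

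Next, I would iterate over a sequence $\varepsilon_n \searrow 0$ with associated finite sets $A_n \isdef A_{\varepsilon_n}$, which may be enlarged to form an increasing chain exhausting $\GG$. Fixing the truncated versions as above, one obtains $\sup_{(\theta,x)\in\Omega}\bigabs{\mu(f \,|\, \xi^B \lor \field{F}_\Theta)(\theta,x) - f(\theta,x)} < 2\varepsilon_n$ for every $B \supseteq A_n$ and every $\mu \in \xspace{P}_\nu(\Omega)$, which is precisely uniform convergence along the net of finite subsets of $\GG$ directed by inclusion. The ``set of full measure'' in the statement may then be taken to be all of $\Omega$ under this choice of versions, and the rate is the same for every $\mu$ because $\tilde{f}_\varepsilon$ and $A_\varepsilon$ depend only on $f$ and $\varepsilon$.

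The main conceptual obstacle is the bookkeeping around versions of conditional expectations indexed by the uncountable directed family $\{B : B \Subset \GG\}$: without committing at the outset to $\tilde{f}_\varepsilon$ itself as the version of $\mu(\tilde{f}_\varepsilon \,|\, \xi^B \lor \field{F}_\Theta)$ and to the truncated version of the residual $\mu(f - \tilde{f}_\varepsilon \,|\, \xi^B \lor \field{F}_\Theta)$, one only obtains almost-sure bounds whose null sets could in principle accumulate as $B$ varies. Both difficulties dissolve once these canonical choices are fixed.
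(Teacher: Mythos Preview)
Your proof is correct and takes a slightly different route from the paper's. The paper works directly from the equicontinuity in the definition of $C_\Theta(\Omega)$: it picks $B_0$ so that $\abs{f(\theta,x)-f(\theta,x')}<\varepsilon$ whenever $x_{B_0}=x'_{B_0}$, and then uses an explicit integral formula for the conditional expectation,
\[
\mu(f\,|\,\xi^B\lor\field{F}_\Theta)(\theta,x)=\frac{1}{\mu([x_B]\,|\,\field{F}_\Theta)(\theta,x)}\int_{[x_B]}f\,\dd\mu(\cdot\,|\,\field{F}_\Theta)(\theta,x),
\]
(verified separately in an appendix) to read off the bound directly. You instead invoke the density of relatively local observables and the fact that a $(\field{F}_\Theta\lor\xi^{A_\varepsilon})$-measurable function is its own conditional expectation once $B\supseteq A_\varepsilon$. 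Your approach is arguably cleaner in that it avoids the explicit disintegration identity, at the cost of citing the density statement from the Preliminaries; the paper's approach is more self-contained once that identity is in hand.

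One small correction: your concern about the ``uncountable directed family $\{B:B\Subset\GG\}$'' is misplaced, since $\GG$ is countable and hence so is the collection of its finite subsets. This means the careful bookkeeping of versions, while not wrong, is unnecessary: for any fixed choice of versions $V_B$ of $\mu(f\,|\,\xi^B\lor\field{F}_\Theta)$, your argument gives $\abs{V_B-f}<2\varepsilon_n$ $\mu$-a.e.\ for each $B\supseteq A_n$, and the union of the countably many exceptional null sets (over all finite $B$ and all $n$) is still null.
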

\begin{proof}
	Let $\varepsilon>0$.
	Choose a finite set $B_0\Subset\GG$ large enough so that
	$\abs{f(\theta,x)-f(\theta,y)}<\varepsilon$ whenever
	$x_{B_0}=y_{B_0}$.
	For every $B\Subset\GG$ we have
	\begin{align}
	\label{eq:martingale-convergence:uniform:relative:conditioning}
		\mu(f\,|\,\xi^B\lor\field{F}_\Theta)(\theta,x) &=
			\frac{1}{\mu([x_B]\,|\,\field{F}_\Theta)(\theta,x)}
			\int_{[x_B]}f\,\dd\mu(\cdot\,|\,\field{F}_\Theta)(\theta,x)
	\end{align}
	for $\mu$-almost every $(\theta,x)\in\Omega$
	(see Sec.~\S\ref{sec:martingale-convergence:uniform:relative:conditioning:argument}).
	It follows that when $B\supseteq B_0$,
	\begin{align}
		\abs{\mu(f\,|\,\xi^B\lor\field{F}_\Theta)(\theta,x)-f(\theta,x)} &< \varepsilon
	\end{align}
	for $\mu$-almost every $(\theta,x)\in\Omega$.
	This shows the uniform convergence.
	Observe that $B_0$ does not depend on $\mu$.
	Hence the convergence is also uniform in $\mu$.
\end{proof}

Along with Corollary~\ref{cor:gibbs:local-optimality}, the next lemma
constitutes the main ingredient for proving Theorem~\ref{thm:DLR:relative}\ref{thm:DLR:LR:relative}.
It allows to see the improvement predicted by Corollary~\ref{cor:gibbs:local-optimality}
at the level of finite sets.

\begin{lemma}[Local enhancement]
\label{lem:local-enhancement}
	Suppose that $\Psi_\mu(A\,|\,A^\complement)<\Psi_{\mu K_A}(A\,|\,A^\complement)$
	for some $A\Subset\GG$.
	Then, there exists an $\varepsilon>0$ and a finite set $B_0\supseteq A$
	such that
	\begin{align}
		\Psi_\mu\big(A\,\big|\,(B\setminus A)\big) &\leq \Psi_{\mu'K_A}\big(A\,\big|\,(B\setminus A)\big) - \varepsilon
	\end{align}
	for every measure $\mu'$
	with $\mu'|_{\field{F}_\Theta\lor\xi^{B_0}}=\mu|_{\field{F}_\Theta\lor\xi^{B_0}}$
	and every finite set $B\supseteq B_0$.
\end{lemma}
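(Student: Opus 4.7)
The plan is to turn the asymptotic strict inequality of the hypothesis into a uniform finite-volume inequality that is robust under perturbations of $\mu$ on sites outside some $B_0$. Set $\delta \isdef \Psi_{\mu K_A}(A\,|\,A^\complement) - \Psi_\mu(A\,|\,A^\complement)>0$; the target is to extract an $\varepsilon$ of order $\delta$ together with a finite $B_0 \supseteq A$.

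First I would control the left-hand side, which depends only on $\mu$. Because the $\sigma$-algebras $\xi^{B\setminus A}\lor\field{F}_\Theta$ are increasing in $B$ with join $\xi^{A^\complement}\lor\field{F}_\Theta$, the conditional entropy $H_\mu(\xi^A\,|\,\xi^{B\setminus A}\lor\field{F}_\Theta)$ decreases monotonically to $H_\mu(\xi^A\,|\,\xi^{A^\complement}\lor\field{F}_\Theta)$ by the martingale convergence theorem, and $E_{A|B\setminus A}$ converges uniformly to $E_{A|A^\complement}$ by absolute summability of $\Phi$. Combining these, I can pick $B_1\supseteq A$ such that $\Psi_\mu(A\,|\,B\setminus A)\leq \Psi_\mu(A\,|\,A^\complement)+\delta/8$ for every $B\supseteq B_1$.

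For the right-hand side, the key observation is the identity $\Psi_{\nu K_A}(A\,|\,A^\complement) = \int \log Z_{A|A^\complement}\,\dd\nu$, valid for every probability measure $\nu$. It follows from the fact that the $(\xi^{A^\complement}\lor\field{F}_\Theta)$-conditional distribution of $\nu K_A$ is the Boltzmann kernel $K_A(\cdot,\cdot)$ itself, combined with the algebraic identity $H(\pi)-\langle E\rangle_\pi = \log Z$ for a Boltzmann distribution. The integrand $\log Z_{A|A^\complement}$ is bounded and lies in $C_\Theta(\Omega)$ by essentially the same equicontinuity argument used in the proof of Proposition~\ref{prop:kernels:relative-Feller}, once the relative weak TMP is assumed to hold on all of $\Theta$ (arranged by the customary removal of a $\nu$-null set noted before the lemma). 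Proposition~\ref{prop:measures:topology} then furnishes $B_2$ such that $\Psi_{\mu'K_A}(A\,|\,A^\complement)\geq \Psi_{\mu K_A}(A\,|\,A^\complement)-\delta/8$ whenever $\mu'|_{\field{F}_\Theta\lor\xi^{B_2}}=\mu|_{\field{F}_\Theta\lor\xi^{B_2}}$. A final choice of $B_3$ making $\|E_{A|B\setminus A}-E_{A|A^\complement}\|<\delta/16$ for $B\supseteq B_3$, together with the fact that conditional entropy increases under less conditioning, gives $\Psi_{\mu'K_A}(A\,|\,B\setminus A)\geq \Psi_{\mu'K_A}(A\,|\,A^\complement)-\delta/16$. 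Setting $B_0\isdef B_1\cup B_2\cup B_3$, chaining the three estimates yields a uniform gap of at least $\delta-\delta/8-\delta/8-\delta/16=11\delta/16$, so any $\varepsilon$ in $(0,11\delta/16]$ works.

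The main obstacle steering the whole approach is that conditional entropy is not continuous in the underlying measure in general, so one cannot directly compare $\Psi_{\mu'K_A}$ and $\Psi_{\mu K_A}$ term-by-term. The resolution is the identity in the previous paragraph: it packages both the conditional entropy and the averaged energy contribution of the Gibbs kernel into the single bounded, relatively continuous observable $\log Z_{A|A^\complement}$, to which the closeness-of-measures result of Proposition~\ref{prop:measures:topology} applies. This is precisely where the relative Feller property of the Gibbs specification, and hence the relative weak TMP, enters the argument; the remaining estimates are routine bookkeeping with entropy monotonicity and uniform energy convergence.
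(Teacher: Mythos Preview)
Your argument is correct and in one respect cleaner than the paper's. The paper proceeds by six separate $\delta/7$ estimates, comparing entropy and energy terms individually: for the $\mu'K_A$ side it first compares $(\mu'K_A)(E_{A|A^\complement})$ with $(\mu K_A)(E_{A|A^\complement})$ and $H_{\mu'K_A}(\xi^A\,|\,\xi^{A^\complement}\lor\field{F}_\Theta)$ with $H_{\mu K_A}(\xi^A\,|\,\xi^{A^\complement}\lor\field{F}_\Theta)$ using the Feller property and Proposition~\ref{prop:measures:topology}, and then separately controls $\bigl|H_{\mu'K_A}(\xi^A\,|\,\xi^{B\setminus A}\lor\field{F}_\Theta)-H_{\mu'K_A}(\xi^A\,|\,\xi^{A^\complement}\lor\field{F}_\Theta)\bigr|$ uniformly in $\mu'$ by invoking the relative uniform martingale convergence lemma (Lemma~\ref{lem:martingale-convergence:uniform:relative}). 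Your identity $\Psi_{\nu K_A}(A\,|\,A^\complement)=\nu(\log Z_{A|A^\complement})$ collapses the entropy and energy of $\mu'K_A$ into a single integral of a bounded relatively continuous observable, so Proposition~\ref{prop:measures:topology} handles the $\mu'$-to-$\mu$ comparison in one stroke; and because you then only need the inequality $H_{\mu'K_A}(\xi^A\,|\,\xi^{B\setminus A}\lor\field{F}_\Theta)\geq H_{\mu'K_A}(\xi^A\,|\,\xi^{A^\complement}\lor\field{F}_\Theta)$ rather than closeness, plain monotonicity suffices and Lemma~\ref{lem:martingale-convergence:uniform:relative} is bypassed entirely. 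The paper's route is more symmetric and makes the role of each ingredient explicit, but yours isolates the single place where relative weak TMP is genuinely needed and gives a shorter proof.
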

\begin{proof}
	Let $\delta\isdef\Psi_{\mu K_A}(A\,|\,A^\complement)-\Psi_\mu(A\,|\,A^\complement)$
	and set $\varepsilon\isdef\nicefrac{\delta}{7}$.
	We make six separate approximations, and choose $B_0\supseteq A$ large enough
	so that
	the error in each approximation is less than $\nicefrac{\delta}{7}$.
	
	Recall that the convergence $E_{A|(B\setminus A)}\to E_{A|A^\complement}$
	is uniform.  Therefore, if we choose $B_0$ large enough, we can make sure that
	\begin{align}
		\abs{E_{A|(B\setminus A)}(\theta,x)-E_{A|A^\complement}(\theta,x)} &< \nicefrac{\delta}{7}
	\end{align}
	for every $(\theta,x)\in\Omega$, whenever $B\supseteq B_0$.
	With such choice of $B_0$, we have
	\begin{align}
		\label{eq:enhancement-lemma:energy:1}
		\abs{\mu(E_{A|(B\setminus A)})-\mu(E_{A|A^\complement})} &< \nicefrac{\delta}{7} \;, \\
		\label{eq:enhancement-lemma:energy:2}
		\abs{(\mu'K_A)(E_{A|(B\setminus A)})-(\mu'K_A)(E_{A|A^\complement})} &< \nicefrac{\delta}{7}
	\end{align}
	whenever $B\supseteq B_0$.
	Since $\Omega$ has the TMP and $E_{A|A^\complement}$ is in $C_\Theta(\Omega)$, Proposition~\ref{prop:kernels:relative-Feller} implies that the function $K_A E_{A|A^\complement}$
	is in $C_\Theta(\Omega)$.
	Therefore, if we choose $B_0$ large enough, then we have
	\begin{align}
		\label{eq:enhancement-lemma:energy:3}
		\abs{(\mu'K_A)(E_{A|A^\complement})-(\mu K_A)(E_{A|A^\complement})} &< \nicefrac{\delta}{7}
	\end{align}
	whenever $\mu'$ has the same marginal on $(\theta,x_{B_0})$ as $\mu$
	(Proposition~\ref{prop:measures:topology}).
	Combining~\eqref{eq:enhancement-lemma:energy:2}
	and~\eqref{eq:enhancement-lemma:energy:3}, for sufficiently large $B_0\supseteq A$
	we get
	\begin{align}
		\label{eq:enhancement-lemma:energy:2-3}
		\abs{(\mu'K_A)(E_{A|(B\setminus A)}) - (\mu K_A)(E_{A|A^\complement})}
			&< (\nicefrac{2}{7})\delta
	\end{align}
	whenever $B\supseteq B_0$ and %
	$\mu'$ has the same marginal on $(\theta,x_{B_0})$ as $\mu$.

	Using the martingale convergence theorem and the monotonicity of conditional entropy
	with respect to the condition, we know that
	\begin{align}
		H_\mu(\xi^A\,|\,\xi^{B\setminus A}\lor\field{F}_\Theta) &\to
			H_\mu(\xi^A\,|\,\xi^{A^\complement}\lor\field{F}_\Theta)
	\end{align}
	as $B\nearrow\GG$ along the finite subsets of $\GG$ directed by inclusion.
	Therefore, choosing $B_0\supseteq A$ large enough, we get
	\begin{align}
		\label{eq:enhancement-lemma:entropy:1}
		\abs{H_\mu(\xi^A\,|\,\xi^{B\setminus A}\lor\field{F}_\Theta) -
			H_\mu(\xi^A\,|\,\xi^{A^\complement}\lor\field{F}_\Theta)}
			&< \nicefrac{\delta}{7}
	\end{align}
	whenever $B\supseteq B_0$.
	Note that
	\begin{align}
		H_{\mu'K_A}(\xi^A\,|\,\xi^{A^\complement}\lor\field{F}_\Theta) &=
			-\int
				\smash{\overbrace{
				\sum_{p\in\Sigma^A}\indicator{[p]}\cdot
				\log K_A(\cdot,[p])
				}^{\gamma(\cdot)}}
			\,\dd(\mu'K_A)
		=
			\mu'K_A \gamma.
	\end{align}
	Since $\Omega$ has TMP relative to $\Theta$, Proposition~\ref{prop:kernels:relative-Feller} implies that the integrand $\gamma$
	and as a result $K_A \gamma$ are in $C_\Theta(\Omega)$.
	Therefore, if we choose $B_0$ large enough, we can make sure,
	using Proposition~\ref{prop:measures:topology}, that
	\begin{align}
		\label{eq:enhancement-lemma:entropy:2}
		\abs{H_{\mu' K_A}(\xi^A\,|\,\xi^{A^\complement}\lor\field{F}_\Theta) -
			H_{\mu K_A}(\xi^A\,|\,\xi^{A^\complement}\lor\field{F}_\Theta)}
			&< \nicefrac{\delta}{7}
	\end{align}
	whenever $\mu'$ has the same marginal on $(\theta,x_{B_0})$ as $\mu$.
	Lastly, by the martingale convergence theorem, we know that
	for $(\mu'K_A)$-almost every $(\theta,x)\in\Omega$ and
	every $p\in\Sigma^A$,
	\begin{align}
		\underbrace{(\mu'K_A)\big([p]\,\big|\,\xi^{B\setminus A}\lor\field{F}_\Theta\big)(\theta,x)}_{%
			\mu'\big(K_A(\cdot,[p])\,\big|\,\xi^{B\setminus A}\lor\field{F}_\Theta\big)(\theta,x)
		} &\to
			\underbrace{(\mu'K_A)\big([p]\,\big|\,\xi^{A^\complement}\lor\field{F}_\Theta\big)(\theta,x)}_{%
				K_A((\theta,x),[p])
			}
	\end{align}
	as $B$ grows to $\GG$ along any co-final sequence of finite subsets of $\GG$.
	Since the limit has a version $K_A\big((\theta,x),[p]\big)$
	which is in $C_\Theta(\Omega)$ (Proposition~\ref{prop:kernels:relative-Feller}),
	Lemma~\ref{lem:martingale-convergence:uniform:relative} ensures
	that the convergence is uniform both in $(\theta,x)$ (on a set of full $\mu'$-measure)
	and in $\mu'$. %
	Since $E_{A|A^\complement}$ is bounded, for each $p\in\Sigma^A$
	the function $K_A(\cdot,[p])$ is bounded away from~$0$ on $[p]$.
	It follows that the convergence of
	\begin{align}
		H_{\mu'K_A}(\xi^A\,|\,\xi^{B\setminus A}\lor\field{F}_\Theta) &=
			-\int\sum_{p\in\Sigma^A}\indicator{[p]}\cdot
			\log\Big[(\mu'K_A)\big([p]\,\big|\,\xi^{B\setminus A}\lor\field{F}_\Theta\big)\Big]
			\,\dd(\mu'K_A)
	\shortintertext{to}
		H_{\mu'K_A}(\xi^A\,|\,\xi^{A^\complement}\lor\field{F}_\Theta) &=
			-\int\sum_{p\in\Sigma^A}\indicator{[p]}\cdot
			\log K_A(\cdot,[p])
			\,\dd(\mu'K_A)
	\end{align}
	is uniform among all $\mu'\in\xspace{P}_\nu(\Omega)$.
	In particular, choosing $B_0\supseteq A$ large enough, we can ensure that
	\begin{align}
		\label{eq:enhancement-lemma:entropy:3}
		\abs{H_{\mu'K_A}(\xi^A\,|\,\xi^{B\setminus A}\lor\field{F}_\Theta) -
			H_{\mu'K_A}(\xi^A\,|\,\xi^{A^\complement}\lor\field{F}_\Theta)}
			&< \nicefrac{\delta}{7}
	\end{align}
	for every $\mu'\in\xspace{P}_\nu(\Omega)$, whenever $B\supseteq B_0$.
	Combining~\eqref{eq:enhancement-lemma:entropy:2}
	and~\eqref{eq:enhancement-lemma:entropy:3}, for sufficiently large $B_0\supseteq A$
	we get
	\begin{align}
		\label{eq:enhancement-lemma:entropy:2-3}
		\abs{H_{\mu'K_A}(\xi^A\,|\,\xi^{B\setminus A}\lor\field{F}_\Theta) -
			H_{\mu K_A}(\xi^A\,|\,\xi^{A^\complement}\lor\field{F}_\Theta)}
			&< (\nicefrac{2}{7})\delta
	\end{align}
	whenever $B\supseteq B_0$ and %
	$\mu'$ has the same marginal on $(\theta,x_{B_0})$ as $\mu$.
	
	Putting~\eqref{eq:enhancement-lemma:energy:1}, \eqref{eq:enhancement-lemma:energy:2-3},
	\eqref{eq:enhancement-lemma:entropy:1} and~\eqref{eq:enhancement-lemma:entropy:2-3}
	together with the hypothesis
	$\delta=\Psi_{\mu K_A}(A\,|\,A^\complement)-\Psi_\mu(A\,|\,A^\complement)>0$,
	the result follows.
\end{proof}

In the course of the proof, we will need to pack copies of a finite set $P\Subset\GG$
on $\GG$ in a non-overlapping fashion in such a way that the uniform density of the copies
is strictly positive.  On a hyper-cubic lattice $\GG=\ZZ^d$, a periodic packing
does the job.  On a general countable amenable group,
a positive-density non-overlapping packing is achieved by a Delone set.
\begin{lemma}[Existence of Delone sets]
\label{lem:delone}
	Let $\GG$ be a group, and $P,C\subseteq\GG$ subsets satisfying $C\supseteq PP^{-1}$.
	Then, there exists a set $D\subseteq\GG$ satisfying the following two conditions:
	\begin{enumerate}[label={\rm (\roman*)}]
		\item \textup{(Packing)} $d P\cap d'P=\varnothing$ for every two distinct elements $d,d'\in D$,
		\item \textup{(Covering)} $gC\cap D\neq\varnothing$ for every $g\in\GG$.
	\end{enumerate}
\end{lemma}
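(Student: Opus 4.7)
The plan is to apply Zorn's lemma to the family $\family{D}$ of all subsets $D\subseteq\GG$ satisfying the packing condition~(i), ordered by inclusion. This family is non-empty since it contains $\varnothing$, and it is closed under unions of chains: packing is a condition on pairs of elements, so any failure in a chain union would already occur in some member of the chain. Zorn's lemma therefore produces a maximal packing $D\in\family{D}$.

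The next step is to reformulate the packing condition algebraically. Note that $dP\cap d'P\neq\varnothing$ precisely when $d^{-1}d'=p(p')^{-1}$ for some $p,p'\in P$, i.e., when $d^{-1}d'\in PP^{-1}$. Since $PP^{-1}$ is symmetric (as $(PP^{-1})^{-1}=PP^{-1}$), this condition is equivalent to $d\in d'(PP^{-1})$. Consequently an element $g\in\GG$ can be adjoined to $D$ without violating~(i) if and only if $D\cap g(PP^{-1})=\varnothing$.

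Finally, I would verify covering by contradiction. Suppose some $g\in\GG$ satisfies $gC\cap D=\varnothing$. Because $PP^{-1}\subseteq C$, this forces $g(PP^{-1})\cap D=\varnothing$ as well, so by the reformulation above $D\cup\{g\}$ still lies in $\family{D}$. This contradicts the maximality of $D$, so $gC\cap D\neq\varnothing$ for every $g\in\GG$.

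The argument is essentially a single Zorn-step, and there is no substantial obstacle; the only point that requires a moment's care is the symmetry of $PP^{-1}$, which is exactly what makes the covering hypothesis $PP^{-1}\subseteq C$ match up with the packing condition stated in terms of $d^{-1}d'$. If one wishes to avoid the axiom of choice when $\GG$ is countable, the same set $D$ can be obtained by greedy transfinite (here, ordinary) induction over an enumeration of $\GG$, but the Zorn argument works uniformly for arbitrary groups.
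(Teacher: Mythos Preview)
Your proof is correct and follows essentially the same approach as the paper: take a maximal packing via Zorn's lemma, then derive covering from maximality using the observation that $gP\cap dP=\varnothing$ is equivalent to $d\notin g(PP^{-1})$. Your algebraic reformulation and remark about avoiding choice in the countable case are both spelled out a bit more explicitly than in the paper, but the substance is identical.
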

\begin{proof}
	Let $\field{D}$ denote the family of all subsets of $\GG$ that satisfy
	the packing condition.
	This family is partially ordered by inclusion.
	Furthermore, every chain in $\field{D}$ has an upper bound in $\field{D}$, namely the union
	of its elements.  By Zorn's lemma, $\field{D}$ has a maximal element, which we call $D$.
	We claim that $D$ also satisfies the covering condition.
	For if $D$ does not satisfy the covering condition, there must exist an element $g\in\GG$ such that
	$gPP^{-1}\cap D=\varnothing$, or equivalently, $gP\cap d P=\varnothing$ for every $d\in D$.
	It follows that $\{g\}\cup D$ is in $\field{D}$, contradicting the maximality of $D$.
\end{proof}

We leave it to the reader to show that when $\GG$ is countable and $P$ and $C$ are finite,
the existence of Delone sets can be established without resorting to the axiom of choice.
Let us remark that in the case that $\GG$ is a countable amenable group and $P$ and $C$ are finite,
the covering condition in the above lemma ensures that $D$ has
positive \emph{uniform lower density}
\begin{align}
	\density(D) &\isdef
		\liminf_{n\to\infty}\inf_{g\in\GG} \frac{\abs{gD\cap F_n}}{\abs{F_n}}
\end{align}
with respect to every F\o lner sequence $(F_n)_{n\in\NN}$.
Indeed, for every $g\in\GG$ and $h\in F_n$ there exists at least one $c\in C$
such that $hc\in gD$.  It follows that $\abs{gD\cap F_nC}\abs{C}\geq\abs{F_n}$.
On the other hand, $\abs{gD\cap F_n}\geq\abs{gD\cap F_nC}-\abs{F_n\triangle F_nC}$.
Hence, $\density(D)\geq\abs{C}^{-1}$.
We are now ready to prove the relative Lanford--Ruelle theorem.

\begin{proof}[Proof of Theorem~\ref{thm:DLR:relative}\ref{thm:DLR:LR:relative}]
	Let $\mu$ be a $\GG$-invariant measure on $\Omega$ with marginal $\nu$
	and suppose that $\mu$ is not a relative Gibbs measure for $\Phi$.
	We show that $\mu$ is not an equilibrium measure for $f_\Phi$ relative to $\nu$
	by constructing another $\GG$-invariant measure $\overline{\mu}_+$ with marginal $\nu$
	that has strictly larger relative pressure per site.
	Let $K=(K_A: A\Subset\GG)$ be the relative Gibbs specification
	associated to $\Phi$, and $\Psi_\mu$ the relative pressure under $\mu$.
	
	Since $\mu$ is not relative Gibbs, there exists a set $A\Subset\GG$
	such that $\mu K_A\neq \mu$.  According to Corollary~\ref{cor:gibbs:local-optimality},
	this implies that
	\begin{align}
		\Psi_{\mu}(A\,|\,A^\complement) &< \Psi_{\mu K_A}(A\,|\,A^\complement) \;.
	\end{align}
	Let $\varepsilon>0$, and take $B\isdef B_0\supseteq A$ %
	as guaranteed by Lemma~\ref{lem:local-enhancement}.
	Thus,
	\begin{align}
		\label{eq:relative-LR:enhancement}
		\Psi_{\mu}\big(A\,|\,(B'\setminus A)\big) &\leq
			\Psi_{\mu' K_A}\big(A\,|\,(B'\setminus A)\big) - \varepsilon
	\end{align}
	whenever $B'\supseteq B$ and $\mu'$ has the same marginal on $(\theta,x_{B})$ as $\mu$.
	
	Let $D\subseteq\GG$ be a Delone set with packing shape $B$ and covering shape $BB^{-1}$
	(Lemma~\ref{lem:delone}).
	Let $k_1,k_2,\ldots$ be an arbitrary enumeration of the elements of $D$.
	Let $A_i\isdef k_i A$ and $B_i\isdef k_i B$ for $i=1,2,\ldots$.
	Define $\mu^{(0)}\isdef\mu$ and $\mu^{(i)}\isdef\mu^{(i-1)} K_{A_i}$ for $i\geq 1$.
	From the facts that the sets $A_i$ are disjoint
	and the kernels $K_{A_i}$ are proper %
	(i.e., $K_{A_i}$ keeps the marginal on $(\theta,x_{A_i^\complement})$ intact)
	it follows that the limit $\mu_+\isdef\lim_{i\to\infty}\mu^{(i)}$ exists.
	Note however that $\mu_+$ may depend on the enumeration of $D$,
	and more importantly, is not necessarily $\GG$-invariant.
	
	Let $(F_n)_{n\in\NN}$ be a fixed F\o lner sequence.
	We average over the $\GG$-orbit of $\mu_+$ to construct a $\GG$-invariant measure
	$\overline{\mu}_+$.
	More specifically, let $\overline{\mu}_+$ be an accumulation point of
	the sequence
	\begin{align}
		\overline{\mu}^{(m)}_+ &\isdef \frac{1}{\abs{F_m}}\sum_{g\in F_m}g^{-1}\mu_+
	\end{align}
	as $m\to\infty$.
	Any such accumulation point will be a $\GG$-invariant measure.
	The existence of accumulation points is guaranteed
	by the compactness of $\xspace{P}_\nu(\Omega)$,
	whose argument relies on $\Theta$ being a standard Borel space
	(see Appendix~\S\ref{apx:measures:topology}).
	
	To show that $\overline{\mu}_+$ has strictly larger pressure per site than $\mu$,
	we compare the pressure of $g^{-1}\mu_+$ and $g^{-1}\mu$ on $F_n$ for arbitrary $g\in\GG$
	and show that uniformly in $g$, there is a gap of at least
	$\varepsilon\,\density(D)\abs{F_n} + \smallo(\abs{F_n})$ between them, that is,
	\begin{align}
		\inf_{g\in\GG} \Psi_{g^{-1}\mu_+}(F_n) =
		\inf_{g\in\GG} \Psi_{\mu_+}(gF_n) &\geq
			\Psi_\mu(F_n) + \varepsilon\,\density(D)\abs{F_n} + \smallo(\abs{F_n})
	\end{align}
	as $n\to\infty$.
	By the concavity of the relative pressure, for each $m$, we have
	\begin{align}
		\Psi_{\overline{\mu}^{(m)}_+}(F_n) &\geq
			\inf_{g\in\GG} \Psi_{g^{-1}\mu_+}(F_n) \;.
	\end{align}
	Taking the limit as $m\to\infty$
	and using the continuity of the pressure gives
	\begin{align}
		\Psi_{\overline{\mu}_+}(F_n) &\geq
			\inf_{g\in\GG} \Psi_{g^{-1}\mu_+}(F_n) \geq
			\Psi_\mu(F_n) + \varepsilon\,\density(D)\abs{F_n} + \smallo(\abs{F_n})
	\end{align}
	as $n\to\infty$.
	Dividing by $\abs{F_n}$ and letting $n\to\infty$
	will then yield the result.
	
	For $g\in\GG$,
	let $D^g_n\isdef\{k\in D: kB\subseteq gF_n\}$ and
	$\widehat{D}^g_n\isdef\{k\in D: kB\cap gF_n\neq\varnothing\}$.
	Note that $\abs{D^g_n}\geq\density(D)\abs{F_n}+\smallo(\abs{F_n})$
	and $\bigabs{\widehat{D}^g_n\setminus D^g_n}=\smallo(\abs{F_n})$ as $n\to\infty$
	uniformly in $g$. %
	Let $k_{\ell_1},k_{\ell_2},\ldots,k_{\ell_m}$ be the elements of $D^g_n$
	ordered according to the previously fixed enumeration of $D$.
	Let $R^g_n\isdef \bigcup_{k\in(\widehat{D}^g_n\setminus D^g_n)}(kA\cap gF_n)$ be
	the union of $A$-neighborhoods of the elements of $D$ that intersect $gF_n$
	but are not entirely included in $gF_n$.
	Using the chain rule, we decompose $\Psi_\mu(gF_n)$ and $\Psi_{\mu_+}(gF_n)$ as follows:
	\begin{align}
		\label{eq:decomposition:mu}
		\Psi_\mu(gF_n) &=
			\Psi_\mu\Big(gF_n\setminus \big[R^g_n\cup\bigcup_{i=1}^m A_{\ell_i}\big]\Big) +
			\sum_{i=1}^m\Psi_\mu\Big(A_{\ell_i}\,\Big|\,gF_n\setminus\big[R^g_n\cup\bigcup_{j=i}^m A_{\ell_j}\big]\Big) \nonumber\\
		&\qquad +
			\Psi_\mu\big(R^g_n\,\big|\, gF_n\setminus R^g_n\big)\\
		\label{eq:decomposition:mu-plus}
		\Psi_{\mu_+}(gF_n) &=
			\Psi_{\mu_+}\Big(gF_n\setminus \big[R^g_n\cup\bigcup_{i=1}^m A_{\ell_i}\big]\Big) +
			\sum_{i=1}^m\Psi_{\mu_+}\big(A_{\ell_i}\,\Big|\,gF_n\setminus\big[R^g_n\cup\bigcup_{j=i}^m A_{\ell_j}\big]\big) \nonumber\\
		&\qquad +
			\Psi_{\mu_+}\big(R^g_n\,\big|\, gF_n\setminus R^g_n\big) \;.
	\end{align}
	Observe that the first terms on the right-hand sides of~\eqref{eq:decomposition:mu}
	and~\eqref{eq:decomposition:mu-plus} are identical,
	because the two measures $\mu$ and $\mu_+$ have the same marginals on
	$(\theta,x_{\GG\setminus\bigcup_{k\in D}kA})$, and in particular on
	$(\theta,\allowbreak x_{gF_n\setminus[R^g_n\cup\bigcup_{i=1}^m A_{\ell_i}]})$.
	On the other hand, the last terms in~\eqref{eq:decomposition:mu}
	and~\eqref{eq:decomposition:mu-plus} are each bounded by
	\begin{align}
		\big(\log\abs{\Sigma}+\norm{\Phi}\big)\abs{R^g_n} &\leq
			\big(\log\abs{\Sigma}+\norm{\Phi}\big)\abs{A}\bigabs{\widehat{D}^g_n\setminus D^g_n} \;,
	\end{align}
	which is $\smallo(\abs{F_n})$ as $n\to\infty$ uniformly in $g$.
	To compare the middle terms, observe that
	on $(\theta,\allowbreak x_{A_{\ell_i}\cup(gF_n\setminus[R^g_n\cup\bigcup_{j=i}^m A_{\ell_j}])})$,
	the measure $\mu_+$ has the same marginal as $\mu^{(\ell_i)}=\mu^{(\ell_i-1)}K_{A_{\ell_i}}$.
	Therefore,
	\begin{align}
		\Psi_{\mu_+}\Big(A_{\ell_i}\,\Big|\,gF_n\setminus\big[R^g_n\cup\bigcup_{j=i}^m A_{\ell_j}\big]\Big) &=
			\Psi_{\mu^{(\ell_i-1)}K_{A_{\ell_i}}}\Big(A_{\ell_i}\,\Big|\,gF_n\setminus\big[R^g_n\cup\bigcup_{j=i}^m A_{\ell_j}\big]\Big) \;.
	\end{align}
	Since $\mu^{(\ell_i-1)}$ and $\mu$ have the same marginals on $B_{\ell_i}$,
	from~\eqref{eq:relative-LR:enhancement} we get
	\begin{align}
		\Psi_{\mu_+}\Big(A_{\ell_i}\,\Big|\,gF_n\setminus\big[R^g_n\cup\bigcup_{j=i}^m A_{\ell_j}\big]\Big) &\geq
			\Psi_\mu\Big(A_{\ell_i}\,\Big|\,gF_n\setminus\big[R^g_n\cup\bigcup_{j=i}^m A_{\ell_j}\big]\Big) + \varepsilon \;.
	\end{align}
	It follows that, uniformly in $g$,
	\begin{align}
		\Psi_{\mu^+}(gF_n) &\geq \Psi_\mu(gF_n) + \varepsilon m + \smallo(\abs{F_n}) \\
			&=
				\Psi_\mu(F_n) + \varepsilon\,\density(D)\abs{F_n} + \smallo(\abs{F_n})
	\end{align}
	as claimed.	
\end{proof}

\section{Equilibrium measures relative to a topological factor}
\label{sec:top_factor}

In the setting of topological factor maps between subshifts,
we have the following extension of the result of Allahbakhshi and Quas~\cite[Thm.~3.3]{AllQua13}
as a corollary of Theorem~\ref{thm:DLR:relative}\ref{thm:DLR:LR:relative}.
\begin{theorem}[Gibbs property for equilibrium measures relative to a topological factor]
\label{thm:top_factor} 
	Let $X$ and~$Y$ be $\GG$-subshifts,
	$\eta$ a topological factor map from $X$ onto $Y$,
	$\nu$  a $\GG$-invariant measure on $Y$, and
	$\Phi$ an absolutely summable interaction on $X$.
	Assume that $X$ satisfies the TMP.
	Then, every invariant measure $\mu$ projecting to $\nu$
	that maximizes the pressure for $f_\Phi$ within the fiber $\eta^{-1}(\nu)$
	satisfies the following Gibbs property:
	for every $A\Subset\GG$ and $u\in L_A(X)$, and $\mu$-almost every $x\in X$, we have
	\begin{align}
		\label{eq:top_factor:gibbs}
		\MoveEqLeft
		\mu\big([u]\,\big|\,\xi^{A^\complement}\lor\eta^{-1}(\field{F}_Y)\big)(x) \\
		&=	\nonumber
			\begin{dcases}
				\frac{1}{Z^{\eta}_{A|A^\complement}(x)}\ee^{-E_{A|A^\complement}(x_{A^\complement}\lor u)}
					& \text{if $x_{A^\complement}\lor u\in X$ and $\eta(x_{A^\complement}\lor u)=\eta(x)$,} \\
				0
					& \text{otherwise,}
			\end{dcases}
	\end{align}
	where $\field{F}_Y$ is the $\sigma$-algebra on $Y$ and $Z^{\eta}_{A|A^\complement}(x)$ is the appropriate normalizing constant.
\end{theorem}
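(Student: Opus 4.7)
The plan is to realize the factor-map setup as a relative system and invoke Theorem~\ref{thm:DLR:relative}\ref{thm:DLR:LR:relative}. I would take $\Theta := Y$ with its inherited shift action, set $X_\theta := \eta^{-1}(\theta)$ for each $\theta \in Y$, and let $\Omega := \{(\theta, x) \in Y \times X : \eta(x) = \theta\}$ be the graph of $\eta$. Continuity and equivariance of $\eta$ make $\Omega$ closed (hence measurable) and ensure $X_{g\theta} = gX_\theta$, while $Y$ being a closed subset of $\Sigma^\GG$ makes $\Theta$ a standard Borel space. The interaction lifts to a relative absolutely summable interaction $\widetilde\Phi$ on $\Omega$ via $\widetilde\Phi_A(\theta, x) := \Phi_A(x)$, for which the associated energy observable coincides with $f_\Phi$. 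The homeomorphism $x \mapsto (\eta(x), x)$ identifies invariant measures on $X$ projecting to $\nu$ with $\GG$-invariant elements of $\xspace{P}_\nu(\Omega)$; under this identification, the Abramov--Rokhlin formula $h_\mu(X) = h_\nu(Y) + h_\mu(\Omega \mid \Theta)$ (valid for actions of countable amenable groups) shows that maximizing $h_\mu(X) - \mu(f_\Phi)$ within the fiber $\eta^{-1}(\nu)$ is equivalent to being an equilibrium measure for $f_{\widetilde\Phi}$ relative to $\nu$.

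The main ingredient --- and the step I expect to be the principal obstacle --- is verifying that $\Omega$ has the weak topological Markov property relative to $\Theta$. By the Curtis--Hedlund--Lyndon theorem, $\eta$ is a sliding block code: there is a finite window $W \Subset \GG$ such that $\eta(x)_g$ depends only on $x_{gW}$ for every $x \in X$ and $g \in \GG$. Given $A \Subset \GG$, I would pick a memory set $B_0 \supseteq A$ for $A$ in $X$ (using the assumed weak TMP of $X$) and enlarge it to $B := B_0 \cup AW^{-1}W$; this $B$ is still a memory set for $A$ in $X$, since supersets of memory sets are memory sets. Now fix $\theta \in Y$ and $x, x' \in X_\theta$ with $x_{B \setminus A} = x'_{B \setminus A}$, and set $y := x_B \lor x'_{\GG \setminus A}$. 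The weak TMP of $X$ yields $y \in X$, and I verify $\eta(y) = \theta$ coordinate by coordinate: for each $g \in \GG$, either $gW \cap A = \varnothing$, in which case $y$ and $x'$ agree on $gW$, giving $\eta(y)_g = \eta(x')_g = \theta_g$; or $g \in AW^{-1}$, in which case $gW \subseteq AW^{-1}W \subseteq B$, so $y$ and $x$ agree on $gW$, giving $\eta(y)_g = \eta(x)_g = \theta_g$. Hence $y \in X_\theta$, and $B$ serves as a memory set for $A$ uniformly in $\theta$.

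With relative weak TMP in hand, Theorem~\ref{thm:DLR:relative}\ref{thm:DLR:LR:relative} applies and says that any measure $\mu$ as in the statement corresponds to a relative Gibbs measure for $\widetilde\Phi$ with marginal $\nu$. Unwinding~\eqref{eq:boltzman-dist}, for every $A \Subset \GG$ and every $u \in \Sigma^A$ the conditional probability $\mu([u] \mid \xi^{A^\complement} \lor \field{F}_\Theta)$ equals the Boltzmann distribution $\pi_{\theta, x_{A^\complement}}(u)$, which is proportional to $\ee^{-E_{A|A^\complement}(x_{A^\complement} \lor u)}$ when $x_{A^\complement} \lor u \in X_\theta$ --- that is, when $x_{A^\complement} \lor u \in X$ and $\eta(x_{A^\complement} \lor u) = \eta(x)$ --- and vanishes otherwise. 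Pulling this conclusion back along $x \mapsto (\eta(x), x)$ turns $\field{F}_\Theta$ into $\eta^{-1}(\field{F}_Y)$ and reproduces formula~\eqref{eq:top_factor:gibbs} verbatim, with $Z^{\eta}_{A|A^\complement}(x)$ identified with the relative partition function $Z_{A|A^\complement}(\eta(x), x)$.
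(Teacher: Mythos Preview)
Your proposal is correct and follows essentially the same approach as the paper: both recast the factor map as a relative system with $\Theta=Y$ and $\Omega$ the graph of $\eta$, reduce the claim to verifying relative weak TMP, and deduce it from weak TMP of $X$ plus the sliding-block-code representation of $\eta$, then invoke Theorem~\ref{thm:DLR:relative}\ref{thm:DLR:LR:relative}. Your coordinate-by-coordinate verification with the memory set $B_0\cup AW^{-1}W$ is in fact slightly more explicit than the paper's, which asserts that the choice $\tilde{B}=B\cup AFF^{-1}$ (ensuring $AF\cap\tilde{B}^{\complement}F=\varnothing$) works and leaves the check to the reader.
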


To see how topological factor maps fit in the setting of relative systems,
let $X$ be a $\GG$-subshift, $Y$ a compact metric space with a continuous
$\GG$-action and $\eta\colon X \to Y$ a topological factor map, that is,
a $\GG$-equivariant continuous surjection from $X$ onto $Y$.
Regarding $\Theta \isdef Y$ as an environment space and setting $X_y \isdef \eta^{-1}(y)$,
we obtain a relative system $\Omega \isdef \{(\eta(x),x):  x \in X\}$,
which is nothing other than the graph of $\eta$.

Let $\nu$ be a $\GG$-invariant measure on $Y$.
Via the natural topological conjugacy
$X \to \Omega$, $x \mapsto (\eta(x),x)$, there is a one-to-one correspondence
between $\GG$-invariant measures $\mu$  on $\Omega$ that
project to $\nu$ and $\GG$-invariant measures on $X$ that project to $\nu$.
Let $\Phi$ be an absolutely summable interaction on $X$, and note that
$\Phi$ can be considered, via the same conjugacy, as an absolutely summable
relative interaction on $\Omega$.
(Note however that the class of absolutely summable relative interactions on $\Omega$
is larger than those obtained in this fashion.)

With the above correspondence, the invariant measures $\mu$ (on $X$) that maximize pressure for $f_\Phi$
among all invariant measures projecting to $\nu$
are identified with the equilibrium measures (on $\Omega$) for $f_\Phi$ relative to $\nu$.
Indeed, the pressure of $\mu$ can be written as
\begin{align}
	h_\mu(X) - \mu(f_\Phi) &= h_\nu(Y) + h_\mu(\Omega\,|\,Y) - \mu(f_\Phi) \;.
\end{align}
Since $h_\nu(Y)$ is independent of $\mu$, maximizing the pressure $h_\mu(X) - \mu(f_\Phi)$
is equivalent to maximizing the relative pressure $h_\mu(\Omega\,|\,Y) - \mu(f_\Phi)$.
Likewise, relative Gibbs measures on $\Omega$ for $\Phi$ with marginal $\nu$ correspond
precisely to measures on $X$ that project to $\nu$ and satisfy~\eqref{eq:top_factor:gibbs}.

\begin{proof}[Proof of Theorem~\ref{thm:top_factor}]
	Following the above discussion, it is sufficient to show that
	the TMP on~$X$ implies the TMP on~$\Omega$ relative to $\nu$.
	The result then follows from Theorem~\ref{thm:DLR:relative}\ref{thm:DLR:LR:relative}.

	Every continuous shift-commuting map between subshifts can be expressed
	as a sliding factor map (see e.g.~\cite{LinMar95}).
	Hence, denoting the alphabet of $Y$ by $\Gamma$, there exists a set $F \Subset \GG$ and
	a map %
	$M\colon L_F(X) \rightarrow \Gamma$ such that
	\useshortskip
	\begin{align}
		\eta(x)_g = M\big((g^{-1}x)_{F}\big)
	\end{align}
	for every $x\in X$ and $g\in\GG$.

	Let $A\Subset\GG$ and let $B\supseteq A$ be a memory set for $A$
	witnessing the TMP of $X$.
	Note that every $\tilde{B}\Subset\GG$ with $\tilde{B}\supseteq B$ is also a memory set for $A$.
	We claim that if we choose $\tilde{B}$ large enough such that $AF\cap\tilde{B}^\complement F=\varnothing$
	(in particular, if we set $\tilde{B}\isdef B\cup AFF^{-1}$),
	then $\tilde{B}$ is also a memory set for $A$ witnessing the TMP of $\Omega$ relative to $Y$.
	Indeed, let $y\in Y$ and $x,x'\in X$ be such that $\eta(x)=\eta(x')=y$
	and $x_{\tilde{B}\setminus A}=x'_{\tilde{B}\setminus A}$.
	By the TMP of $X$, the configuration $w\isdef x_{\tilde{B}}\lor x'_{A^\complement}$
	is in $X$.  On the other hand, it is easy to see that
	if the condition $AF\cap\tilde{B}^\complement F=\varnothing$ is satisfied,
	then we also have $\eta(w)=y$.
\end{proof}

\section{Equilibrium measures on group shifts}
\label{sec:groupshifts}

As stated in the Introduction, not all group shifts are SFTs~\cite{Sal18}.
In fact, a group shift may not even satisfy the strong TMP. For instance, if
$\GG \isdef \bigoplus_{n \in \NN} \ZZ/2\ZZ$ is the direct sum of countably many
copies of $\ZZ/2\ZZ$ and $\HH \isdef \ZZ /2\ZZ$,
then the group shift $\XX \isdef \{\symb{0}^{\GG}, \symb{1}^{\GG}\}$
does not satisfy the strong TMP.
Indeed, note that the subgroup $\langle F\rangle\subseteq\GG$
generated by each finite subset $F\Subset\GG$ is finite.
Suppose that $F$ is such that $AF$ is a memory set for $A$.
Choosing $A\isdef\langle F\rangle$ yields that $AF=A$
is a memory set for $A$, which is absurd.

However, all group shifts satisfy TMP as long as they are defined on a
countable group $\GG$. The proof is a straightforward adaptation of
Lemma 2.2 in~\cite{KitSch88}.
	
\begin{proposition}[Group shifts have TMP]
\label{prop:groupshifts_are_wtmp}
	Let $\GG$ be a countable group and $\HH$ a finite group.
	Then every group shift $\XX \subseteq \HH^{\GG}$ satisfies
	the TMP.%
\end{proposition}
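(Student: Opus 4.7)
The plan is to exploit the group structure of $\XX$ to convert the ``agreement on a collar'' hypothesis of weak TMP into a ``trivial on a collar'' condition for a single configuration, after which the descending chain condition on subgroups of the finite group $\HH^A$ does all the real work. Concretely, given $x,x' \in \XX$ with $x_{B\setminus A} = x'_{B\setminus A}$, set $y \isdef x \cdot (x')^{-1} \in \XX$. Then $y_g = 1_\HH$ for every $g \in B\setminus A$, and multiplying the candidate configuration $x_B \lor x'_{\GG \setminus A}$ on the right by $(x')^{-1} \in \XX$ produces the configuration that agrees with $y$ on $A$ and is identically $1_\HH$ on $\GG \setminus A$. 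Letting $\XX_A \isdef \{w \in \XX : w_{\GG\setminus A}\equiv 1_\HH\}$ and writing $\pi_A$ for the restriction-to-$A$ map, weak TMP at $A$ reduces to finding a finite $B \supseteq A$ such that $\pi_A(Y_B) = \pi_A(\XX_A)$, where $Y_B \isdef \{y \in \XX : y_{B\setminus A}\equiv 1_\HH\}$.

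To produce such a $B$, fix any exhausting sequence of finite sets $A = B_0 \subseteq B_1 \subseteq \cdots$ with $\bigcup_n B_n = \GG$. The sets $Y_{B_n}$ form a decreasing chain of closed subgroups of $\XX$, and their projections $Z_n \isdef \pi_A(Y_{B_n})$ are a decreasing chain of subgroups of the finite group $\HH^A$. This descending chain therefore stabilizes at some index $N$, yielding $Z_N = Z_{N+1} = \cdots$. I claim that $B_N$ is a memory set for $A$, which amounts to showing $Z_N \subseteq \pi_A(\XX_A)$ (the reverse inclusion being obvious from $\XX_A \subseteq Y_{B_n}$ for every $n$).

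For this final inclusion, take $p \in Z_N$. For each $n \geq N$ pick $y^{(n)} \in Y_{B_n}$ with $y^{(n)}_A = p$, and by compactness of $\HH^\GG$ extract a subsequential limit $y^{(n_k)} \to y \in \XX$. Any fixed $g \in \GG\setminus A$ lies in $B_{n_k}\setminus A$ for all sufficiently large $k$, hence $y^{(n_k)}_g = 1_\HH$ eventually, and in the limit $y_g = 1_\HH$. Thus $y \in \XX_A$ with $y_A = p$, so $p \in \pi_A(\XX_A)$ as required. The only real subtlety is that $\pi_A(\bigcap_n Y_{B_n})$ can in principle be strictly smaller than $\bigcap_n \pi_A(Y_{B_n})$; the descending chain condition combined with the compactness argument above is precisely what rules this out, and this is where finiteness of $\HH$ (and hence of $\HH^A$) enters in an essential way.
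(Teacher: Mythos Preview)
Your proof is correct and follows essentially the same approach as the paper: both reduce via the group structure to studying patterns on $A$ compatible with the identity on a growing collar, invoke the descending chain condition on subgroups of the finite group $\HH^A$ to obtain stabilization, and use compactness to conclude. The main difference is presentational: you reduce to the identity boundary condition at the outset via $y = x\cdot(x')^{-1}$, whereas the paper instead shows that $L_{A|B}(x) = x_A \cdot L_{A|B}$ is a coset and works with general $x$; you are also more explicit about the final compactness step, which the paper leaves implicit in the phrase ``this is equivalent to saying that $C$ is a memory set.''
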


\begin{proof}
	For disjoint $A,B\Subset\GG$ and $x\in \XX$, let us define $L_{A|B}(x)$
	as the set of all patterns $p\in L_A(\XX)$ such that $p\lor x_B$ is in $L_{A\cup B}(\XX)$.
	Denote by $L_{A|B}$ the set $L_{A|B}((1_{\HH})^{\GG})$.
	
	Observe that $L_{A|B}$ is a subgroup of $L_A(\XX)$.
	Let us verify that for every $x\in \XX$, the set $L_{A|B}(x)$ is a left coset of $L_{A|B}$.
	Clearly, $x_A \cdot L_{A|B} \subseteq L_{A|B}(x)$.
	Conversely, let $u \in L_{A|B}(x)$.
	Let $z \in \XX$ be such that $z_{A\cup B} = u \lor x_B$.
	Then $(x^{-1}\cdot z)_B = 1_{\HH}^B$ and hence
	$(x^{-1} \cdot z)_A = x_A^{-1} \cdot u$ is in $L_{A|B}$.
	It follows that $u\in x_A \cdot L_{A|B}$.
	Therefore, $L_{A|B}(x) = x_A \cdot L_{A|B}$.
		
	Now let $g_0,g_1,g_2,\dots$ be an enumeration of $\GG$ and
	$B_n \isdef \{g_0,\dots, g_n\}\setminus A$.  Clearly,	
	\begin{align}
		L_A(\XX) \supseteq L_{A|B_0} \supseteq L_{A|B_1} \supseteq L_{A|B_2} \supseteq \cdots \;.
	\end{align}
	As $L_A(\XX)$ is finite, this chain eventually stabilizes,
	and thus there exists an $N \in \NN$
	such that $L_{A|B_{N+m}}=L_{A|B_N}$ for all $m\geq 0$.
	It follows that	$L_{A|B_{N+m}}(x)=L_{A|B_N}(x)$
	for every $x\in \XX$ and all $m\geq 0$.
	But this is equivalent to saying that $C\isdef A\cup B_N$ is a memory set
	for $A$.
	We conclude that $\XX$ satisfies the TMP.
\end{proof}

Theorem~\ref{thm:groupshifts} follows immediately from Proposition~\ref{prop:groupshifts_are_wtmp}
and the extended version of the non-relative Lanford--Ruelle theorem
(Theorem~\ref{thm:DLR:relative}\ref{thm:DLR:LR:relative} on the system $\Omega \isdef \Theta \times \XX$
in which $\Theta\isdef\{\theta\}$ is singleton and $\nu\isdef\delta_\theta$).

We now give an algebraic interpretation of Theorem~\ref{thm:groupshifts}
in the case $\Phi\equiv 0$ and as a corollary, find a sufficient condition
for the uniqueness of the measure of maximal entropy on group shifts.

More generally, let $\XX$ be a compact metric group on which a countable group $\GG$
acts by continuous automorphisms.
A point $z\in\XX$ is said to be \emph{homoclinic} (or \emph{asymptotic})
if for every open neighborhood $U\ni 1_{\XX}$, there is a finite set $F\Subset\GG$
such that $gz\in U$ for all $g\in\GG\setminus F$.
The homoclinic points of $\XX$ form a subgroup of $\XX$
denoted by $\Delta(\XX)$.
The homoclinic points in a group shift are precisely the \emph{finitary} configurations,
that is, the configurations in which all but at most finitely many of the sites
have the identity symbol.

Let us call a probability measure $\mu$ on $\XX$
an \emph{almost Haar} measure if it is invariant under the action of
the homoclinic subgroup of $\XX$ by left-translations, that is, if
$\mu(z^{-1}U)=\mu(U)$ for every measurable $U\subseteq\XX$ and each $z\in\Delta(\XX)$.
Clearly, the Haar measure is almost Haar,
but in general, there can be many other almost Haar measures.
For instance, when $\HH\isdef\ZZ/2\ZZ$ and $\GG$ is an arbitrary countable group,
every probability measure on the group shift $\XX\isdef \{\symb{0}^{\GG}, \symb{1}^{\GG}\}$
is almost Haar, simply because $\XX$ has no homoclinic point other than
its identity element $\symb{0}^{\GG}$.

The almost Haar measures on a group shift are precisely the Gibbs measures
for the trivial interaction $\Phi\equiv 0$.
\begin{proposition}[almost Haar $\equiv$ uniform Gibbs]
	Let $\GG$ be a countable group and $\HH$ a finite group,
	and let $\XX\subseteq\HH^\GG$ be a group shift.
	A probability measure $\mu$ on $\XX$ is almost Haar if and only if it is
	Gibbs for the interaction $\Phi\equiv 0$.
\end{proposition}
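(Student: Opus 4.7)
The plan is to relate both conditions to the action of the finite subgroup $H_A \isdef \{z \in \XX : \supp(z) \subseteq A\}$ on the fibers of the projection $x \mapsto x_{A^\complement}$, for each $A \Subset \GG$, where $\supp(z) \isdef \{g \in \GG : z_g \neq 1_\HH\}$.

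First I would record two preliminary facts. (i) In a group shift over a finite alphabet, $\Delta(\XX)$ is precisely the set of finitary configurations: taking the cylinder $\{x \in \XX : x_{1_\GG} = 1_\HH\}$ as a neighborhood of the identity $1_\HH^{\GG}$ in the definition of homoclinicity forces $z_g = 1_\HH$ for all but finitely many $g$, and the converse is immediate. (ii) For any $z \in \XX$ with $\supp(z) \subseteq A$, left multiplication $T_z \colon x \mapsto zx$ is a measurable automorphism of $\XX$ that fixes $x_{A^\complement}$ (since $(zx)_g = z_g x_g = x_g$ for $g \notin A$); hence $T_z$ preserves the $\sigma$-algebra $\xi^{A^\complement}$ and restricts to a permutation of each (finite) fiber $\{y \in \XX : y_{A^\complement} = x_{A^\complement}\}$.

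For the direction Gibbs $\Rightarrow$ almost Haar, the Gibbs property for $\Phi \equiv 0$ says that $\mu(\cdot \mid \xi^{A^\complement})(x)$ is $\mu$-a.s.\ the uniform distribution on the fiber above $x_{A^\complement}$. Since $T_z$ permutes this fiber whenever $\supp(z) \subseteq A$, the uniform distribution is $T_z$-invariant, so $\mu(z^{-1}U) = \mu(T_z^{-1}U) = \mu(U)$ for every measurable $U$. Letting $A$ range over the finite subsets of $\GG$ exhausts all finitary $z$, which by (i) is all of $\Delta(\XX)$.

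For the converse, the key algebraic observation is that $H_A$ acts \emph{transitively} on every non-empty fiber: given $x, y \in \XX$ with $x_{A^\complement} = y_{A^\complement}$, the element $z \isdef y \cdot x^{-1}$ lies in $\XX$ (as $\XX$ is a group), has support in $A$, and satisfies $zx = y$. Almost-Haar invariance of $\mu$ under the finite subgroup $H_A \subseteq \Delta(\XX)$, combined with the disintegration over $\xi^{A^\complement}$, yields that $\mu$-a.s.\ the conditional measure on each fiber is invariant under a transitive action of a finite group, hence uniform. This is exactly the Gibbs property for $\Phi \equiv 0$. The only step requiring any measure-theoretic care is the descent of $H_A$-invariance of $\mu$ to a.s.\ $H_A$-invariance of its disintegration, but this is routine since $H_A$ is finite (or at worst countable) and each $T_z$ preserves $\xi^{A^\complement}$.
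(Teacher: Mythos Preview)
Your proposal is correct and takes essentially the same approach as the paper. Both directions hinge on the same two observations: that homoclinic points in a group shift are exactly the finitary configurations, and that for $z$ supported in $A$, left multiplication by $z$ permutes each fiber over $\xi^{A^\complement}$ with the key algebraic point being transitivity (your $z \isdef yx^{-1}$ is the paper's $z \isdef (\hat{x}_{A^\complement}\lor u)(\hat{x}_{A^\complement}\lor v)^{-1}$). The only difference is packaging: the paper verifies the identities $\mu([u]\cap Q)=\mu([v]\cap Q)$ directly on cylinder sets, whereas you phrase the same computation as ``invariance of $\mu$ under a finite group $H_A$ descends to the disintegration over $\xi^{A^\complement}$, and a transitive action on a finite set has a unique invariant probability measure.''
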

\begin{proof}
	First, suppose that $\mu$ is a Gibbs measure for $\Phi\equiv 0$.
	Let $z$ be a homoclinic point.
	Let $A\Subset\GG$ be the support of $z$, that is, $A \isdef z^{-1}(\HH\diagdown \{1_{\HH}\})$, and set $w\isdef z_A$.
	Let $u\in L_A(\XX)$ and $Q\in \xi^{A^\complement}$.
	By the (uniform) Gibbs property of $\mu$, we have
	\begin{align}
		\mu\big([u]\,|\,\xi^{A^\complement}\big) &=
			\mu\big([w^{-1}u]\,|\,\xi^{A^\complement}\big)
	\end{align}
	$\mu$-almost surely.  Integrating over $Q$ gives
	\begin{align}
		\mu([u]\cap Q) &= \mu([w^{-1}u]\cap Q) = \mu\big(z^{-1}([u]\cap Q)\big)
	\end{align}
	which implies that $\mu$ is invariant under left-translation by $z$.
	Since $z$ was arbitrary, we find that $\mu$ is almost Haar.
	
	Conversely, suppose that $\mu$ is almost Haar.
	Let $A\Subset\GG$ be a finite set and $u,v\in L_A(\XX)$.
	If there is no configuration $x\in \XX$ for which both $x_{A^\complement}\lor u$
	and $x_{A^\complement}\lor v$ are in $\XX$, there is nothing to show.
	So, suppose that there exists a configuration $\hat{x}\in \XX$ such that
	$\hat{x}_{A^\complement}\lor u,\hat{x}_{A^\complement}\lor v\in \XX$.
	Since $\XX$ is a group shift, $z\isdef(\hat{x}_{A^\complement}\lor u)(\hat{x}_{A^\complement}\lor v)^{-1}$
	is in $\XX$.  Note that $z$ is a homoclinic point with support $A$ and $w\isdef z_A=uv^{-1}$.
	By the almost Haar property, for every $Q\in\xi^{A^\complement}$ we have
	\begin{align}
		\mu([u]\cap Q) &= \mu\big(z^{-1}([u]\cap Q)\big)
			= \mu([w^{-1}u]\cap Q) = \mu([v]\cap Q) \;.
	\end{align}
	This implies, by the definition of conditional probability,
	that $\mu([u]\,|\,\xi^{A^\complement})=\mu([v]\,|\,\xi^{A^\complement})$ $\mu$-almost surely.
	We conclude that $\mu$ is a Gibbs measure for $\Phi\equiv 0$.
\end{proof}

As a corollary, we have the following restatement of the special case of
Theorem~\ref{thm:groupshifts} with $\Phi\equiv 0$.
\begin{corollary}[Maximal entropy $\Longrightarrow$ almost Haar]
\label{cor:groupshifts:almostHaar}
	Let $\GG$ be a countable amenable group and $\HH$ a finite group,
	and let $\XX \subseteq \HH^{\GG}$ be a group shift.
	Then every measure of maximal entropy on $\XX$
	(with respect to the action of~$\GG$) is almost Haar.
\end{corollary}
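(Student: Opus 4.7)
The plan is to obtain this corollary essentially for free by combining the two immediately preceding results: Theorem~\ref{thm:groupshifts} (every equilibrium measure on a group shift is Gibbs for the given interaction) and the proposition characterizing almost Haar measures as exactly the Gibbs measures for the zero interaction. There is nothing deeper to prove; the content is purely a bookkeeping argument to convert maximal-entropy statements into equilibrium-measure statements.

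First I would take $\Phi \equiv 0$ as the interaction on $\XX$. This interaction is trivially absolutely summable (with $\norm{\Phi} = 0$), and the associated energy observable is $f_\Phi \equiv 0$. Consequently, for every $\GG$-invariant probability measure $\mu$ on $\XX$, the pressure reduces to
\begin{align}
h_\mu(\XX) - \mu(f_\Phi) &= h_\mu(\XX) \;.
\end{align}
Hence the equilibrium measures for $f_\Phi$ are precisely the measures of maximal Kolmogorov--Sinai entropy on $\XX$.

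Next I would apply Theorem~\ref{thm:groupshifts} to conclude that any such maximal entropy measure $\mu$ is a Gibbs measure for $\Phi \equiv 0$. Finally, invoking the preceding proposition (``almost Haar $\equiv$ uniform Gibbs''), every Gibbs measure for the trivial interaction is almost Haar, which yields the claim.

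The only step that requires any care is the implicit appeal to the weak TMP of group shifts, which is precisely Proposition~\ref{prop:groupshifts_are_wtmp} and is the nontrivial input feeding Theorem~\ref{thm:groupshifts}; but since that result is already established, no obstacle remains. The proof is effectively a one-line chain: maximal entropy $\Longrightarrow$ equilibrium for $f_\Phi \equiv 0$ $\Longrightarrow$ Gibbs for $\Phi \equiv 0$ $\Longrightarrow$ almost Haar.
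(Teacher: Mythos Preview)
Your proposal is correct and matches the paper's approach exactly: the paper states this corollary as an immediate restatement of Theorem~\ref{thm:groupshifts} in the special case $\Phi\equiv 0$, combined with the preceding proposition identifying almost Haar measures with Gibbs measures for the trivial interaction. No additional argument is given in the paper, and your chain of implications is precisely what is intended.
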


Observe that when $\Delta(\XX)$ is dense,
the Haar measure is the unique almost Haar measure on $\XX$.
Therefore, we find the following corollary.
See~\cite[Thm.~8.6]{ChuLi15} %
and~\cite{LinSch99,BoySch08} for closely related results.
\begin{corollary}[Uniqueness of measure of maximal entropy]
\label{cor:groupshifts:intrinsic-ergodicity}
	Let $\GG$ be a countable amenable group and $\HH$ a finite group.
	Let $\XX\subseteq\HH^{\GG}$ be a group shift and suppose that
	its homoclinic subgroup $\Delta(\XX)$ is dense in $\XX$.
	Then, the Haar measure on $\XX$ is the unique measure of maximal entropy on $\XX$
	(with respect to the action of $\GG$).
\end{corollary}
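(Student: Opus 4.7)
The plan is to combine Corollary~\ref{cor:groupshifts:almostHaar} with a simple closedness argument: the left-translation stabilizer of any probability measure on a compact topological group is a closed subgroup, so if an almost Haar measure is already invariant under the dense subgroup $\Delta(\XX)$, then it must be invariant under all of $\XX$, and by the uniqueness of Haar measure on compact groups it must coincide with the Haar measure itself.

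More concretely, let $\mu$ be a measure of maximal entropy on $\XX$. By Corollary~\ref{cor:groupshifts:almostHaar}, $\mu$ is almost Haar, meaning $z\mu = \mu$ for every $z\in\Delta(\XX)$, where $z\mu$ denotes the pushforward of $\mu$ under the continuous left-translation map $T_z\colon\XX\to\XX$, $T_z(x)\isdef zx$. Let $H\isdef\{z\in\XX : z\mu=\mu\}$ be the stabilizer of $\mu$ under left-translation. It is immediate from the group structure of $\XX$ that $H$ is a subgroup of $\XX$, and it contains $\Delta(\XX)$ by hypothesis.

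The key step is to show that $H$ is closed in $\XX$. For any continuous function $f\colon\XX\to\RR$, the map $z\mapsto\int f(zx)\dd\mu(x)$ is continuous in $z$: indeed, since multiplication $\XX\times\XX\to\XX$ is continuous and $\XX$ is compact, $f$ is uniformly continuous, so if $z_n\to z$ then $f(z_n x)\to f(zx)$ uniformly in $x$, and the bounded convergence theorem applies. Hence $H=\bigcap_f\{z : \int f(zx)\dd\mu(x)=\int f(x)\dd\mu(x)\}$, intersected over all continuous $f$ (or a countable dense subset of $C(\XX)$), is a closed subset of $\XX$.

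Since $H$ is a closed subgroup of $\XX$ containing the dense subgroup $\Delta(\XX)$, we conclude $H=\XX$. Thus $\mu$ is invariant under left-translation by every element of $\XX$, and by the uniqueness of the Haar measure on the compact group $\XX$, $\mu$ must be the normalized Haar measure. No obstacle is expected; the only substantive ingredients are Corollary~\ref{cor:groupshifts:almostHaar} and the elementary topological fact that the translation stabilizer is closed.
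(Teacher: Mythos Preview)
Your proof is correct and follows exactly the approach the paper indicates: the paper deduces the corollary from Corollary~\ref{cor:groupshifts:almostHaar} together with the one-line observation that ``when $\Delta(\XX)$ is dense, the Haar measure is the unique almost Haar measure on $\XX$,'' and you have simply spelled out this observation via the standard closed-stabilizer argument. There is nothing to add.
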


\section{Relative equilibrium measures on lattice slices}
\label{sec:slices}

Recall from the Introduction that
a two-dimensional subshift $Y\subseteq\Sigma^{\ZZ^2}$ can be viewed as
a one-dimensional relative system $\Omega_N$ (for $N$ an arbitrary positive integer)
in which the environment space~$\Theta_N$
consists of the configurations on the complement of the horizontal strip $\ZZ\times[0,N-1]$
that are admissible in $Y$,
and for each $\theta\in\Theta$,
the set $X_\theta$ consists of all configurations $x$ of the strip $\ZZ\times[0,N-1]$
that are consistent with $\theta$ in that $\theta\lor x\in Y$.
Note that $\ZZ$ acts on $\Omega_N$ by horizontal shifts.

In this section, we shall prove Theorem~\ref{thm:Z2_slices},
which states that under suitable conditions on~$Y$,
the equilibrium measures on $Y$ are precisely the $\ZZ^2$-invariant
measures that are relative equilibrium on $\Omega_N$ for each $N$.
In fact, we prove this in a more general setting in which $\ZZ^2$
is replaced with an arbitrary countable amenable group $\GG$,
the horizontal strip is replaced with a union of a finite number of cosets
of a fixed subgroup $\HH\subseteq\GG$ (called a \emph{slice} of $\GG$),
and the horizontal $\ZZ$-action is replaced with the action of~$\HH$.

Before introducing the general setting, let us give a few examples
to show why the above-mentioned equivalence cannot hold without
some assumption on~$Y$.

\newcommand{\cuadritoA}{
	\begin{tikzpicture}[scale = 0.2,baseline=0.1pt]
	\draw (0,0) rectangle  +(1,1);
	\end{tikzpicture}
}
\newcommand{\cuadritoB}{
	\begin{tikzpicture}[scale = 0.2,baseline=0.1pt]
	\draw [fill = black!20] (0,0) rectangle  +(1,1);
	\draw [fill = black] (0.5,0.5) circle (0.3);
	\end{tikzpicture}
}
\newcommand{\cuadritoC}{
	\begin{tikzpicture}[scale = 0.2,baseline=0.1pt]
	\draw [fill = black!20] (0,0) rectangle  +(1,1);
	\draw [fill = white] (0.2,0.5) -- (0.5,0.8) -- (0.8,0.5) -- (0.5,0.2) -- cycle;
	\end{tikzpicture}
}

\begin{example}[Equilibrium but not relative equilibrium I]
	\label{ex:slices_notTMP_yesRelDmix}
	Let $Y$ be the $\ZZ^2$-subshift over the alphabet $\Sigma \isdef \{\cuadritoA, \cuadritoB, \cuadritoC \}$
	consisting of all configurations
	in which the two symbols $\cuadritoB$ and $\cuadritoC$ appear in at most one horizontal row
	(see Fig.~\ref{fig:slices_examples}), that is,
	\begin{align}
		Y &\isdef
			\big\{
				y\in \{\cuadritoA, \cuadritoB, \cuadritoC \}^{\ZZ^2}:
				\text{$y_{(u,v)},y_{(u',v')}\in \{\cuadritoB, \cuadritoC \}$
				implies $v'=v$}
			\big\} \;.
	\end{align} 
	The only non-wandering point in $Y$ is the uniform configuration $\cuadritoA^{\ZZ^2}$.
	Thus the atomic measure $\mu$ supported at $\cuadritoA^{\ZZ^2}$ is the only $\ZZ^2$-invariant measure
	on $Y$.  In particular, $\mu$ is the unique measure of maximal entropy on $Y$.
	However, given its marginal on~$\Theta_1$,
	$\mu$ does not maximize relative entropy on~$\Omega_1$.
	Namely, consider the measure $\mu'$ under which each site outside the strip $\ZZ\times\{0\}$
	has almost surely the symbol~$\cuadritoA$, while the symbols inside the strip $\ZZ\times\{0\}$
	are chosen independently uniformly at random from $\Sigma$.
	Note that $\mu'$ is invariant under horizontal shift
	and has the same marginal as $\mu$ on $\Theta_1$.
	On the other hand, $h_{\mu}(\Omega_1\,|\,\Theta_1)=0$
	while $h_{\mu'}(\Omega_1\,|\,\Theta_1)=\log 3$.
	Let us observe that $Y$ does not satisfy TMP, but the relative system $\Omega_1$
	is relatively D-mixing (even more, it has the relative independence property).	 
	\hfill\exampleqed
\end{example}

\begin{example}[Equilibrium but not relative equilibrium II]
\label{ex:slices_yesTMP_notRelDmix}
	Let us consider a variant of the subshift from the previous example
	in which there is an additional constraint that
	the symbol~$\cuadritoA$ cannot occur in the same row as
	the symbols~$\cuadritoB$ and $\cuadritoC$ (see Fig.~\ref{fig:slices_examples}).
	Namely, let
	\begin{align}
		Y &\isdef
			\big\{
				y\in \{\cuadritoA, \cuadritoB, \cuadritoC \}^{\ZZ^2}:
				\text{$y_{(u,v)},y_{(u',v')}\in \{\cuadritoB, \cuadritoC \}$
				and $y_{(u'',v'')}=\cuadritoA$ imply $v''\neq v'=v$}
			\big\} \;.
	\end{align}
	As in the previous example, the atomic measure $\mu$ supported at~$\cuadritoA^{\ZZ^2}$
	is the unique measure of maximal entropy on~$Y$,
	but given its marginal on $\Theta_1$,
	$\mu$ does not maximize relative entropy on $\Omega_1$.
	In this case, the maximum relative entropy is achieved by the measure
	under which the sites outside $\ZZ\times\{0\}$ are almost surely given the symbol~$\cuadritoA$
	and the sites in $\ZZ\times\{0\}$ are given random symbols
	chosen independently and uniformly from $\{\cuadritoB, \cuadritoC \}$.
	In contrast to the previous example,
	in this example, $Y$ does have TMP (even strong TMP)
	but $\Omega_1$ is not relatively D-mixing.
	\hfill\exampleqed
\end{example}

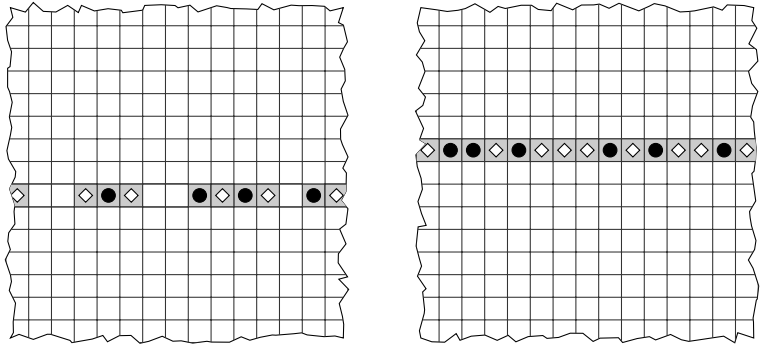
\begin{figure}[h!]
	\centering
	\begin{tikzpicture}[scale = 0.3]

	\begin{scope}[shift = {(0,0)}]
	
	\clip[draw,decorate,decoration={random steps, segment length=4pt, amplitude=2pt}] (0.2,0.2) rectangle (14.8,14.8); 
	\draw [black!80] (0,0) grid (15,15);
	\foreach \j in {0,...,14}{
		\draw [black!80, fill = black!20] (\j,6) rectangle +(1,1);
		\draw [fill = black] (\j +0.5,6.5) circle (0.3); 
	}
	\foreach \j in {0,3,5,6,7,9,11,12,14}{
		\draw [black!80, fill = black!20] (\j,6) rectangle +(1,1);
		\draw [fill = white] (\j+0.2,6.5) -- (\j+0.5,6.8) -- (\j+0.8,6.5) -- (\j+0.5,6.2) -- cycle;
	}
	\foreach \j in {1,2,6,7,12}{
		\draw [black!80, fill = white] (\j,6) rectangle +(1,1);
	}
	
	\end{scope}
	
	\begin{scope}[shift = {(18,0)}]
	
	\clip[draw,decorate,decoration={random steps, segment length=4pt, amplitude=2pt}] (0.2,0.2) rectangle (14.8,14.8); 
	\draw [black!80] (0,0) grid (15,15);
	\foreach \j in {0,...,14}{
		\draw [black!80, fill = black!20] (\j,8) rectangle +(1,1);
		\draw [fill = black] (\j +0.5,8.5) circle (0.3); 
	}
	\foreach \j in {0,3,5,6,7,9,11,12,14}{
		\draw [black!80, fill = black!20] (\j,8) rectangle +(1,1);
		\draw [fill = white] (\j+0.2,8.5) -- (\j+0.5,8.8) -- (\j+0.8,8.5) -- (\j+0.5,8.2) -- cycle;
	}
	\end{scope}
	
	\end{tikzpicture}
	\caption{%
		Configurations from the subshifts in Example~\ref{ex:slices_notTMP_yesRelDmix} (left)
		and Example~\ref{ex:slices_yesTMP_notRelDmix} (right).
	}
	\label{fig:slices_examples}
\end{figure}

\begin{example}[Relative equilibrium but not equilibrium]
\label{ex:slices_yesTMP_notDmix}
	Let $Y \isdef \{\symb{0}\}^{\ZZ^2} \cup \{\symb{1},\symb{2}\}^{\ZZ^2}$.
	Consider the atomic measure	$\mu$ supported at $\symb{0}^{\ZZ^2}$.
	For every positive $N$, $\mu$ maximizes relative entropy
	given its marginal on $\Theta_N$.
	Nevertheless, $\mu$ is not a measure of maximal entropy for~$Y$.
	Note that $Y$ is an SFT, thus has TMP.  It however does not satisfy D-mixing.
	\hfill\exampleqed
\end{example}

Let us now introduce the general setting.
Let $\GG$ be a countable amenable group and $\HH$ a subgroup of $\GG$.
A union of finitely many right cosets of $\HH$ in $\GG$ is called
an \emph{$\HH$-slice} of $\GG$.
Symbolic configurations on an $\HH$-slice can be viewed in a natural way
as configurations with a larger alphabet on~$\HH$.
Namely, given an $\HH$-slice $S$, we choose a collection
$F \isdef \{a_1, \ldots, a_k\}$ of
representatives of distinct right cosets of $\HH$ participating in $S$,
so that $S=\HH F$.
With some abuse of notation,
we will identify the configurations $x\in \Sigma^S$ on the slice $S$
with the configurations $\hat{x}\in (\Sigma^F)^\HH$ on $\HH$
via the natural bijection between $\Sigma^{\HH F}$ and $(\Sigma^F)^\HH$
given by $(\hat{x}_h)_f= x_{hf}$ for $h\in\HH$ and $f\in F$.
Likewise, for $A\Subset\HH$ we identify $\Sigma^{AF}$ with $(\Sigma^F)^A$.
For $d\in\HH F$, we define $p_S(d)$ as the unique element $h\in\HH$
such that $d\in hF$.

Let $Y$ be a subshift on $\GG$.
Each $\HH$-slice of $\GG$ defines a relative system on which $\HH$ acts.
Namely, let $S\isdef\HH F$ be an $\HH$-slice.
For $B\subseteq\GG$, let $\Pi_B$ denote the projection $y\mapsto y_B$.
We introduce a relative system $\Omega_S$
by considering $\Theta_S\isdef \Pi_{S^\complement}(Y)$ as the environment space
and defining
$X_\theta \isdef \{x\in(\Sigma^F)^\HH: \theta\lor x\in Y\}$
as the set of configurations consistent with $\theta\in\Theta_S$.
Note that $\HH$ acts on $\Omega_S$ by translations,
and that this action is topologically conjugate to the action of $\HH$ on $Y$.

An interaction $\Phi$ on $Y$ induces a relative interaction $\widehat{\Phi}$
on $\Omega_S$.
Namely, for every finite subset $A \subseteq \HH$ and every
$u \in  (\Sigma^F)^A$ and $\theta \in \Theta_S$, let
\begin{align}
	\widehat{\Phi}_A(\theta, u) &\isdef
		\sum_{\substack{B \Subset \GG\\ \qquad\mathclap{p_S(B \cap S)= A}\qquad}}
			\Phi _{B}(\theta_{B\setminus S}\lor u_{B \cap S}) \;.
\end{align}
Note that  $\widehat{\Phi}$ is absolutely summable if $\Phi$ is
(in fact, $\norm{\smash{\widehat{\Phi}}} \le \abs{F} \norm{\Phi}$).

\begin{proposition}[Gibbs kernels vs.\ relative Gibbs kernels]
	\label{prop:slices-interactions}
	Let $S\isdef\HH F$ be an $\HH$-slice of $\GG$.
	Let $\Phi$ be an absolutely summable interaction on $Y$
	and $\widehat{\Phi}$ the corresponding relative interaction on $\Omega_S$.
	Let $K$ be the Gibbs specification on $Y$ associated to $\Phi$
	and $\widehat{K}$ the relative Gibbs specification on $\Omega_S$ associated to $\widehat{\Phi}$.
	Let $y\in Y$, and set $\theta\isdef y_{S^\complement}$ and $x\isdef y_S$.
	Then, for every $A\Subset\HH$ and $u\in(\Sigma^F)^A$, we have
	\begin{align}
		\widehat{K}_A\big((\theta,x),[u]\big) &=
			K_{AF}(y,[u]) \;.
	\end{align}
\end{proposition}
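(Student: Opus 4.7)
The plan is to establish the equality by unpacking both sides via the Boltzmann formula~\eqref{eq:boltzman-dist} and exploiting the natural bijection between $\Sigma^{\HH F}$ and $(\Sigma^F)^\HH$. First, I would write $\widehat{K}_A((\theta,x),[u])$ as $\widehat{Z}_{A|A^\complement}(\theta,x)^{-1}\,\ee^{-\widehat{E}_{A|A^\complement}(\theta,\tilde{u})}$, where $\tilde{u}\isdef x_{A^\complement}\lor u$, and $K_{AF}(y,[u])$ as $Z_{AF|(AF)^\complement}(y)^{-1}\,\ee^{-E_{AF|(AF)^\complement}(y')}$, where $y'$ is any element of $Y$ agreeing with $y$ off $AF$ and with $u$ on $AF$. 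On both sides, the weight is set to zero unless the natural admissibility condition holds.

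Second, I would prove the crucial pointwise identity
\begin{align}
\widehat{E}_{A|A^\complement}(\theta,\tilde{u}) = E_{AF|(AF)^\complement}(y')
\end{align}
by a direct term-by-term computation. Using the definition of $\widehat{\Phi}$ and the observation that $p_S(d)\in A$ if and only if $d\in AF$ (so that $p_S(B\cap S)\cap A\neq\varnothing$ if and only if $B\cap AF\neq\varnothing$), the sum defining the left-hand side rearranges as
\begin{align}
\sum_{\substack{C\Subset\HH\\ C\cap A\neq\varnothing}}\sum_{\substack{B\Subset\GG\\ p_S(B\cap S)=C}}\Phi_B\bigl(\theta_{B\setminus S}\lor\tilde{u}_{B\cap S}\bigr)
=\sum_{\substack{B\Subset\GG\\ B\cap AF\neq\varnothing}}\Phi_B(y'_B)
=E_{AF|(AF)^\complement}(y'),
\end{align}
where the middle equality uses that under the coset bijection $\theta_{B\setminus S}\lor\tilde{u}_{B\cap S}$ restricted to $B$ coincides with $y'_B$. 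Absolute convergence of the iterated sum, which justifies the rearrangement, is guaranteed by $\norm{\widehat{\Phi}}\leq\abs{F}\norm{\Phi}$ and the absolute summability of $\Phi$.

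Third, I would check that the admissibility indicators match: $\tilde{u}\in X_\theta$ (that is, $\theta\lor\tilde{u}\in Y$) if and only if $y'\in Y$, which is immediate from the slice bijection since $y'=\theta\lor\tilde{u}$ once both sides are identified with their corresponding configurations on $\GG$. It follows that the two partition functions are sums of identical Boltzmann weights indexed by the same set of admissible completions, and hence are equal. Dividing the numerators by the partition functions then yields $\widehat{K}_A((\theta,x),[u])=K_{AF}(y,[u])$.

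The main obstacle is essentially bookkeeping: keeping straight the two parallel encodings of configurations on the slice $\HH F$ (once as patterns over $\Sigma$ and once as patterns over $\Sigma^F$ indexed by $\HH$) and verifying that the decomposition of $\Phi$ along the coset structure is consistent with the definition of $\widehat{\Phi}$. Once the coset-indexed rearrangement is set up carefully, both the energy identity and the matching of admissibility conditions follow without further difficulty.
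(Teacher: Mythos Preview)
Your proposal is correct and follows essentially the same approach as the paper: both establish the admissibility equivalence $x_{A^\complement}\lor u\in X_\theta \Leftrightarrow y_{\GG\setminus AF}\lor u\in Y$, then prove the energy identity $\widehat{E}_{A|A^\complement}(\theta,x_{A^\complement}\lor u)=E_{AF|(AF)^\complement}(y_{\GG\setminus AF}\lor u)$ via the same double-sum rearrangement (using that $p_S(B\cap S)\cap A\neq\varnothing$ iff $B\cap AF\neq\varnothing$), and conclude from the Boltzmann formula. Your write-up is slightly more explicit about the partition functions and the absolute-convergence justification, but the substance is identical.
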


\begin{proof}
	Let $E$ denote the Hamiltonian on $Y$ associated to $\Phi$,
	and let $\widehat{E}$ denote the relative Hamiltonian on $\Omega_S$
	associated to $\widehat{\Phi}$.
	Clearly, $x_{\HH \setminus A} \lor u\in X_\theta$
	if and only if $y_{\GG\setminus AF}\lor u\in Y$.
	If either of the latter conditions is satisfied, we have
	\begin{align}
		\widehat{E}_{A|A^{\complement}}(\theta,x_{\HH \setminus A} \lor u)
		& = \sum_{\substack{C\Subset \HH \\ C \cap A \neq\varnothing}}
			\widehat{\Phi}_C(\theta,x_{\HH \setminus A} \lor u) \\
		& = \sum_{\substack{C\Subset \HH \\ C \cap A \ne \varnothing}}
			\sum_{\substack{B \Subset \GG \\ p_S(B \cap S) = C }}
			\Phi_{B}(\theta_{B\setminus S}\lor (x_{\HH \setminus A} \lor u )_{B \cap S})
			\\
		& = E_{(AF)|(\GG \setminus AF)}(y_{\GG \setminus AF } \lor u) \;.
	\end{align}
	The result then follows from the definitions of the Gibbs kernels.
\end{proof}

Before stating the main result, let us verify that TMP on a subshift
implies relative TMP with respect to slices.

\begin{proposition}[TMP $\Longrightarrow$ relative TMP]
	\label{prop:slices-TMP}
	Let $S\isdef\HH F$ be an $\HH$-slice of $\GG$.
	If $Y$ satisfies TMP, then $\Omega_S$ satisfies relative TMP.
\end{proposition}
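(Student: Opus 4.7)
The plan is to reduce the relative weak TMP on the slice system $\Omega_S$ to the weak TMP of $Y$ by lifting $A \Subset \HH$ to its "unfolded" version $AF \Subset \GG$, applying the weak TMP of $Y$ there, and then projecting the resulting memory set back to $\HH$ via $p_S$.

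More specifically, given $A \Subset \HH$, I would first consider $AF \Subset \GG$ and use the weak TMP of $Y$ to obtain a finite memory set $C \supseteq AF$ in $\GG$. I would then define the candidate memory set in $\HH$ by
\begin{align}
B' &\isdef A \cup p_S(C \cap S) \;,
\end{align}
which is finite since $C \cap S$ is finite and $p_S$ is a map to $\HH$. Note that by construction $B' F \supseteq AF \cup (C \cap S)$, so $C \cap S \subseteq B' F$, and consequently $C \setminus AF \subseteq S^\complement \cup (B' \setminus A)F$, using the fact that $F$ is a set of coset representatives so $B'F$ is a disjoint union of the cosets $hF$ for $h \in B'$.

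Now I would take $\theta \in \Theta_S$ and $x, x' \in X_\theta$ satisfying $x_{B' \setminus A} = x'_{B' \setminus A}$, set $y \isdef \theta \lor x$ and $y' \isdef \theta \lor x'$ (both in $Y$), and check that $y$ and $y'$ coincide on $C \setminus AF$: indeed they agree on all of $S^\complement$ (both equal $\theta$), and on $S \cap (C \setminus AF) \subseteq (B' \setminus A)F$ the agreement is exactly the hypothesis $x_{B'\setminus A}=x'_{B'\setminus A}$ read in the unfolded alphabet. Applying the weak TMP of $Y$ with memory set $C$ for the base $AF$ gives $z \isdef y_C \lor y'_{\GG \setminus AF} \in Y$. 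A direct verification shows $z_{S^\complement}=\theta$ and that $z_S$, when identified back with $(\Sigma^F)^\HH$, equals $x_{B'} \lor x'_{\HH \setminus A}$ (the two descriptions agree on the overlap $B' \setminus A$ by hypothesis). Hence $x_{B'} \lor x'_{\HH \setminus A} \in X_\theta$, as required.

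No step here appears to be a genuine obstacle; the only thing that needs some care is bookkeeping the identification between $\Sigma^{\HH F}$ and $(\Sigma^F)^\HH$ and checking that the disjointness of the cosets $hF$ allows the translation of the hypothesis $x_{B'\setminus A}=x'_{B'\setminus A}$ into agreement of $y$ and $y'$ on $(B'\setminus A)F$. Once these identifications are explicit, the proof is essentially a single application of the weak TMP of $Y$.
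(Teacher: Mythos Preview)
Your proposal is correct and follows essentially the same approach as the paper: both lift $A$ to $AF$, apply the weak TMP of $Y$ to get a memory set in $\GG$, and project its intersection with the slice back to $\HH$ via $p_S$. The paper phrases the projection step as enlarging the memory set so that its intersection with $S$ has the form $CF$, which is exactly your $B'F$ with $B'=p_S(C\cap S)$ (your union with $A$ is redundant since $AF\subseteq C\cap S$ already gives $A\subseteq p_S(C\cap S)$).
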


\begin{proof}
	Let $A \Subset \HH$ and let $B \supseteq AF$ be a memory set for $AF$
	witnessing the TMP of $Y$.
	Since any finite superset of a memory set is also a memory set, we may
	assume that $B \cap S = CF$ for some $C \Subset \HH$.
	We claim that $C$ is a memory set for $A$
	in the relative system $\Omega_S$.

	Let $\theta \in \Theta_S$ and  $x,x' \in X_\theta$ be such that
	$x_{C\setminus A} = x'_{C\setminus A}$.  Let $y,y' \in Y$
	be such that $y_{\GG\setminus S} = y'_{\GG\setminus S} = \theta$,
	$y_{S} = x$ and $y'_{S} = x'$.
	Since $B$ is a memory set for $AF$ in $Y$,
	there is a configuration $\tilde{y} \in Y$ that agrees with
	$y$ on $B$, and thus on $CF = B \cap S$ and with $y'$ on $\GG \setminus AF$.
	In other words, $\tilde{y}_{\GG\setminus S} = \theta$ and $\tilde{y}$ agrees with $y$ on $CF$
	and with $y'$ on $S \setminus AF$.
	In particular, if we set $w \isdef \tilde{y}_{S}$,
	then $w\in X_\theta$ and
	$w$ agrees with $x$ on $C$ and with $x'$ on $\HH\setminus A$.
	This means that $C$ is a memory set for $A$ in $\Omega_S$.
\end{proof}

Now we can state the main general result of this section.

\begin{theorem} %
\label{thm:DLR:slices}
	Let $Y$ be a subshift on a countable amenable group $\GG$.
	Let $\Phi$ an absolutely summable interaction on $Y$
	and $\mu$ a $\GG$-invariant probability measure on $Y$.
	Let $\HH$ be a subgroup of $\GG$.
	\begin{enumerate}[label={\rm (\alph*)},ref={(\alph*)}]
		\item \label{thm:DLR:slices:LR}
			\textup{(Lanford--Ruelle theorem for slices)} \\
			Assume that $Y$ satisfies TMP.
			Assume further that $\mu$ is an equilibrium measure on $Y$ for~$\Phi$.
			Let $S\isdef\HH F$ be an $\HH$-slice of $\GG$,
			and denote by $\widehat{\Phi}$ the relative interaction on $\Omega_S$
			corresponding to $\Phi$.
			Then, $\mu$ is an equilibrium measure on $\Omega_S$ for $\widehat{\Phi}$
			relative to~$\Pi_{S^\complement}\mu$,
			provided that $\Omega_S$ is D-mixing relative to $\Pi_{S^\complement}\mu$.
		\item \label{thm:DLR:slices:D}
			\textup{(Dobrushin theorem for slices)} \\
			Assume that for every $\HH$-slice $S\isdef\HH F$ of $\GG$,
			$\Omega_S$ satisfies relative TMP.
			Assume further that for every $\HH$-slice $S$,
			$\mu$ is an equilibrium measure on $\Omega_S$ for $\widehat{\Phi}$
			relative to $\Pi_{S^\complement}\mu$, where $\widehat{\Phi}$ denotes the relative interaction
			on $\Omega_S$ corresponding to $\Phi$.
			Then, $\mu$ is an equilibrium measure on $Y$ for $\Phi$,
			provided that $Y$ is D-mixing.
	\end{enumerate}
\end{theorem}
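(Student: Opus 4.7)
The approach is to chain the non-relative and relative Dobrushin--Lanford--Ruelle theorems (Theorem~\ref{thm:DLR:relative} and its trivial-environment specialization) via the correspondence in Proposition~\ref{prop:slices-interactions} between the Gibbs kernel $K_{AF}$ on~$Y$ and the relative Gibbs kernel $\widehat{K}_A$ on~$\Omega_S$.

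For part~\ref{thm:DLR:slices:LR}, since $Y$ satisfies weak TMP and $\mu$ is equilibrium on~$Y$ for~$\Phi$, the non-relative Lanford--Ruelle theorem (the case of Theorem~\ref{thm:DLR:relative}\ref{thm:DLR:LR:relative} with trivial environment) gives that $\mu$ is Gibbs on~$Y$ for~$\Phi$, i.e., $\mu K_B = \mu$ for every $B \Subset \GG$. Fix an $\HH$-slice $S \isdef \HH F$ and identify $\mu$ with a measure on~$\Omega_S$ via $y \leftrightarrow (y_{S^\complement}, y_S)$; by $\GG$-invariance this measure is $\HH$-invariant with marginal $\Pi_{S^\complement}\mu$ on~$\Theta_S$. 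For each $A \Subset \HH$, Proposition~\ref{prop:slices-interactions} identifies $\widehat{K}_A$ with $K_{AF}$ under this identification, so $\mu \widehat{K}_A = \mu K_{AF} = \mu$, meaning $\mu$ is a relative Gibbs measure on~$\Omega_S$ for~$\widehat{\Phi}$. The relative Dobrushin theorem (Theorem~\ref{thm:DLR:relative}\ref{thm:DLR:D:relative}), applied under the hypothesis that $\Omega_S$ is D-mixing relative to~$\Pi_{S^\complement}\mu$, yields the conclusion.

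For part~\ref{thm:DLR:slices:D}, traverse the chain in reverse. For each $\HH$-slice $S \isdef \HH F$, the environment space $\Theta_S = \Pi_{S^\complement}(Y)$ is standard Borel as a closed subset of $\Sigma^{S^\complement}$, and by hypothesis $\Omega_S$ has relative weak TMP, so the relative Lanford--Ruelle theorem (Theorem~\ref{thm:DLR:relative}\ref{thm:DLR:LR:relative}) gives that $\mu$ is a relative Gibbs measure on~$\Omega_S$. By Proposition~\ref{prop:slices-interactions} this translates to $\mu K_{AF} = \mu$ for every $A \Subset \HH$ and every~$F$ representing an $\HH$-slice. Given an arbitrary $B \Subset \GG$, choose~$F$ to be representatives of the finitely many $\HH$-cosets meeting~$B$ and $A \Subset \HH$ so that $B \subseteq AF$. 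By DLR consistency of Gibbs kernels---$K_{AF} K_B = K_{AF}$ whenever $B \subseteq AF$, since after applying~$K_{AF}$ the conditional distribution on $\xi^B$ given $\xi^{B^\complement}$ already has the prescribed Gibbs form---we get $\mu K_B = \mu K_{AF} K_B = \mu K_{AF} = \mu$. Thus $\mu$ is Gibbs on~$Y$, and the non-relative Dobrushin theorem together with the D-mixing assumption on~$Y$ concludes that $\mu$ is equilibrium on~$Y$ for~$\Phi$.

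The main technical point is the DLR consistency step in part~\ref{thm:DLR:slices:D}: extending the Gibbs property from the cofinal subfamily $\{AF : A \Subset \HH\}$ to every finite subset of~$\GG$. This follows from the observation that $E_{AF|(AF)^\complement} - E_{B|B^\complement}$ is $\xi^{B^\complement}$-measurable when $B \subseteq AF$ (its constituent $\Phi_C$ satisfy $C \cap B = \varnothing$), so once $\mu(\,\cdot\,|\,\xi^{(AF)^\complement})$ has the prescribed Gibbs form, further conditioning on $\xi^{AF \setminus B}$ yields the prescribed Gibbs form on~$B$.
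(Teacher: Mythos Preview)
Your proof is correct and follows essentially the same approach as the paper: both parts chain the non-relative and relative halves of Theorem~\ref{thm:DLR:relative} through Proposition~\ref{prop:slices-interactions}. The only notable difference is that in part~\ref{thm:DLR:slices:D}, where the paper simply observes that the sets $AF$ are cofinal in the finite subsets of $\GG$ and cites \cite[Remark~1.24]{Geo88} to upgrade the Gibbs condition to all finite sets, you instead spell out the DLR consistency $K_{AF}K_B=K_{AF}$ directly via the $\xi^{B^\complement}$-measurability of $E_{AF|(AF)^\complement}-E_{B|B^\complement}$; this is a valid self-contained alternative to the citation.
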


\begin{proof}
	Let $K$ denote the Gibbs specification on $Y$ for $\Phi$.
	\begin{enumerate}[label={\rm (\alph*)}]
		\item[\ref{thm:DLR:slices:LR}]
		Let $\widehat{K}$ denote the relative Gibbs specification on $\Omega_S$ for $\widehat{\Phi}$. 
		Since $Y$ satisfies the TMP,
		$\mu$ is a Gibbs measure for $\Phi$ by the (non-relative) Lanford--Ruelle theorem
		(Theorem~\ref{thm:DLR:relative}\ref{thm:DLR:LR:relative} with trivial environment).
		By Proposition \ref{prop:slices-interactions},
		for every $A\Subset\HH$ and $u\in(\Sigma^F)^A$,
		and $\mu$-almost every $(\theta,x)\in\Omega_S$,
		\begin{align}
			\mu\big([u]\,\big|\,\field{F}_\Theta \lor (\xi^F)^{\HH \setminus A}\big)(\theta,x)
				& = \mu\big([u] \,\big|\, \xi^{\GG \setminus AF})(\theta\lor x) \\
				& =  K_{AF}(\theta\lor x, [u] ) \\
				& = \widehat{K}_A((\theta, x), [u])
		\end{align}
		and so $\mu$ is a relative Gibbs measure on $\Omega_S$.
		Now, assuming that $\Omega_S$ is D-mixing relative to $\Pi_{S^\complement}\mu$, 
		by the relative Dobrushin theorem (Theorem~\ref{thm:DLR:relative}\ref{thm:DLR:D:relative}),
		$\mu $ is a relative equilibrium measure on $\Omega_S$
		for $\widehat{\Phi}$ relative to $\Pi_{S^\complement}\mu$.

		\item[\ref{thm:DLR:slices:D}]
		Let $S\isdef\HH F$ be an arbitrary $\HH$-slice in $\GG$.
		Let $\widehat{K}$ denote the relative Gibbs specification on $\Omega_S$ for $\widehat{\Phi}$. 
		Since $\mu$ is a relative equilibrium measure for $\widehat{\Phi}$,
		we can apply the relative Lanford--Ruelle theorem
		(Theorem~\ref{thm:DLR:relative}\ref{thm:DLR:LR:relative})
		to get that $\mu$ is relative Gibbs for $\widehat{\Phi}$.
		Using Proposition \ref{prop:slices-interactions},
		it follows that for
		every $A \Subset \HH$ and $u \in \Sigma^{AF}$
		and $\mu$-almost every $y\in Y$, %
		we have
		\begin{align}
			\mu\big([u] \,\big|\, \xi^{\GG \setminus AF}\big)(y)
				& = \mu\big([u] \,\big|\, \field{F}_\Theta \lor (\xi^F)^{\HH \setminus A})(y_{S^\complement},y_S) \\
				& = \widehat{K}_A((y_{S^\complement}, y_S), [u] ) \\
				& =  K_{AF}(y, [u]).
		\end{align}
		Thus, $\mu$ satisfies the Gibbs condition for sets of the form
		$AF$, with $A \subseteq \HH$. Since the collection of
		sets of the form $AF$, for all such $A$ and $F$, forms a cofinal
		subset of the collection of finite subsets of $\GG$, $\mu$ is a Gibbs
		measure for $\Phi$ (see Remark~1.24 in \cite{Geo88}). Since $\mu$ is
		$\GG$-invariant and Gibbs, it is an equilibrium measure
		by the (non-relative) Dobrushin theorem
		(Theorem~\ref{thm:DLR:relative}\ref{thm:DLR:D:relative} with trivial environment).
		\qedhere
	\end{enumerate}
\end{proof}

Note that in part~\ref{thm:DLR:slices:D} of the above
theorem, we can use Proposition~\ref{prop:slices-TMP} to replace
the condition of relative TMP for every slice with the condition that
$Y$ satisfies TMP.

\begin{remark}[Recovering Dobrushin--Lanford--Ruelle theorem]
	When $\HH$ is the trivial subgroup, the statement of Theorem~\ref{thm:DLR:slices}
	recovers the statement of the Dobrushin--Lanford--Ruelle theorem
	(Theorem~\ref{thm:DLR:relative} with trivial environment).
	Indeed, in this case, $\HH$-slices of $\GG$ are precisely the finite subsets of $\GG$
	and thus the conditions of relative TMP and relative D-mixing
	become trivial.
	Note that according to Corollary~\ref{cor:gibbs:local-optimality},
	$\mu$ is a Gibbs measure if and only if
	it is a relative equilibrium measure for every $\HH$-slice of $\GG$. %
	\hfill\exampleqed
\end{remark}

\begin{remark}[The missing counter-example]
	Examples~\ref{ex:slices_notTMP_yesRelDmix} and~\ref{ex:slices_yesTMP_notRelDmix}
	show that neither of the two conditions	in Theorem~\ref{thm:DLR:slices}\ref{thm:DLR:slices:LR},
	namely TMP and relative D-mixing,
	can be dropped.
	On the other hand, Example~\ref{ex:slices_yesTMP_notDmix}
	shows that Theorem~\ref{thm:DLR:slices}\ref{thm:DLR:slices:D}
	would not hold if we dropped the D-mixing condition.
	This begs the question of whether the remaining condition of TMP can be dropped
	in Theorem~\ref{thm:DLR:slices}\ref{thm:DLR:slices:D}.
	
	However, a counter-example in which Theorem~\ref{thm:DLR:slices}\ref{thm:DLR:slices:D}
	fails in absence of TMP would be more complicated to construct.
	In fact, as the following argument suggests, such an example may require
	$Y$ to satisfy D-mixing but not the UFP (see Sec.~\S\ref{subsec:cond_D}),
	at least when $\GG\isdef\ZZ^2$ and $\HH\isdef\ZZ$.
	We do not know if such a subshift exists.
	
	Consider the basic case of horizontal strips on two-dimensional subshifts,
	thus $\GG\isdef\ZZ^2$ and $\HH\isdef\ZZ$.
	Suppose that $Y\subseteq\Sigma^{\ZZ^2}$ has the UFP with respect to
	the sequence of boxes $F_n\isdef[-n,n]^2$.
	Let us sketch an argument showing that if a $\ZZ^2$-invariant measure $\mu$ on $Y$ has
	maximal relative entropy (with respect to horizontal shift) on every horizontal strip,
	it also maximizes entropy on $Y$ (with respect to two-dimensional shift). 

	Indeed, let $\mu'$ be any other $\ZZ^2$-invariant measure on $Y$
	and suppose that the $\ZZ^2$-entropy of $\mu'$ is larger than
	the $\ZZ^2$-entropy of $\mu$. %
	Then there exists $\varepsilon>0$ such that
	$H_{\mu'}(\xi^{F_n}) \geq H_\mu(\xi^{F_n}) + \varepsilon\abs{F_n}$
	for all sufficiently large $n$.
	By the UFP, there exists a non-negative integer $r$ such that
	for every $y,y'\in Y$ and $n\in\NN$, there exists a configuration $\tilde{y}\in Y$
	that agrees with $y$ on $F_n$ and with $y'$ outside $F_{n+r}$.
	Now, consider the strip $S\isdef\ZZ\times[-n-r,n+r]$ and the sequence $\ldots,B_{-1},B_0,B_1,\ldots$ of translates of $F_n$ contained in $\ZZ\times[-n,n]$ in such a way that each $B_k$ is at distance $r+1$ from $B_{k-1}$ and $B_{k+1}$.
	Let us draw a random configuration $\pmb{y}$ from $\Sigma^{\ZZ^2}$ by choosing
	$\pmb{y}_{B_k}$ (for $k\in\ZZ$) according to $\mu'$,
	and $\pmb{y}_{\ZZ^2\setminus S}$ according to $\mu$, all independently of one another.
	By the UFP, the remaining symbols can be chosen in such a way
	that $\pmb{y}$ is (almost surely) in $Y$.  Let $\tilde{\mu}_0$ be the distribution of $\pmb{y}$.
	This is not necessarily horizontally invariant, so let $\tilde{\mu}$ be a horizontally invariant
	measure obtained from $\tilde{\mu}_0$ by the standard averaging procedure.
	One can now verify that when $n$ is large enough,
	the relative entropy of $\tilde{\mu}$ on $S$ given its complement
	is larger than that of $\mu$, contradicting the assumption.
	\hfill\exampleqed
\end{remark}

In concrete examples,
the conditions of Theorem~\ref{thm:DLR:slices} (TMP, D-mixing and relative D-mixing)
can be cumbersome to verify.
Clearly, these conditions are satisfied if $Y$ is a full shift.
A more relaxed condition covering important examples such as the hard-core model
is the notion of TSSM introduced in Section~\S\ref{sec:hypoth}.
The following corollary (which contains Theorem~\ref{thm:Z2_slices} as a special case)
is a handy version of Theorem~\ref{thm:DLR:slices} in which generality
is traded for simplicity.

\begin{corollary}[Dobrushin--Lanford--Ruelle theorem for slices: handy version] %
	Let $Y$ be a subshift on a countable amenable group $\GG$,
	and assume that $Y$ satisfies TSSM.
	Let $\Phi$ be an absolutely summable interaction on $Y$.
	Let $\HH$ be a subgroup of $\GG$.
	Let $\mu$ be a $\GG$-invariant probability measure on $Y$.
	Then $\mu$ is an equilibrium measure for $\Phi$
	if and only if for every $\HH$-slice $S$ of $\GG$,
	$\mu$ is an equilibrium measure on $\Omega_S$ for $\widehat{\Phi}$ relative to $\Pi_{S^\complement}\mu$,
	where $\widehat{\Phi}$ denotes the relative interaction corresponding to $\Phi$ on~$\Omega_S$.
\end{corollary}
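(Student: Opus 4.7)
My plan is to reduce the corollary to Theorem~\ref{thm:DLR:slices} by verifying its hypotheses from the single assumption that $Y$ is TSSM. For the forward ($\Longrightarrow$) implication I will invoke Theorem~\ref{thm:DLR:slices}\ref{thm:DLR:slices:LR}, which requires weak TMP for~$Y$ together with D-mixing of $\Omega_S$ relative to $\Pi_{S^\complement}\mu$ for every $\HH$-slice~$S$; for the converse I will invoke Theorem~\ref{thm:DLR:slices}\ref{thm:DLR:slices:D}, which requires relative weak TMP on every~$\Omega_S$ together with D-mixing of~$Y$.

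Most of these hypotheses come for free via Proposition~\ref{prop:TSSM_is_SFT_SI}, which tells us that TSSM forces $Y$ to be a strongly irreducible SFT. Being an SFT, $Y$ has weak TMP; by Proposition~\ref{prop:slices-TMP} this transfers to relative weak TMP on every~$\Omega_S$. Being SI, $Y$ has UFP and hence D-mixing (Figure~\ref{fig:conditions}). The argument therefore reduces to a single point: establishing relative D-mixing of $\Omega_S$ for each $\HH$-slice~$S=\HH F_0$.

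For that remaining point I intend to prove the stronger property of uniform relative UFP on~$\Omega_S$. Writing $F_{\text{ssm}}\Subset\GG$ for the TSSM witness of~$Y$ (which we may enlarge to contain $1_\GG$) and assuming $1_\GG\in F_0$, I expect the witness in~$\HH$ to be $\tilde F\isdef (F_0 F_{\text{ssm}} F_{\text{ssm}}^{-1} F_0^{-1})\cap\HH$. The key point is that for $\theta\in\Theta_S$ and $x,x'\in X_\theta$, the configurations $y\isdef\theta\lor x$ and $y'\isdef\theta\lor x'$ in~$Y$ already coincide on all of~$S^\complement$, so one can apply the TSSM three-piece combination with $S^\complement$ serving as the common anchor, the finite set $A'F_0$ as the region where we keep~$y$, and (a finite-window portion of) $(\HH\setminus A'\tilde F)F_0$ as the region where we keep~$y'$. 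A standard compactness argument over growing finite windows in~$\GG$ then produces a global $z\in Y$ with $z_{S^\complement}=\theta$, i.e.\ an element of $X_\theta$ realizing the desired mixing. Since $\tilde F$ is $\theta$-independent and $\HH$ (being a subgroup of a countable amenable group) admits a F\o lner sequence $(F_n)_{n\in\NN}$, the estimate $|F_n\tilde F\setminus F_n|=o(|F_n|)$ yields relative D-mixing with respect to any probability measure on~$\Theta_S$, in particular $\Pi_{S^\complement}\mu$.

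The main obstacle I foresee is purely bookkeeping: both $F_{\text{ssm}}$ and $F_0$ live in~$\GG$ rather than in~$\HH$, so checking the TSSM buffer disjointness $A'F_0 F_{\text{ssm}}\cap(\HH\setminus A'\tilde F)F_0 F_{\text{ssm}}=\varnothing$ amounts to carefully tracking how products of finite subsets of~$\GG$ intersect the subgroup~$\HH$; this is exactly where the precise form of $\tilde F$ is used. Once that set-theoretic check is settled, the compactness passage from finite-window configurations to a configuration of~$Y$ and its reassembly at the level of~$\Omega_S$ are routine.
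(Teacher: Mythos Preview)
Your proposal is correct and follows essentially the same route as the paper. Both reduce to Theorem~\ref{thm:DLR:slices}, obtain weak TMP and D-mixing of $Y$ from Proposition~\ref{prop:TSSM_is_SFT_SI}, and transfer weak TMP to the slices via Proposition~\ref{prop:slices-TMP}; for the remaining relative D-mixing of $\Omega_S$, both use the environment $\theta$ on $S^\complement$ as the common anchor in the TSSM three-piece combination and pass to infinite windows by compactness. The only cosmetic difference is that the paper phrases the conclusion as relative SI (combining two arbitrary finite windows $A,B\Subset\HH$) whereas you phrase it as a uniform relative UFP (exhibiting $A'\tilde F$ as a mixing set for every finite $A'\Subset\HH$); the witness set is literally the same, $\tilde F=(F_0F_{\text{ssm}}F_{\text{ssm}}^{-1}F_0^{-1})\cap\HH$, and your buffer-disjointness check is exactly the computation behind the paper's observation that $A(F_0F_{\text{ssm}})\cap B(F_0F_{\text{ssm}})\neq\varnothing$ forces $A\tilde F\cap B\tilde F\neq\varnothing$. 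The assumption $1_\GG\in F_0$ is not actually needed (your $\tilde F$ contains $1_\GG$ regardless), so you may drop it.
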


\begin{proof}
	By Theorem~\ref{thm:DLR:slices},
	it suffices to show that $Y$ satisfies TMP and is D-mixing, and that
	for every $\HH$-slice $S\isdef\HH F$,
	the relative system $\Omega_S$ is relatively D-mixing.
	From Proposition~\ref{prop:TSSM_is_SFT_SI}, we know that
	$Y$ is an SI SFT, in particular,
	it satisfies TMP and is D-mixing.
	Thus, it remains to show that $\Omega_S$ is relatively D-mixing.
	We shall in fact show that $\Omega_S$ is relatively SI.

	Indeed, let $R\Subset\GG$ be a finite set that certifies the TSSM property of~$Y$.
	Fix $\theta\in\Theta$ and let $x,y\in X_\theta$.
	Let $A,B\Subset\HH$ be such that $(AF)R\cap (BF)R=\varnothing$.
	Let $g_0,g_1,\ldots$ be an enumeration of the elements of $\GG\setminus S$ and
	set $M_n\isdef\{g_0,g_1,\ldots,g_n\}$ for $n\in\NN$.
	Note that $\theta_{M_n}\lor x_A$ and $\theta_{M_n}\lor y_B$ are in $L(Y)$.
	Therefore, by TSSM, there is a configuration $z^{(n)}\in Y$ such that
	$z^{(n)}_{M_n}=\theta_{M_n}$, $z^{(n)}_{AF}=x_A$ and $z^{(n)}_{BF}=y_B$.
	Let $z$ be an accumulation point of $z^{(n)}$ as $n\to\infty$.
	Since $Y$ is closed, $z\in Y$.
	On the other hand, $z_{\GG\setminus S}=\theta$, $z_{AF}=x_A$ and $z_{BF}=y_B$. Note that if we define $D \isdef (FR)(FR)^{-1} \cap \HH$ then whenever $A(FR) \cap B(FR) \neq \varnothing$ we have that $AD \cap BD \neq \varnothing$. This shows that $X_{\theta}$ is strongly irreducible with the finite set $D$ as a witness. Since $D$ does not depend upon $\theta$, we find that $\Omega_S$
	is relatively SI.
\end{proof}

\section{Relative version of Meyerovitch's theorem}
\label{sec:relative_tom}

Before proving Theorem~\ref{thm:Tom:relative} we need to introduce two
technical tools. One is the concept of non-overlapping patterns and the
second one is a subshift which separates shapes. Let $A\subseteq\GG$
be a finite set. We say that two patterns $u,v\in\Sigma^A$ are
\emph{non-overlapping} in $\Omega$
if
\begin{align}
	g_1([u]\cup [v])\cap g_2([u]\cup [v]) &= \varnothing
\end{align}
whenever $g_1,g_2\in\GG$ are two distinct elements with $g_1A\cap g_2 A\neq\varnothing$.
The \emph{hard-core} shift with \emph{shape}~$A$ is defined as
\useshortskip
\begin{align}
	Y &\isdef
		\left\{
			y\in\{\symb{0},\symb{1}\}^\GG:
			\text{%
			$y_k=y_\ell=\symb{1}$ implies either $k=\ell$ or $kA\cap \ell A=\varnothing$}
		\right\}\;.
\end{align}
If we think of symbol $\symb{1}$ as a particle with shape $A$,
then $Y$ consists of all configurations of particles whose volumes do not overlap.

We will proceed through the proof in two steps.
First, we treat the simpler case
in which $u$ and $v$ are non-overlapping in $\Omega$.
We encode the relative system $\Omega$ into another relative system $\widehat{\Omega}$
in which the symbolic part contains only the information about the occurrences of
$u$ and $v$ wherever they are interchangeable.
This new system will have the relative TMP, %
even more, it will have the relatively independence property,
and thus the relative Lanford--Ruelle theorem will yield the result.
In the second step, we treat the general case where $u$ and $v$	might overlap.
This time we use an auxiliary subshift $Y$ (namely, the hard-core shift with shape $A$)
and construct a new relative system $\widetilde{\Omega}\isdef\Omega\times Y$
in which the symbolic part has an extra layer $y\in Y$ chosen independently of
$x$ and~$\theta$.
The auxiliary subshift $Y$ consists of configurations of particles on $\GG$
that are sufficiently far apart.
Associated to $u$ and $v$, there are two non-overlapping patterns $\tilde{u}$ and $\tilde{v}$,
which are simply $u$ and $v$ with a particle on top.
Since $\tilde{u}$ and $\tilde{v}$ are non-overlapping, the result of the first step will hold.
The general result for $u$ and $v$ will then follow from the independence of the auxiliary layer.

\begin{proof}[Proof of Theorem~\ref{thm:Tom:relative}]
	Let $u,v\in\Sigma^A$ be non-overlapping in $\Omega$.
	Without loss of generality, we shall assume that $A \ni 1_{\GG}$; if not, we reduce to this case
	by shifting $x,\theta,u$ and $v$ appropriately.
	Let $Z\isdef \{ \hexed{\ },\hexed{u},\hexed{v}\}^{\GG}$ and consider the map
	\useshortskip
	\begin{align}
		\phi \colon \Omega \to (\Theta \times (\Sigma \cup \{\sun,\star\})^{\GG}) \times  Z
	\end{align}
	where $\phi(\theta,x) \isdef ((\theta,\widehat{x}),\widehat{z})$ is defined by leaving $\theta$ unchanged
	and setting
	\begin{align}
		\widehat{x}_k &\isdef
			\begin{cases}
				\sun & \text{if $k^{-1}(\theta,x) \in [u]\cup[v]$ and $k^{-1}\theta \in \Theta_{u,v}$,}\\
				\star &
					\text{if $k\in\ell A\setminus \ell$ for some $\ell\in\GG$ where
					$\ell^{-1}(\theta,x) \in [u]\cup[v]$ and $\ell^{-1}\theta \in \Theta_{u,v}$,}\\
				x_k & \text{otherwise.}
		\end{cases}\\
		\widehat{z}_k &\isdef
			\begin{cases}
				\hexed{u} & \text{if $k^{-1}(\theta,x) \in [u]$ and $k^{-1}\theta \in \Theta_{u,v}$,} \\
				\hexed{v} & \text{if $k^{-1}(\theta,x) \in [v]$ and $k^{-1}\theta \in \Theta_{u,v}$,} \\
				\hexed{\ } & \text{otherwise.}
			\end{cases}
	\end{align}
	In other words, $\widehat{x}$ is obtained from $x$ by erasing the appearances of $u$ and $v$
	wherever they are interchangeable
	(i.e., at positions $k$ such that $u$ and $v$ are interchangeable for $k^{-1}\theta$).
	Each erased pattern is replaced by the symbols $\star$ and $\sun$,
	where $\sun$ indicates the reference point of the occurrence.
	The information regarding the erased occurrences of $u$ and $v$
	is then recorded in $\widehat{z}$.

	The map $\phi$ is clearly ${\GG}$-equivariant, bijective and measurable.
	Furthermore, given $((\theta,\widehat{x}),\widehat{z})=\phi(\theta,x)$,
	one can recover $x$ from $\widehat{x}$ and $\widehat{z}$ alone, by means of a block map.
	More precisely, each symbol $x_k$ can be recovered by looking
	at the restrictions of $\widehat{x}$ and $\widehat{z}$ to $kA^{-1}$
	using the local rule
	\begin{align}
		\Xi \colon (\Sigma \cup \{\sun,\star\})^{A^{-1}} \times \{\hexed{\ },\hexed{u},\hexed{v}\}^{A^{-1}}
			\to \Sigma
	\end{align}
	given by
	\begin{align}
		\Xi(p,q) &\isdef
			\begin{cases}
				u_a & \text{if $p_{a^{-1}}= \sun$ and $q_{a^{-1}}= \hexed{u}$ for some $a\in A$,} \\
				v_a & \text{if $p_{a^{-1}}= \sun$ and $q_{a^{-1}}= \hexed{v}$ for some $a\in A$,} \\
				p_{1_{\GG}} & \text{otherwise.}
			\end{cases}
	\end{align}
	The local rule $\Xi$ is well-defined because $u$ and $v$ are non-overlapping.

	Consider the system $\widehat{\Omega} \isdef \phi(\Omega)$ where the environment
	$\widehat{\Theta}$ is the set of all $(\theta,\widehat{x})$ that appear in the projection
	of $\widehat{\Omega}$ on the first coordinate and
	$\widehat{X}_{(\theta,\widehat{x})}$ is the set of all $\widehat{z} \in Z$ that are consistent
	with $(\theta,\widehat{x})$ in $\widehat{\Omega}$.
	The new system $\widehat{\Omega}$ has the relative TMP %
	--- even more, it has the relative independence property.
	Let $\widehat{\mu} \isdef \phi\mu$, and define $\widehat{\nu}$ as the projection of $\widehat{\mu}$
	onto $\widehat{\Theta}$.
	Define a relative interaction $\widehat{\Phi}$ on $\widehat{\Omega}$ by
	\useshortskip
	\begin{align}
		\widehat{\Phi}_{B}((\theta,\widehat{x}),\widehat{z}) & \isdef
			\sum_{C: C \cdot A^{-1} = B}\Phi_C\big(\phi^{-1}((\theta,\widehat{x}),\widehat{z})\big)\\
		& =
			\sum_{C: C \cdot A^{-1} = B}
			\Phi_C\Big(
				\theta ,\big\{\Xi\big(
					(c^{-1}\widehat{x})_{A^{-1}},(c^{-1}\widehat{z})_{A^{-1}}
				\big)\big\}_{c \in C}
			\Big),
	\end{align}
	and let $\widehat{E}$ denote the corresponding relative Hamiltonian. It is easy to verify that
	$\widehat{\Phi}$ is	absolutely summable,
	and that, for every $\GG$-invariant probability measure $\mu$,
	\begin{align}\label{eqn:hamiltoniano_tom}
		\mu(f_\Phi) &= \widehat{\mu}(f_{\widehat{\Phi}})
	\end{align}
	(see Sec.~\S\ref{sec:hamiltoniano_tom:argument}).

	We claim that $\widehat{\mu}$ is an equilibrium measure for $\widehat{\Phi}$
	relative to $\widehat{\nu}$.
	Indeed, let $\widehat{\underline{\mu}}$ be any other $\GG$-invariant measure
	that projects to $\widehat{\nu}$, and let $\underline{\mu}$ be the induced measure on $\Omega$.
	Since $\mu$ is assumed to be an equilibrium measure for $\Phi$ relative to $\nu$
	and $\underline{\mu}$ projects to $\nu$,
	we have
	\begin{align}
		h_{\underline{\mu}}(\Omega \mid \Theta) - \underline{\mu}(f_\Phi)
		&\leq
			h_{\mu}(\Omega \mid \Theta)  - \mu(f_\Phi) \;.
	\end{align}
	By the chain rule,
	$h_{{\underline\mu}}({\Omega} \mid {\Theta}) =
		h_{\widehat{\underline{\mu}}}(\widehat{\Theta} \mid {\Theta}) +
		h_{\widehat{\underline{\mu}}}(\widehat{\Omega} \mid \widehat{\Theta})$
	and
	$h_{{\mu}}({\Omega} \mid {\Theta}) =
		h_{{\widehat{\mu}}}(\widehat{\Theta} \mid {\Theta}) +
		h_{{\widehat{\mu}}}(\widehat{\Omega} \mid \widehat{\Theta})$.
	As both $\widehat{\underline{\mu}}$ and $\widehat{\mu}$ project to $\widehat{\nu}$, we have
	$h_{\widehat{\underline{\mu}}}(\widehat{\Theta} \mid {\Theta}) =
		h_{\widehat{\mu}}(\widehat{\Theta} \mid {\Theta})$.
	Putting this together with equation~(\ref{eqn:hamiltoniano_tom}) yields
	\begin{align}
		h_{\widehat{\underline{\mu}}}(\widehat{\Omega} \mid \widehat{\Theta})
			- \widehat{\underline{\mu}}(f_{\widehat{\Phi}})
		&\leq
			h_{\widehat{\mu}}(\widehat{\Omega} \mid \widehat{\Theta})
			- \widehat{\mu}(f_{\widehat{\Phi}}) \;,
	\end{align}
	which establishes the claim.
	
	Denote by $[\hexed{u}]$ and $[\hexed{v}]$ the cylinder set
	consisting of all points $((\theta,\widehat{x}),\widehat{z})\in\widehat{\Omega}$
	in which respectively $\hexed{u}$ and $\hexed{v}$ appear at position $1_\GG$ of $\widehat{z}$.
	Recall that $\xi$ denotes the partition of $\Omega$ induced by by
	the projection $(\theta,x)\mapsto x_{1_\GG}$.
	Similarly, we denote by $\widehat{\xi}$ the partition of
	$\widehat{\Omega}$ induced by the projection
	$((\theta,\widehat{x}),\widehat{z})\mapsto\widehat{z}_{1_\GG}$,
	and write $\field{F}_{\widehat{\Theta}}$ for the $\sigma$-algebra on $\widehat{\Omega}$
	generated by $\widehat{\Theta}$.
	Applying Theorem~\ref{thm:DLR:relative}\ref{thm:DLR:LR:relative},
	we know that $\widehat{\mu}$ is a relative Gibbs measure for $\widehat{\Phi}$,
	thus
	for $\widehat{\mu}$-almost every $((\theta,\widehat{x}),\widehat{z})\in [\hexed{u}] \cup [\hexed{v}]$,
	\begin{align}
		\MoveEqLeft
		\widehat{\mu}\big([\hexed{u}] \,\big|\, \widehat{\xi}^{\{1_\GG\}^{\complement}}
			\vee \field{F}_{\widehat{\Theta}}\big)
			((\theta,\widehat{x}),\widehat{z}) \nonumber \\
		\label{eq_tomrellr_a}
		&=
			\frac{
				\ee^{-\widehat{E}_{\{1_\GG\} \mid
					\{1_\GG\}^{\complement}}((\theta,\widehat{x}),\widehat{z}_{\{1_\GG\}^{\complement}}
					\vee \hexed{u})}
			}{
				\ee^{-\widehat{E}_{\{1_\GG\} \mid
					\{1_\GG\}^{\complement}}((\theta,\widehat{x}),\widehat{z}_{\{1_\GG\}^{\complement}}
					\vee \hexed{u})}
				+
				\ee^{-\widehat{E}_{\{1_\GG\} \mid
					\{1_\GG\}^{\complement}}((\theta,\widehat{x}),\widehat{z}_{\{1_\GG\}^{\complement}}
					\vee \hexed{v})}
			} \;, \\
		\MoveEqLeft
		\widehat{\mu}\big([\hexed{v}] \,\big|\, \widehat{\xi}^{\{1_\GG\}^{\complement}}
			\vee \field{F}_{\widehat{\Theta}}\big)
			((\theta,\widehat{x}),\widehat{z}) \nonumber \\
		\label{eq_tomrellr_b}
		&=
			\frac{
				\ee^{-\widehat{E}_{\{1_\GG\} \mid
					\{1_\GG\}^{\complement}}((\theta,\widehat{x}),\widehat{z}_{\{1_\GG\}^{\complement}}
					\vee \hexed{v})}
			}{
				\ee^{-\widehat{E}_{\{1_\GG\} \mid
					\{1_\GG\}^{\complement}}((\theta,\widehat{x}),\widehat{z}_{\{1_\GG\}^{\complement}}
					\vee \hexed{u})}
				+
				\ee^{-\widehat{E}_{\{1_\GG\} \mid
					\{1_\GG\}^{\complement}}((\theta,\widehat{x}),\widehat{z}_{\{1_\GG\}^{\complement}}
					\vee \hexed{v})}
			} \;.
	\end{align}
	Putting equations~\eqref{eq_tomrellr_a} and~\eqref{eq_tomrellr_b} together,
	we obtain
	\begin{align}
	\label{eq_tomrellr_c}
		\frac{
			\widehat{\mu}\big([\hexed{u}] \,\big|\, \widehat{\xi}^{\{1_\GG\}^{\complement}}
			\vee \field{F}_{\widehat{\Theta}}\big)
			((\theta,\widehat{x}),\widehat{z})
		}{
			\ee^{-\widehat{E}_{\{1_\GG\} \mid
				\{1_\GG\}^{\complement}}((\theta,\widehat{x}),\widehat{z}_{\{1_\GG\}^{\complement}}
				\vee \hexed{u})}
		} &=
		\frac{
			\widehat{\mu}\big([\hexed{v}] \,\big|\, \widehat{\xi}^{\{1_\GG\}^{\complement}}
			\vee \field{F}_{\widehat{\Theta}}\big)
			((\theta,\widehat{x}),\widehat{z})
		}{
			\ee^{-\widehat{E}_{\{1_\GG\} \mid
				\{1_\GG\}^{\complement}}((\theta,\widehat{x}),\widehat{z}_{\{1_\GG\}^{\complement}}
				\vee \hexed{v})}
		}
	\end{align}
	for $\widehat{\mu}$-almost every
	$((\theta,\widehat{x}),\widehat{z})\in [\hexed{u}]\cup [\hexed{v}]$.

	On one hand, letting $(\theta,x)=\phi^{-1}((\theta,\widehat{x}),\widehat{z})$, we have
	\begin{align}
		\widehat{E}_{\{1_\GG\} \mid \{1_\GG\}^{\complement}}
			((\theta,\widehat{x}),\widehat{z}_{\{1_\GG\}^{\complement}} \vee \hexed{u})
		& =
			\sum_{B \ni 1_\GG} \widehat{\Phi}_B((\theta,\widehat{x}),\widehat{z}_{\{1_\GG\}^{\complement}}
				\vee \hexed{u}) \\
		& =
			\sum_{B \ni 1_\GG} \sum_{C: C\cdot A^{-1} = B}
				\Phi_C(\theta, x_{A^{\complement}} \vee u) \\
		& =
			\sum_{C: C\cdot A^{-1} \ni 1_\GG} \Phi_C(\theta, x_{A^{\complement}} \vee u) \\
		& =
			\sum_{C: C\cap A\neq\varnothing}
				\Phi_C(\theta, x_{A^{\complement}} \vee u) \\
		\label{eq_tomrellr_d_1}
		& =
			E_{A \mid A^{\complement}}(\theta, x_{A^{\complement}} \vee u) \;,
	\end{align}
	and by a similar argument
	\begin{align}
		\label{eq_tomrellr_d}
		\widehat{E}_{\{1_\GG\} \mid \{1_\GG\}^{\complement}}
			((\theta,\widehat{x}),\widehat{z}_{\{1_\GG\}^{\complement}} \vee \hexed{v})
		&=
			E_{A \mid A^{\complement}}(\theta, x_{A^{\complement}} \vee v) \;.
	\end{align}
	On the other hand,
	\begin{align}
		\label{eq_tomrellr_e}
		\widehat{\mu}([\hexed{u}] \mid \widehat{\xi}^{\{1_\GG\}^{\complement}}
			\vee \field{F}_{\widehat{\Theta}})((\theta,\widehat{x}),\widehat{z})
			& = \mu([u] \mid (\xi_{X})^{A^\complement}\vee\field{F}_{\Theta},[u]\cup [v])(\theta,x) \;,\\
		\label{eq_tomrellr_f}
		\widehat{\mu}([\hexed{v}] \mid \widehat{\xi}^{\{1_\GG\}^{\complement}}
			\vee \field{F}_{\widehat{\Theta}})((\theta,\widehat{x}),\widehat{z})
			& = \mu([v] \mid (\xi_{X})^{A^\complement}\vee\field{F}_{\Theta},[u]\cup [v])(\theta,x) \;.
	\end{align}
	Putting together
	equations~\eqref{eq_tomrellr_c}, \eqref{eq_tomrellr_d_1}, \eqref{eq_tomrellr_d},
	\eqref{eq_tomrellr_e} and~\eqref{eq_tomrellr_f},
	we get that for $\mu$-almost every $(\theta,x)\in [u]\cup [v]$ satisfying $\theta \in \Theta_{u,v}$,
	\begin{align}
		\label{eq_tomrellr_g}
		\frac{
			\mu([u] \mid \xi^{A^\complement}\vee\field{F}_{\Theta})(\theta,x)
		}{
			\ee^{-E_{A \mid A^\complement}(\theta, x_{A^{\complement}} \vee u)}
		}
		&=
		\frac{
			\mu([v] \mid \xi^{A^\complement}\vee\field{F}_{\Theta})(\theta,x)
		}{
			\ee^{-E_{A \mid A^\complement}(\theta, x_{A^{\complement}} \vee v)}
		} \;.
	\end{align}
	This concludes the proof in the case where $u$ and $v$ are non-overlapping.

	We now consider the general case. If $u = v$, the result is immediate.
	Otherwise, let $Y$ be the hard-core shift with shape $A$.
	We claim that there must exist a measure of maximal entropy $\pi$ on $Y$
	such that $\pi([\symb{1}])>0$.
	This can be seen in various ways, for instance by verifying that
	$Y$ has positive topological entropy, or by invoking the Lanford--Ruelle theorem.
	For a more direct argument, note that
	if $\pi_0$ is a $\GG$-invariant measure such that $\pi_0([\symb{1}]) = 0$, then clearly
	$h_{\pi_0}(Y) = 0$.
	Hence, it is enough to show that there exists a $\GG$-invariant measure
	giving positive measure to $[\symb{1}]$.
	By Lemma~\ref{lem:delone}, there exists a set $D \subseteq \GG$ which is $A$-separated
	and has positive uniform lower density with respect to
	a F\o{}lner sequence $(F_n)_{n \in \NN}$.
	Let $w\in\{\symb{0},\symb{1}\}^\GG$ be the configuration with $w_k\isdef\symb{1}$
	if and only if $k\in D$,
	and define $\pi_n \isdef \abs{F_n}^{-1}\sum_{g \in F_n}g^{-1}\delta_{w}$.
	Any accumulation point of $(\pi_n)_{n \in \NN}$ is a $\GG$-invariant measure
	$\pi$ that satisfies $\pi([\symb{1}])>0$.
	
	Now consider the system $\widetilde{\Omega}\isdef\Omega \times Y$
	as a relative system with environment $\Theta$
	and $\widetilde{X}_\theta\isdef\{(x,y): \text{$x\in X_\theta$ and $y\in Y$}\}$.
	Endow $\widetilde{\Omega}$ with the measure $\tilde{\mu} \isdef \mu \times \pi$
	and the interaction $\widetilde{\Phi}_C(\theta,(x,y)) \isdef \Phi_C(\theta,x)$.
	By construction, $\tilde{\mu}$ is an equilibrium measure for $\widetilde{\Phi}$ relative to $\nu$.
	Consider now the patterns $\tilde{u},\tilde{v} \in (\Sigma \times \{\symb{0},\symb{1}\})^A$
	defined by
	\begin{align}
		\tilde{u}_a &=
			\begin{cases}
				(u_a,\symb{1}) & \text{if $a = 1_\GG$,} \\
				(u_a,\symb{0}) & \text{otherwise,}\\
			\end{cases}
		& \tilde{v}_a &=
			\begin{cases}
				(v_a,\symb{1}) & \text{if $a = 1_\GG$,} \\
				(v_a,\symb{0}) & \text{otherwise.}\\
			\end{cases}
	\end{align}
	
	By the definition of $Y$ and the fact that $u \neq v$, the patterns $\tilde{u},\tilde{v}$ are
	non-overlapping in $\widetilde{\Omega}$. We can thus apply the result
	for non-overlapping patterns to obtain that for
	$\tilde{\mu}$-almost every $(\theta,(x,y))\in [\tilde{u}] \cup [\tilde{v}]$
	such that $\theta \in \Theta_{\tilde{u},\tilde{v}}$,
	\begin{align}
		\label{eq_relmegatom_a}
		\frac{
			\tilde{\mu}\big([\tilde{u}]\, \big|\,\tilde{\xi}^{A^{\complement}}\vee\field{F}_\Theta\big)
				(\theta,(x,y))
		}{
			\ee^{-\widetilde{E}_{A|A^\complement}(\theta,\tilde{u}\lor (x,y)_{A^\complement})}
		} & =
		\frac{
			\tilde{\mu}\big([\tilde{v}]\, \big|\,\tilde{\xi}^{A^{\complement}}\vee\field{F}_\Theta\big)
				(\theta,(x,y))
		}{
			\ee^{-\widetilde{E}_{A|A^\complement}(\theta,\tilde{v}\lor (x,y)_{A^\complement})}
		} \;,
	\end{align}
	where $\tilde{\xi}$ denotes the partition of $\widetilde{\Omega}$
	induced by $(\theta,(x,y))\mapsto (x_{1_\GG},y_{1_\GG})$
	and $\widetilde{E}$ is the relative Hamiltonian associated to $\widetilde{\Phi}$.
	With some abuse of notation, we write $\field{F}_\Theta$ for the $\sigma$-algebras
	generated by $\Theta$ both in $\Omega$ and in $\widetilde{\Omega}$.
		
	By the definition of $\widetilde{\Phi}$, we have that
	\begin{align}
		\label{eq_relmegatom_b}
		\widetilde{E}_{A|A^\complement}(\theta,(x,y)) &= E_{A|A^\complement}(\theta,x) \;.
	\end{align}	
	Furthermore, as $\tilde{\mu} = \mu \times \pi$, we have
	\begin{align}
		\label{eq_relmegatom_c}
		\tilde{\mu}\big([\tilde{u}]\, \big|\,\tilde{\xi}^{A^{\complement}}\vee
			\field{F}_\Theta\big)(\theta,(x,y))
		& =
			\mu\big([u]\, \big|\,\xi^{A^{\complement}}\vee\field{F}_\Theta\big)(\theta,x)\cdot
			\pi\big([\symb{1}]\, \big|\,\zeta^{A^{\complement}} \big)(y) \;, \\
		\label{eq_relmegatom_c1}
		\tilde{\mu}\big([\tilde{v}]\, \big|\,\tilde{\xi}^{A^{\complement}}\vee
			\field{F}_\Theta\big)(\theta,(x,y))
		& =
			\mu\big([v]\, \big|\,\xi^{A^{\complement}}\vee\field{F}_\Theta\big)(\theta,x)\cdot
			\pi\big([\symb{1}]\, \big|\,\zeta^{A^{\complement}} \big)(y) \;,
	\end{align}
	where $\zeta$ stands for the partition of $Y$ generated by the symbol at the origin.
	
	Substituting \eqref{eq_relmegatom_b}, \eqref{eq_relmegatom_c} and~\eqref{eq_relmegatom_c1}
	in equation~\eqref{eq_relmegatom_a}, we get	that
	for $\mu$-almost every $(\theta,x)\in [u]\cup [v]$ satisfying $\theta \in \Theta_{u,v}$
	and $\pi$-almost every $y \in [\symb{1}]$,
	\begin{align}
		\MoveEqLeft
		\frac{
			\mu\big([u]\, \big|\,\xi^{A^{\complement}}\vee\field{F}_\Theta\big)(\theta,x)\cdot
			\pi\big([\symb{1}]\, \big|\,\zeta^{A^{\complement}} \big)(y)
		}{
			\ee^{-E_{A|A^\complement}(\theta,x_{A^\complement} \vee u)}
		} \nonumber\\
		\label{eq_relmegatom_d}
		& =
		\frac{
			\mu\big([v]\, \big|\,\xi^{A^{\complement}}\vee\field{F}_\Theta\big)(\theta,x)\cdot
			\pi\big([\symb{1}]\, \big|\,\zeta^{A^{\complement}} \big)(y)
		}{
			\ee^{-E_{A|A^\complement}(\theta,x_{A^\complement} \vee v)}
		} \;.
	\end{align}	
	Note that we may replace the condition ``$y \in [\symb{1}]$'' by
	``$y_{\GG \setminus \{1_\GG\}} \vee \symb{1} \in Y$'' and the equality will still hold.
	Also, if we integrate the factor
	$\pi\big([\symb{1}]\, \big|\,\zeta^{A^{\complement}} \big)(y)$ with respect to $\pi$,
	we obtain
	\begin{align}
		\pi([\symb{1}]) &=
			\int_{y: y_{\{1_\GG\}^\complement} \vee \symb{1} \in Y }
				\pi\big([\symb{1}]\, \big|\,\zeta^{A^{\complement}}\big)(y)\,\dd\pi(y)
			+
			\int_{y: y_{\{1_\GG\}^\complement} \vee \symb{1} \notin Y }
				\pi\big([\symb{1}]\, \big|\,\zeta^{A^{\complement}}\big)(y)\,\dd\pi(y) \;,
	\end{align}
	where the second term is $0$.
	Thus, integrating~\eqref{eq_relmegatom_d} with respect to $\pi$, we obtain
	\begin{align}
		\label{eq_relmegatom_e}
		\frac{
			\mu\big([u]\, \big|\,\xi^{A^{\complement}}\vee\field{F}_\Theta\big)(\theta,x)
			\cdot \pi\big([\symb{1}]\big)
		}{
			\ee^{-E_{A|A^\complement}(\theta,x_{A^\complement} \vee u)}
		}
		& =
		\frac{
			\mu\big([v]\, \big|\,\xi^{A^{\complement}}\vee\field{F}_\Theta\big)(\theta,x)
			\cdot \pi\big([\symb{1}]\big)
		}{
			\ee^{-E_{A|A^\complement}(\theta,x_{A^\complement} \vee v)}
		} \;.
	\end{align}
	As $\pi([\symb{1}]) > 0$, it follows that	
	\begin{align}
		\label{eq_relmegatom_f}
		\frac{
			\mu\big([u]\, \big|\,\xi^{A^{\complement}}\vee\field{F}_\Theta\big)(\theta,x)
		}{
			\ee^{-E_{A|A^\complement}(\theta,x_{A^\complement} \vee u)}
		}
		& =
		\frac{
			\mu\big([v]\, \big|\,\xi^{A^{\complement}}\vee\field{F}_\Theta\big)(\theta,x)
		}{
			\ee^{-E_{A|A^\complement}(\theta,x_{A^\complement} \vee v)}
		} \;.
	\end{align}
	for $\mu$-almost every $(\theta,x)\in [u]\cup [v]$ such that $\theta \in \Theta_{u,v}$.
	This concludes the proof of the theorem.
\end{proof}

We have used the relative Lanford--Ruelle theorem to prove
Theorem~\ref{thm:Tom:relative}.  We now show the converse implication,
so that the two theorems are really equivalent under fairly simple reductions.
More specifically, we show that when $\Omega$ has the relative TMP,
the conclusion of Theorem~\ref{thm:Tom:relative}
becomes equivalent to saying that $\mu$ is a relative Gibbs measure for $\Phi$ with marginal $\nu$.

\begin{proof}[Proof of Theorem~\ref{thm:DLR:relative}\ref{thm:DLR:LR:relative} using Theorem~\ref{thm:Tom:relative}]
	Let $A \Subset \GG$ and let $B\supseteq A$ be a memory set for $A$ witnessing
	the TMP of $\Omega$ relative to $\nu$.
	Let $u,v\in\Sigma^A$ be arbitrary patterns.  Then, for every $w\in\Sigma^{B\setminus A}$
	and $\nu$-almost every $\theta\in\Theta$, the patterns $w\lor u$ and $w\lor v$ are
	interchangeable for $\theta$ provided they are both in $L_B(X_\theta)$. %
	From Theorem~\ref{thm:Tom:relative}, it follows that for every $w\in\Sigma^{B\setminus A}$,
	\begin{align}
		\frac{
			\mu\big([w\lor u]\,\big|\,\xi^{B^\complement}\lor\field{F}_\Theta\big)(\theta,x)
		}{
			\ee^{-E_{B|B^\complement}(\theta,w\lor u\lor x_{B^\complement})}
		}
		&=
		\frac{
			\mu\big([w\lor v]\,\big|\,\xi^{B^\complement}\lor\field{F}_\Theta\big)(\theta,x)
		}{
			\ee^{-E_{B|B^\complement}(\theta,w\lor v\lor x_{B^\complement})}
		}
	\end{align}
	for $\mu$-almost every $(\theta,x)\in[w]$ such that $w\lor u,w\lor v\in L_B(X_\theta)$.
	If we apply the chain rule to the numerators above and decompose the exponents in the denominators,
	and then cancel the common factor
	\begin{align}
		\frac{
			\mu\big([w]\,\big|\,\xi^{B^\complement}\lor\field{F}_\Theta\big)(\theta,x)
		}{
			\ee^{-E_{(B\setminus A)|B^\complement}(\theta,w\lor x_{B^\complement\cup A})}
		},
	\end{align}
	then the resulting expression simplifies to
	\begin{align}
		\frac{
			\mu\big([u]\,\big|\,\xi^{A^\complement}\lor\field{F}_\Theta\big)(\theta,x)
		}{
			\ee^{-E_{A|A^\complement}(\theta,u\lor x_{A^\complement})}
		}
		&=
		\frac{
			\mu\big([v]\,\big|\,\xi^{A^\complement}\lor\field{F}_\Theta\big)(\theta,x)
		}{
			\ee^{-E_{A|A^\complement}(\theta,v\lor x_{A^\complement})}
		}
	\end{align}
	for $\mu$-almost every $(\theta,x)\in[w]$ such that
	$x_{A^\complement}\lor u,x_{A^\complement}\lor v\in X_\theta$.
	This is true for every $w\in\Sigma^{B\setminus A}$.
	The latter equality is equivalent to $\mu$ being a relative Gibbs measure.
\end{proof}

Considering the fact that in the proof of Theorem~\ref{thm:Tom:relative} we
only applied the relative Lanford--Ruelle theorem on a relatively independent system,
and that the relative Lanford--Ruelle theorem can be deduced from
Theorem~\ref{thm:Tom:relative} as shown above, we obtain that the following
three statements are essentially equivalent in the relative setting:
\begin{itemize}
	\item
		The relative Lanford--Ruelle theorem for systems which satisfy relative independence.
	\item
		The relative Lanford--Ruelle theorem for systems satisfying the relative TMP. %
	\item
		The relative version of Meyerovitch's theorem.
\end{itemize}

If we restrict exclusively to the non-relative setting, the Lanford--Ruelle
theorem for subshifts with TMP (or even for SFTs)
does not follow from the Lanford--Ruelle theorem for full shifts. Similarly,
Meyerovitch's theorem cannot be deduced from Lanford--Ruelle through
a simple recoding. The addition of an environment in the relative setting
can be used as a tool to fix a measure on a restricted portion of a dynamical
system and give information about measures which project to that portion
and are optimal outside of it. Hence, the three statements become equally
powerful in this setting. We see this as an indication that the relative setting
is the appropriate level of generalization for these results.

\bibliographystyle{plainurl}
\bibliography{bibliography}

\appendix

\section{Appendix}

\subsection{Topology of $\xspace{P}_\nu(\Omega)$}
\label{apx:measures:topology}

Let $\nu$ be a probability measure on $\Theta$.
Every measure $\mu\in\xspace{P}_\nu(\Omega)$ defines
a positive linear functional $J$ on the Banach space $C_\Theta(\Omega)$.
Every such functional is \emph{$\nu$-normalized} meaning that
$J(\indicator{A\times\Sigma^\GG})=\nu(A)$ for every measurable $A\subseteq\Theta$.
When $\Theta$ is a standard Borel space
(i.e., isomorphic, as a measurable space, to a Borel subset of a complete separable metric space),
the converse is also true.

\begin{proposition}[Relative Riesz theorem]
\label{prop:Riesz:relative}
	Assume that $\Theta$ is a standard Borel space.  %
	Then, for every $\nu$-normalized positive linear functional $J$ on $C_\Theta(\Omega)$
	there corresponds a unique measure $\mu\in\xspace{P}_\nu(\Omega)$
	such that $\mu(f)=J(f)$ for all $f\in C_\Theta(\Omega)$.
\end{proposition}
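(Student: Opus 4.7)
The plan is to build $\mu$ as a disintegration $\mu=\int(\delta_\theta\otimes\mu_\theta)\,\dd\nu(\theta)$, where each $\mu_\theta$ is a probability measure on $\Sigma^\GG$ supported on $X_\theta$, and to read the cylinder probabilities of $\mu_\theta$ off of $J$. Uniqueness is the easier half: for any bounded measurable $h\colon\Theta\to\RR$ and any cylinder $[q]$, the function $h\cdot\indicator{[q]}$ belongs to $C_\Theta(\Omega)$ (for each $\theta$ its restriction to $X_\theta$ is locally constant, and the equicontinuity condition is trivial). The sets of the form $(B\times\Sigma^\GG)\cap[q]\cap\Omega$ with $B$ measurable form a $\pi$-system generating $\field{F}_\Omega$, so any two $\mu,\mu'\in\xspace{P}_\nu(\Omega)$ giving the same functional on $C_\Theta(\Omega)$ must agree by the $\pi$-$\lambda$ theorem.

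For existence, for each $A\Subset\GG$ and $q\in\Sigma^A$ positivity of $J$ and $\nu$-normalization together give
\begin{equation}
0 \;\leq\; J(\indicator{B\times\Sigma^\GG}\cdot\indicator{[q]}) \;\leq\; J(\indicator{B\times\Sigma^\GG}) \;=\; \nu(B)
\end{equation}
for every measurable $B\subseteq\Theta$, so the set function $\nu_q(B)\isdef J(\indicator{B\times\Sigma^\GG}\cdot\indicator{[q]})$ is a countably additive positive measure dominated by $\nu$ (the squeeze controls continuity from below). By Radon--Nikodym --- which applies because $\Theta$ is standard Borel and $\nu$ is $\sigma$-finite --- there is a density $\rho_q\isdef\dd\nu_q/\dd\nu\in[0,1]$. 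Since $\GG$ is countable, the family $\{\rho_q\}$ is countable, so outside a single $\nu$-null set $N\subseteq\Theta$ it satisfies simultaneously: $\rho_\varnothing(\theta)=1$; the consistency $\sum_{q\in\Sigma^A,\,q|_B=p}\rho_q(\theta)=\rho_p(\theta)$ for $B\subseteq A\Subset\GG$ and $p\in\Sigma^B$; and $\rho_q(\theta)=0$ whenever $q\notin L_A(X_\theta)$, since in that case $\indicator{\{\theta\}\times\Sigma^\GG}\cdot\indicator{[q]}$ vanishes on~$\Omega$ and so $\nu_q$ is carried by $\{\theta:q\in L_A(X_\theta)\}$.

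For each $\theta\in\Theta\setminus N$, Kolmogorov's extension theorem applied to $(\rho_q(\theta))_q$ yields a probability measure $\mu_\theta$ on $\Sigma^\GG$ with $\mu_\theta([q])=\rho_q(\theta)$, automatically supported on $X_\theta$. The map $\theta\mapsto\mu_\theta$ is measurable for the evaluation against each cylinder, so $\mu\isdef\int(\delta_\theta\otimes\mu_\theta)\,\dd\nu(\theta)$ is a well-defined element of $\xspace{P}_\nu(\Omega)$. By construction $\mu(h\cdot\indicator{[q]})=J(h\cdot\indicator{[q]})$ for all bounded measurable $h$; linearity extends this to all bounded relatively local observables, and then uniform density of the latter in $C_\Theta(\Omega)$ together with continuity of both $\mu(\cdot)$ and $J$ in the uniform norm yields $\mu(f)=J(f)$ for every $f\in C_\Theta(\Omega)$.

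The main obstacle is the measurable bookkeeping rather than any conceptual difficulty: one must produce the densities $\rho_q$ and verify the compatibility and support conditions pointwise on a single $\nu$-conull set, and then assemble the fibers $\mu_\theta$ into an honest measure on $\Omega$ whose total mass is $1$. This is precisely where the standard Borel hypothesis on $\Theta$ intervenes, both through the Radon--Nikodym theorem in the form we use and through the measurability of the fiberwise disintegration; without it, neither the densities nor a measurable selection of fiber measures is guaranteed to exist in a compatible way.
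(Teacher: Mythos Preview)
Your argument is correct and takes a genuinely different route from the paper's. The paper first realizes $\Theta$ as a compact metric space (this is where it spends the standard Borel hypothesis), so that the space $C(\Omega)$ of continuous functions sits inside $C_\Theta(\Omega)$; it then applies the classical Riesz representation theorem to $J|_{C(\Omega)}$ to obtain $\mu$, and uses Lusin's theorem to show $\mu(h\cdot\indicator{[q]})=J(h\cdot\indicator{[q]})$ for every bounded measurable $h$ on $\Theta$, which extends to all of $C_\Theta(\Omega)$ by density. Your approach instead builds the disintegration $\mu=\int(\delta_\theta\otimes\mu_\theta)\,\dd\nu(\theta)$ by hand, reading the cylinder masses $\rho_q(\theta)$ of each fiber from Radon--Nikodym densities and assembling the fibers via Kolmogorov extension. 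The paper's route is shorter because it outsources the work to Riesz and Lusin; yours is more explicit about the fiber structure and never puts a topology on $\Theta$.

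Two comments on your final paragraph. First, Radon--Nikodym requires only that $\nu$ be $\sigma$-finite, and the measurability of $\theta\mapsto\mu_\theta$ is automatic from the measurability of each $\rho_q$; neither step uses standard Borel. Second, the step that actually needs a word is the claim that $\nu_q$ is carried by $\{\theta:q\in L_A(X_\theta)\}$: your pointwise observation does not by itself give $\nu_q$ of the complement equal to zero --- to conclude that, you must substitute a \emph{measurable} set $B$ into $\nu_q(B)=J(\indicator{B}\indicator{[q]})$. The set $\{\theta:q\in L_A(X_\theta)\}$ is the projection onto $\Theta$ of $\Omega\cap[q]$; since $\Sigma^\GG$ is Polish, the measurable projection theorem makes this $\nu$-measurable after completion, and then your argument goes through. (In particular, it appears your route may not need $\Theta$ to be standard Borel at all.)
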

\begin{proof}
	Without loss of generality (by passing through an isomorphism),
	we can assume that $\Theta$ is a compact metric space equipped with its Borel $\sigma$-algebra
	(see e.g. \cite[Thm.~13.1.1]{Dud04}).
	Then the set $C(\Omega)$ of all continuous functions on $\Omega$
	is a Banach subspace of $C_\Theta(\Omega)$.
	By the Riesz representation theorem, the restriction of $J$ to $C(\Omega)$
	identifies a unique probability measure
	$\mu$ on $\Omega$ such that $\mu(f)=J(f)$ for all $f\in C(\Omega)$.
	
	Let $g\colon\Theta\to\RR$ be a bounded measurable function and $[u]$ a cylinder set.
	Consider $f(\theta,x)\isdef g(\theta)\indicator{[u]}(\theta,x)$.
	Then $f$ is relatively local.  Furthermore, every relatively local function on $\Omega$
	is a linear combination of functions of this form.
	Since the relatively local functions are dense in $C_\Theta(\Omega)$
	and both $J$ and $\mu$ are continuous on $C_\Theta(\Omega)$,
	it is enough to verify that $\mu(f)=J(f)$.
	
	Let $\varepsilon>0$.  By Lusin's theorem, there exists a function $g_\varepsilon\in C(\Theta)$
	and a closed set $E\subseteq\Theta$ such that $g=g_\varepsilon$ on $E$ and
	$\nu(\Theta\setminus E)<\varepsilon$ and $\mu\big((\Theta\setminus E)\times\Sigma^\GG\big)<\varepsilon$.
	Furthermore, we can choose $g_\varepsilon$ such that $\norm{g_\varepsilon}\leq\norm{g}$.
	Define
	$f_\varepsilon(\theta,x)\isdef g_\varepsilon(\theta)\indicator{[u]}(\theta,x)$.
	Since $f_\varepsilon\in C(\Omega)$, we have $\mu(f_\varepsilon)=J(f_\varepsilon)$.
	Note that
	\begin{align}
		\abs{\mu(f_\varepsilon)-\mu(f)} &\leq \mu(\abs{f-f_\varepsilon})
			\leq (\norm{f}+\norm{f_\varepsilon})\mu\big((\Theta\setminus E)\times\Sigma^\GG\big)
			<2\norm{f}\varepsilon \;.
	\intertext{Similarly, since $J$ is positive linear, we have}
		\abs{J(f_\varepsilon)-J(f)} &\leq J(\abs{f-f_\varepsilon})
			\leq (\norm{f}+\norm{f_\varepsilon})\nu(\Theta\setminus E)
			< 2\norm{f}\varepsilon \;.
	\end{align}
	Therefore, $\abs{\mu(f)-J(f)}<4\norm{f}\varepsilon$.
	Since $\varepsilon$ is arbitrary, the claim follows.
\end{proof}

A consequence of the above proposition is that when $\Theta$ is a standard Borel space,
the space $\xspace{P}_\nu(\Omega)$ is compact.
Indeed, as a set of linear functionals, $\xspace{P}_\nu(\Omega)$
is a closed subset of the unit ball in the dual space $C^*_\Theta(\Omega)$,
thus the compactness follows from Alaoglu's theorem.
We do not know whether the assumption that $\Theta$ is standard Borel
is necessary for the compactness of $\xspace{P}_\nu(\Omega)$.

\subsection{Omitted arguments}
\subsubsection{Verification of~(\ref{eq:Hamiltonian:residue:Folner})}
\label{sec:Hamiltonian:residue:Folner:argument}

Let $B\Subset\GG$ be a finite set and define
$\partial^-_B F_n\isdef\{g\in F_n: gB\cap F_n^\complement\neq\varnothing\} =
F_n\setminus\bigcap_{b\in B}F_n b^{-1}$.
We have
\begin{align}
	\sum_{\substack{C \Subset\GG \\
		C\cap F_n\neq\varnothing\\ C\cap F_n^\complement\neq\varnothing}}
		\norm{\Phi_C}
	&=
		\sum_{\substack{C \Subset \GG \\
		C\cap\partial^-_B F_n\neq\varnothing\\ C\cap F_n^\complement\neq\varnothing}}
		\norm{\Phi_C} +
		\sum_{\substack{C \Subset \GG \\
		C\cap F_n\neq\varnothing, C\cap F_n^\complement\neq\varnothing\\
		C\cap\partial^-_B F_n=\varnothing}}
		\norm{\Phi_C} \\
	&\leq
		\abs{\partial^-_B F_n}\norm{\Phi} +
		\underbrace{\abs{F_n\setminus\partial^-_B F_n}}_{\leq\abs{F_n}}
			\sum_{\substack{C \Subset \GG \\
				C\ni 1_\GG, C\not\subseteq B}}
				\norm{\Phi_C} \;.
\end{align}
The first term is $\smallo(\abs{F_n})$, whereas the second term
is of the form $c_B\abs{F_n}$ where $c_B\to 0$ as $B\nearrow\GG$
along the finite subsets of $\GG$ directed by inclusion.
\qed

\subsubsection{Verification of~(\ref{eq:Hamiltonian:different-sets:residue})}
\label{sec:Hamiltonian:different-sets:residue:argument}
We have
{
\belowdisplayskip=0pt
\begin{align}
	\norm{E_{B|B^\complement}-E_{A|A^\complement}} &=
		\Biggnorm{
			\sum_{\substack{C\Subset\GG\\ C\cap B\neq\varnothing}}\Phi_C -
			\sum_{\substack{C\Subset\GG\\ C\cap A\neq\varnothing}}\Phi_C
		} \\
	&\leq
		\sum_{\substack{C\Subset\GG\\C\cap B\neq\varnothing\\ C\cap A=\varnothing}}\norm{\Phi_C} \\
	&\leq
		\sum_{\substack{C\Subset\GG\\ C\cap(B\setminus A) \neq\varnothing }}\norm{\Phi_C} \\
	&\leq
		\sum_{c\in(B\setminus A)}\sum_{C\ni c}\norm{\Phi_C} \\
	&=
		\abs{B\setminus A}\norm{\Phi} \;.
\end{align}
}
\qed

\subsubsection{Verification of~(\ref{eq:interaction-observable:equivalence:1})}
\label{sec:interaction-observable:equivalence}
Using the definition of $f_\Phi$, for every finite set $A\Subset\GG$, we have
\begin{align}
	\Bigabs{E_A(\theta,x) - \sum_{g\in A}f_\Phi(g^{-1}\theta,g^{-1}x)} &\leq
		\sum_{\substack{C\Subset\GG \\
			C\cap A\neq\varnothing\\ C\cap A^\complement\neq\varnothing}}
			\frac{\abs{A\cap C}}{\abs{C}}\norm{\Phi_C} \;.
\end{align}
For $A\isdef F_n$,
the estimate~\eqref{eq:interaction-observable:equivalence:1}
follows as in~\eqref{eq:Hamiltonian:residue:Folner}
(see Sec.~\S\ref{sec:Hamiltonian:residue:Folner:argument}).
\qed

\subsubsection{Verification of~(\ref{eq:variational-principle:proof:ineqA})
	and~(\ref{eq:variational-principle:proof:ineqB})}
\label{sec:variational-principle:proof:ineqAnB}
Inequality~\eqref{eq:variational-principle:proof:ineqA} follows by writing
\begin{align}
	Z_{F_n|F_n^\complement}(\theta,x) &=
		\sum_{\substack{u\in\Sigma^{F_n}\\ \mathclap{u\lor x_{F_n^\complement}\in X_\theta}}}
		\ee^{-E_{F_n|F_n^\complement}(\theta,u\lor x_{F_n^\complement})} 
	\leq
		\sum_{u\in L_{F_n}(X_\theta)}
		\ee^{-E_{F_n}(\theta,u)+\smallo(\abs{F_n})}
	=
		Z_{F_n}(\theta)\; \ee^{\smallo(\abs{F_n})} \;.
\end{align}
In order to verify~\eqref{eq:variational-principle:proof:ineqB},
let us use the shorthand $\partial F_n^\theta\isdef F_n^\theta\setminus F_n$.
We can write
\begin{align}
	Z_{F_n^\theta|(F_n^\theta)^\complement}(\theta,x) &=
		\mathop{%
			\sum_{u\in\Sigma^{F_n}}
			\sum_{\partial u\in\Sigma^{\partial F_n^\theta}}
		}_{u\lor\partial u\lor x_{(F_n^\theta)^\complement}\in X_\theta}
		\ee^{
			-E_{F_n}(\theta,u)
			-E_{\partial F_n^\theta|(\partial F_n^\theta)^\complement}(\theta,u\lor\partial u\lor x_{(F_n^\theta)^\complement})
		} \\
	&=
		\sum_{u\in L_{F_n}(X_\theta)}\ee^{-E_{F_n}(\theta,u)}
		\sum_{\substack{\partial u\in\Sigma^{\partial F_n^\theta}\\
			\mathclap{u\lor\partial u\lor x_{(F_n^\theta)^\complement}\in X_\theta}}}
			\ee^{-E_{\partial F_n^\theta|(\partial F_n^\theta)^\complement}(\theta,u\lor\partial u\lor x_{(F_n^\theta)^\complement})} \;.
\end{align}
Now observe that, since $F_n^\theta$ is a mixing set for $F_n$,
the second sum in the latter inequality is non-empty.
It follows that
\useshortskip
{
\belowdisplayskip=0pt
\begin{align}
	Z_{F_n^\theta|(F_n^\theta)^\complement}(\theta,x) &\geq
		\smash{\sum_{u\in L_{F_n}(X_\theta)}}\ee^{-E_{F_n}(\theta,u)}
			\ee^{-\abs{\smash{\partial F_n^\theta}}\norm{\Phi}}
	=
		Z_{F_n}(\theta)\; \ee^{-\abs{\smash{\partial F_n^\theta}}\norm{\Phi}} \;.
\end{align}
}
\qed

\subsubsection{Verification of~(\ref{eq:martingale-convergence:uniform:relative:conditioning})}
\label{sec:martingale-convergence:uniform:relative:conditioning:argument}
The right-hand side is $(\xi^B\lor\field{F}_\Theta)$-measurable
and for every $[u]\in\xi^B$ and $W\in\field{F}_\Theta$ we have
\begin{align}
	\int_{[u]\cap W}
		\frac{
			\mu(\indicator{[x_B]}f\,|\,\field{F}_\Theta)(\theta,x)
		}{
			\mu([x_B]\,|\,\field{F}_\Theta)(\theta,x)
		} \,\dd\mu(\theta,x)
	&=
		\mu\bigg(
			\indicator{[u]}\indicator{W}
			\frac{
				\mu(\indicator{[u]}f\,|\,\field{F}_\Theta)
			}{
				\mu([u]\,|\,\field{F}_\Theta)
			}
		\bigg) \\
	&=
		\mu\bigg(
			\indicator{[u]}
			\frac{
				\mu(\indicator{[u]}\indicator{W}f\,|\,\field{F}_\Theta)
			}{
				\mu([u]\,|\,\field{F}_\Theta)
			}
		\bigg) \\
	&=
		\mu\bigg(
		\mu\bigg(
			\indicator{[u]}
			\frac{
				\mu(\indicator{[u]}\indicator{W}f\,|\,\field{F}_\Theta)
			}{
				\mu([u]\,|\,\field{F}_\Theta)
			}
		\,\bigg|\,
			\field{F}_\Theta
		\bigg)
		\bigg) \\
	&=
		\mu\bigg(
			\frac{
				\mu(\indicator{[u]}\indicator{W}f\,|\,\field{F}_\Theta)
			}{
				\cancel{\mu([u]\,|\,\field{F}_\Theta)}
			}
			\cancel{\mu(\indicator{[u]}\,|\,\field{F}_\Theta)}
		\bigg) \\
	&=
		\int_{[u]\cap W}f\,\dd\mu
\end{align}
If two bounded measurable functions have equal integrals over
each element of a generating semi-algebra, they are almost surely equal.
\qed

\subsubsection{Verification of~(\ref{eqn:hamiltoniano_tom})}
\label{sec:hamiltoniano_tom:argument}
For every $(\theta,x)\in\Omega$, we have
\begin{align}
	f_{\widehat{\Phi}}(\phi(\theta,x)) &=
		\sum_{B\ni 1_\GG} \frac{1}{\abs{B}} \widehat{\Phi}(\phi(\theta,x)) \\
	&=
		\sum_{B\ni 1_\GG} \frac{1}{\abs{B}} \sum_{C: C\cdot A^{-1}=B} \Phi_C(\theta,x) \\
	&=
		\sum_{C: C\cdot A^{-1}\ni 1_\GG} \frac{1}{\abs{C\cdot A^{-1}}} \Phi_C(\theta,x) \;.
\end{align}
Integrating with respect to a measure $\mu$, we get
\begin{align}
\label{eq:hamiltoniano_tom:argument:expression:1}
	\widehat{\mu}(f_{\widehat{\Phi}}) &=
		\sum_{C: C\cdot A^{-1}\ni 1_\GG} \frac{1}{\abs{C\cdot A^{-1}}}\;\mu(\Phi_C)	\;.	
\intertext{Compare this with the expression}
\label{eq:hamiltoniano_tom:argument:expression:2}
	\mu(f_\Phi) &=
		\sum_{C\ni 1_\GG} \frac{1}{\abs{C}}\;\mu(\Phi_C)	\;,
\end{align}
and observe that when $\mu$ is $\GG$-invariant,
the right-hand sides of~\eqref{eq:hamiltoniano_tom:argument:expression:1}
and~\eqref{eq:hamiltoniano_tom:argument:expression:2} coincide.
\qed

\end{document}